\pgfplotsset{compat=1.6}
\theoremstyle{plain}
\newtheorem{theorem}{Theorem}[section]
\newtheorem{lemma}[theorem]{Lemma}
\newtheorem{corollary}[theorem]{Corollary}
\theoremstyle{definition}
\newtheorem{definition}[theorem]{Definition}
\theoremstyle{remark}
\newtheorem{remark}{Remark}
\def\paragraph#1{\noindent \textbf{#1}}
\numberwithin{equation}{section}
\def\<{\langle}
\def\>{\rangle}
\def\e{\e}
\def\f{\phi}
\def\R{{\Bbb R}}  %%
\def\N{{\Bbb N}}  %
\let\cal=\mathcal
\def\MM{{\cal M}}
 \def \e {{\epsilon}}
 \def \x {{\xi}}
 \def \f {{\phi}}
 \newcommand{\be}{\begin{equation}}
 \newcommand{\ee}{\end{equation}}
\newcommand{\bea}{\begin{eqnarray}}
 \newcommand{\eea}{\end{eqnarray}}
\def\TH(#1){\label{#1}}\def\thv(#1){\ref{#1}}
\def\Eq(#1){\label{#1}}\def\eqv(#1){(\ref{#1})}
 \def \1{\mathbbm{1}}
\def\Hn{\mathbb{H}^n}
\def\eps{\varepsilon}
\def\x{\textbf{x}}
\def\y{\textbf{y}}
\def\ee{\mathrm{e}}
\DeclareMathOperator*{\argmin}{arg\,min}
\DeclareMathOperator*{\argmax}{arg\,max}
\newcommand{\norm}[1]{\left\lVert#1\right\rVert}
\renewcommand{\clearpage}{}
\begin{document}
%\pagenumbering{arabic}
 \title[From Adaptive Dynamics to Adaptive Walks]{From Adaptive Dynamics to Adaptive Walks}
  
\author[]{Anna Kraut, Anton Bovier}
\address{A. Kraut\\ Institut f\"ur Angewandte Mathematik\\
Rheinische Friedrich-Wilhelms-Universit\"at\\ Endenicher Allee 60\\ 53115 Bonn, Germany}
\email{kraut@iam.uni-bonn.de}
\address{A. Bovier\\ Institut f\"ur Angewandte Mathematik\\
Rheinische Friedrich-Wilhelms-Universit\"at\\ Endenicher Allee 60\\ 53115 Bonn, Germany}
\email{bovier@uni-bonn.de}

\keywords{adaptive dynamics, adaptive walks, individual-based models, competitive Lotka-Volterra systems with mutation}

\subjclass[2010]{37N25, 60J27, 92D15, 92D25} 

\thanks{The research in this paper is partially supported by the German Research Foundation in the Priority Programme  1590 ``Probabilistic Structures in Evolution'' and  
 under Germany's Excellence Strategy -- EXC2151 -- 390873048 and EXC 2047.}
%%%%%%%%%%%%%%%%%%%%%%%%%%%%%%%%%%%%%%%%%%%%%%%%%%%%%%%%%%%%%%%%%%%%%%%%%%%%%%%%%%%%%%%%%%%%%
%%Abstract
%%%%%%%%%%%%%%%%%%%%%%%%%%%%%%%%%%%%%%%%%%%%%%%%%%%%%%%%%%%%%%%%%%%%%%%%%%%%%%%%%%%%%%%%%%%%%
\begin{abstract}
We consider an asexually reproducing population on a finite type space whose evolution is driven by exponential birth, death and competition rates, as well as the possibility of mutation at a birth event. On the individual-based level this population can be modelled as a measure-valued Markov process. Multiple variations of this system have been studied in the simultaneous limit of large populations and rare mutations, where the regime is chosen such that mutations are separated.
We consider the deterministic system, resulting from the large population limit, and then let the mutation probability tend to zero. This corresponds to a much higher frequency of mutations, where multiple microscopic types are present at the same time. The limiting process resembles an adaptive walk or flight and jumps between different equilibria of coexisting types.
The graph structure on the type space, determined by the possibilities to mutate, plays an important role in defining this jump process.
In a variation of the above model, where the radius in which mutants can be spread is limited, we study the possibility of crossing valleys in the fitness landscape and derive different kinds of limiting walks.
\end{abstract}

\maketitle

%%%%%%%%%%%%%%%%%%%%%%%%%%%%%%%%%%%%%%%%%%%%%%%%%%%%%%%%%%%%%%%%%%%%%%%%%%%%%%%%%%%%%%%%%%%%%
%%Introduction
%%%%%%%%%%%%%%%%%%%%%%%%%%%%%%%%%%%%%%%%%%%%%%%%%%%%%%%%%%%%%%%%%%%%%%%%%%%%%%%%%%%%%%%%%%%%%
\section{Introduction}

The concept of \textit{adaptive dynamics} is a heuristic biological theory for the evolution of a population made up of different types that has been developed in the 1990s, see  \cite{MeGeMe96,DiLa96,BoPa97,BoPa99,DiLa00}. It assumes asexual, clonal reproduction with the possibility of mutation. These mutations are rare and new types can initially be neglected, but selection acts fast and the population is assumed to always be at equilibrium. This implies a separation of the fast ecological and slow evolutionary time scale. Fixation or extinction of a mutant are determined by its \textit{invasion fitness} that describes its exponential growth rate in a population at equilibrium. This notion of fitness is dependent on the current resident population and therefore changes over time.
The equilibria do not need to be monomorphic and allow for coexistence and evolutionary branching. Eventually, so-called \textit{evolutionary stable states} can be reached, where all possible mutants have negative invasion fitness and therefore the state of the population is final.

A special case of adaptive dynamics are so-called \textit{adaptive walks} or \textit{adaptive flights}. The concept of adaptive walks was introduced by \cite{MaySmith62,MaySmith70} and further developed by \cite{KaLe87,Kau92,Orr03}. Here, evolution is modelled as a random walk on the type space that moves towards higher fitness as the population adapts to its environment. More precisely, a discrete state space is equipped with a graph structure that marks the possibility of mutation between neighbours. A fixed, but possibly random, fitness landscape is imposed on the type space. In contrast to the above, this \textit{individual fitness} is not dependent on the current state of the population. Adaptive walks move along neighbours of increasing fitness, according to some transition law, towards a local or global optimum. Adaptive flights, a term that has been introduced by \cite{KrugNeid2011}, can take larger steps and jump between local fitness maxima to eventually attain a global maximum.
Quantities of interest are, among others, the typical length of an adaptive walk before reaching a local fitness maximum and the distribution of maxima, see \cite{NoKr15}, as well as the number of accessible paths, see \cite{SchKr14,BeBruShi16,BeBruShi17}. They have been studied under various assumptions on the correlations of the fitness landscape and the transition law of the walk. Examples, mentioned by \cite{NoKr15}, are the \textit{natural adaptive walk}, where the transition probabilities are proportional to the increase in fitness, or the \textit{greedy adaptive walk}, which always jumps to the fittest available neighbour.

Over the last years, stochastic individual-based models have been introduced to study different aspects 
of evolution. They start out with a model that considers a collection of individuals. Each individual is 
characterised by a type, for example its genotype. The population evolves in time under the mechanisms of birth, death, and mutation, where the parameters depend on the types. The population size is not fixed but the resources of the environment, represented by the \textit{carrying capacity} $K$, are limited. This results in a competitive interaction between the individuals, which limits the population size to the order of $K$. The dynamics are modelled as a continuous time Markov process, as shown by \cite{FoMe04}. It is of particular interest to study the convergence of this process in the limits of large populations, rare mutations, and small mutation steps.

For a finite type space, \cite{EtKu86} have shown  that, rescaling the population by $K$, the process converges to the deterministic solution of a system of differential equations in the limit of large populations, i.e.\ as $K$ tends to infinity. The differential equations are of Lotka-Volterra type with additional terms for the effects of mutation. This result was generalised for types in $\R^d$ by \cite{FoMe04}. For finite times, in the limit of rare mutations, this deterministic system converges to the corresponding mutation-free Lotka-Volterra system. Under certain conditions, these converge in time to unique equilibrium configurations, see \cite{HofSig98,ChJaRa10}.

\cite{Cha06,ChFeMe08,ChMe11} and others have considered the simultaneous limit of large populations and rare mutations . Here, the mutation probability $\mu_K$ tends to zero as $K$ tends to infinity. They make strong assumptions on the scaling of $\mu_K$, where only very small mutation probabilities $\mu_K\ll 1/(K\log K)$ are considered. This ensures the separation of different mutation events. With high probability, a mutant either dies out or fixates in the resident population before the next mutation occurs. To balance the rare mutations, time is rescaled by $1/(K\mu_K)$, which corresponds to the average time until a mutation occurs. The limiting process is a Markov jump process called \textit{trait substitution sequence} (TSS) or \textit{polymorphic evolution sequence} (PES), depending on whether the population stays monomorphic or branches into several coexisting types. In the framework of adaptive walks, these sequences correspond to the natural walk, mentioned above.
%Taking the limit of small mutational steps in the TSS, the \textit{canonical equation of adaptive dynamics}, introduced in \cite{DiLa96}, is recovered \cite{ChMe11}. Only recently, Baar et al were able to show that, taking the right scaling for the mutational effects, this convergence can also be derived in one simultaneous limit \cite{BaBoCh17}.
%Moreover, many variations of the original individual-based model have been consired, including fast phenotypic switches and spatial aspects.

Similar convergence results have been shown for many variations of the original individual-based model under the same scaling, including small mutational effects, fast phenotypic switches, spatial aspects, and also diploid organisms, see, e.g. 
\cite{BaBoCh17,BaBo18,ChMe07,Tran08,Lem16,CoMeMe13,NeuBo17,BoCoNeu18}. 

The drawback of all these results is the strong assumption on the mutation rate. The separation of mutations which results in small mutational effects and slow evolution has been criticised by  \cite{BaPo05}. We therefore consider a scenario where the mutation rate is much higher, although decreasing, and the mutation events are no longer separated. This allows for several mutations to accumulate before a new type fully invades the population. To study the extreme case, as first done  
by \cite{BoWa13} and recently by \cite{BoCoSm18}, we consider the two limits
 separately. We take the deterministic model, arising from limit of large populations, and let the mutation
  rate $\mu$ tend to zero while rescaling the time by $\ln1/\mu$. This corresponds to the time that a 
  mutant takes to reach a macroscopic population size of order $1$, rather than the time until a mutant 
  appears, as before. The time that the system takes to re-equilibrate is negligible on the chosen time 
  scale and hence the resulting limit is a jump process between metastable equilibrium states. 

We consider a finite type space with a graph structure representing the possibility of mutation. First, we prove that, under certain assumptions, the deterministic model converges pointwise to a deterministic jump process in the rare mutation limit. This process jumps between Lotka-Volterra equilibria of the current macroscopic types. For a (possibly polymorphic) resident population, we have to carefully track the growth of the different microscopic mutants that compete to invade the population. The first mutant to reach a macroscopically visible population size solves an optimisation problem and balances high invasion fitness and large initial conditions, where the latter is determined by the graph distance to the resident types. The limiting process can be fully described by its jump times and jump chain, which are closely related to this optimisation problem. It can make arbitrarily large jumps and may reach an evolutionary stable state.

Second, we show how we can derive different limiting processes by changing the parameters of the system. On one hand, assuming equal competition between all individuals and monomorphic initial conditions, the description of the jump process can be simplified. In this case, the invasion fitness of a type is just the difference between its own individual fitness, defined by its birth and death rate, and that of the resident type. Hence, we can relate back to the classical notion of fixed fitness landscapes in the context of adaptive walks. The limiting process resembles an adaptive flight since it always jumps to types of higher individual fitness, eventually reaching a global fitness maximum. A similar scenario was studied in the context of adaptive walks and flights by \cite{KarlKrug03,JainKrug05,JainKrug07,Jain07}. Here the fitness is also assumed to be fixed but time steps are discrete. As in our case, the transitions between macroscopic types are determined by balancing high initial conditions, depending on the distance in the type space, and high fitness.
%In contrast to these results, our limiting jump process is not restricted to local fitness maxima.

On the other hand, we modify the deterministic system such that the subpopulations can only reproduce when their size lies above a certain threshold. This limits the radius in which a resident population can foster mutants. A threshold of $\mu^\ell$ mimics the scaling of $\mu_K\approx K^{-1/\ell}$ in the simultaneous limit, where resident types can produce mutants in a radius of $\ell$. \cite{BoCoSm18} and Champagnat, M\'el\'eard, and Tran (preprint 2019) recently studied this scaling for the type space of a discrete line. A similar scaling has also been applied to a Moran-type model by \cite{DuMay11} and an adaptive walk-type model with restricted mutation radius has been studied by \cite{JainKrug07}. The resulting limit processes of the modified deterministic system are similar to the previously mentioned greedy adaptive walk. However, they are not all restricted to jumping to direct neighbours only, and thus can cross valleys in the fitness landscape and reach a global fitness maximum. Only when we choose the extreme case $\ell=1$, the resulting limit is exactly the greedy adaptive walk.

The remainder of this paper is organised as follows.
In Section 2, we introduce the deterministic system and the corresponding mutation free Lotka-Volterra system and present the main theorems, stating the convergence to different jump processes in the limit of rare mutation for different scenarios. We relate the deterministic system to the individual-based stochastic model and present a modification that mimics the simultaneous limit of large populations and rare, but still overlapping, mutations. Moreover, we give a short outline of the strategy of the proofs.
Section 3 and 4 are devoted to the proof of the first convergence result. The proof is split into three parts. The analysis of the exponential growth phase of the mutants, which follows ideas from \cite{BoWa13}, is given in Section 3. The following Lotka-Volterra invasion phase has been studied in detail by \cite{ChJaRa10}.
In Section 4, we show how to combine the two phases to prove the main result.
Next, in Section 5, we consider the special case of equal competition, where we can simplify the description of the limiting jump process. Since the assumptions of the result from \cite{ChJaRa10} are no longer satisfied, we have to slightly change the proof.
In Section 6, we finally present an extension of the original deterministic system, where we limit the range of mutation to mimic the scaling of $\mu_K\approx K^{-1/\ell}$ in the simultaneous 
limit. In the extreme case, where only resident types can foster mutants, the greedy adaptive walk arises in the limit. For the intermediate cases, we present some first results on accessibility of types.

%%%%%%%%%%%%%%%%%%%%%%%%%%%%%%%%%%%%%%%%%%%%%%%%%%%%%%%%%%%%%%%%%%%%%%%%%%%%%%%%%%%%%%%%%%%%%%%%%%%%%%%%%%%%%%%%%%%%%%%%%%%%%%%%%%%%%%%%%%%%%%%%%%%%%%%%%%%%%%%%%%%%%%%%%%%%%%%%%%%%%%%%

\section{Model introduction and main results}

In this section we introduce the deterministic model for evolution that is the focus of our studies. Similar models have been studied by \cite{HofSig98}, who give an extensive overview of models of population dynamics in their book. We present the main result of convergence of this deterministic process in the limit of rare mutations on a divergent time scale. The limiting process is a deterministic jump process that jumps between Lotka-Volterra equilibria, involving different types. In the special case of equal competition, we derive a simplified description of this limiting process. Moreover, we relate the model to the stochastic individual-based model introduced by \cite{FoMe04} and present a modification of the deterministic system that mimics the simultaneous limit of large populations and rare, but not too rare, mutations. In the case where only neighbouring types of the current resident type can arise as mutants, the limiting object is a true adaptive walk. At the end of the section we outline the proofs that are given in the following sections.

\subsection{The deterministic system and relations to Lotka-Volterra systems}

The model we consider is a classical Lotka-Volterra system with additional mutation terms. We consider a population consisting of subpopulations that are characterised by their types (e.g.\ geno- or phenotypes). In this paper we choose the $n$-dimensional hypercube $\Hn:=\{0,1\}^n$ as our \textit{type space}. The sequences of ones and zeros can, for example, be interpreted as sequences of loci with different alleles. The type $(0,...,0)$ can be seen as the wildtype while all other types have accumulated mutations on some loci. However, we will not assume to always start out with a monomorphic population of this type.

The choice of $\Hn$ can easily be generalised to any finite set. We comment on this in Section \ref{1stThm}.

The state of the system is described by $\xi^\mu_t=(\xi^\mu_t(x))_{x\in\Hn}$, where $\xi^\mu_t(x)$ denotes the size of the subpopulation of type $x$ at time $t$. $\xi^\mu_t$ can be seen as a non-negative vector or (not necessarily normalised) measure on $\Hn$.

The dynamics of $(\xi^\mu_t)_{t\geq0}$ are determined by the system of differential equations
\begin{align}
\tfrac{d}{dt}\xi^\mu_t(x)=&\left[b(x)-d(x)-\sum_{y\in\Hn}\alpha(x,y)\xi^\mu_t(y)\right]\xi^\mu_t(x)\notag\\
&+\mu\sum_{y\in\Hn}\xi^\mu_t(y)b(y)m(y,x)-\mu\xi^\mu_t(x)b(x)\sum_{y\in\Hn}m(x,y),\quad x\in\Hn,\label{prelDE}
\end{align}
where the parameters are chosen as follows.
\begin{definition}
For $x,y\in\Hn$, we define
\begin{enumerate}[-]
\item $b(x)\in\R_+$, the \textit{birth rate} of an individual with type $x$,
\item $d(x)\in\R_+$, the \textit{(natural) death rate} of an individual with type $x$,
\item $\alpha(x,y)\in\R_+$, the \textit{competitive pressure} that is imposed upon an individual with type $x$ by an individual with type $y$,
\item $\mu\in[0,1]$, the \textit{probability of mutation} at a birth event,
\item $m(x,\cdot)\in\mathcal{M}_p(\Hn)$, the \textit{law of the mutant}.
\end{enumerate}
Here $\mathcal{M}_p(\Hn)$ is the set of probability measures on $\Hn$. We assume that $m(x,x)=0$, for every $x\in\Hn$. For each $x\in\Hn$, we define $r(x):=b(x)-d(x)$, its \textit{individual fitness}.
\end{definition}
Abiotic factors like temperature, chemical milieu, or other environmental properties enter through $b$ and $d$, while biotic factors such as competition due to limited food supplies, segregated toxins, or predator-prey relationships are reflected in the competition kernel $\alpha$.

We could also let the probability of mutation depend on $x\in\Hn$ in a way such that it is still proportional to some $\mu$, i.e.\ $\mu M(x)$. However, this would not change the limiting process, therefore we stick with a constant $\mu$ for simplicity of notation.

Note that the competition term ensures that solutions are always bounded. This implies Lipschitz continuity for the coefficients, and hence the classical theory for ordinary differential equations ensures existence, uniqueness, and continuity in $t$ of such solutions $\xi^\mu_t$. Moreover, for non-negative initial condition $\xi^\mu_0$, $\xi^\mu_t$ is non-negative at all times.

\begin{definition}
For $x\in\Hn$, we denote by $|x|:=\sum_{i=1}^nx_i$ the 1-norm. We write $x\sim y$ if $x$ and $y$ are direct neighbours on the hypercube, i.e.\ if $|x-y|=1$. Else, we write $x\nsim y$. We denote the standard Euclidean norm by $\norm{\cdot}$.
\end{definition}
To ensure that the mutants which a type $x\in\Hn$ can produce are exactly its direct neighbours, we introduce the following assumption. It corresponds to only allowing single mutations.
\begin{itemize}
\item[\textbf{(A)}] For every $x,y\in\Hn$, $m(x,y)>0$ if and only if $x\sim y$.
\end{itemize}
Again, this assumption is not necessary and can easily be relaxed. However, it simplifies notation and does not change the method of the proofs. We comment on the case of general finite (directed) graphs as type spaces in Section \ref{1stThm}.

Under the above assumption, (\ref{prelDE}) reduces to
\begin{align}
\tfrac{d}{dt}\xi^\mu_t(x)=\left[r(x)-\sum_{y\in\Hn}\alpha(x,y)\xi^\mu_t(y)\right]\xi^\mu_t(x)+\mu\sum_{y\sim x}b(y)m(y,x)\xi^\mu_t(y)-\mu b(x)\xi^\mu_t(x).\label{DE}
\end{align}

In the mutation-free case, where $\mu=0$, the equations take the form of a competitive Lotka-Volterra system
\begin{align}
\tfrac{d}{dt}\xi^0_t(x)=\left[r(x)-\sum_{y\in\Hn}\alpha(x,y)\xi^0_t(y)\right]\xi^0_t(x).\label{LV}
\end{align}
Understanding this system is essential since it determines the short term dynamics of the system with mutation as $\mu\to0$. For a subset of types we study the stable states of the Lotka-Volterra system involving these types.

\begin{definition}
For a subset $\x\subset\Hn$ we define the set of \textit{Lotka-Volterra equilibria} by
\begin{align}
\text{LVE}(\x):=\left\{\xi\in(\R_{\geq0})^\x:\forall\ x\in\x:\ \Big[r(x)-\sum_{y\in\x}\alpha(x,y)\xi(y)\Big]\xi(x)=0\right\}.\label{equil}
\end{align}
Moreover, we let $\text{LVE}_+(\x):=\text{LVE}(\x)\cap(\R_{>0})^\x$. If $\text{LVE}_+(\x)$ contains exactly one element, we denote it by
$\bar{\xi}_\x$, the \textit{equilibrium size} of a population of coexisting types $\x$.
\end{definition}
\begin{remark}
If $\text{LVE}_+(\x)=\{\bar{\xi}_\x\}$, this implies $r(x)>0$ for all $x\in\x$. In the case where $\x=\{x\}$, we obtain $\bar{\xi}_x(x):=\bar{\xi}_\x(x)=\frac{r(x)}{\alpha(x,x)}$. %, which is always strictly positive by (A).
\end{remark}

The following assumption ensures that for a subset $\x\subset\Hn$, such that $r(x)>0$ for all $x\in\x$, there exists a unique asymptotically stable equilibrium of the Lotka-Volterra system involving types $\x$.

\begin{itemize}
\item[\textbf{(B$_\x$)}] %For $\x\subset\Hn$, such that $r(x)>0$ for all $x\in\x$, 
There exist $\theta_x>0$, $x\in\x$, such that
\begin{align}
\forall\ x,y\in\x:&\ \theta_x\alpha(x,y)=\theta_y\alpha(y,x),\label{sym}\\
\forall\ u\in\R^\x\backslash\{0\}:&\ \sum_{x,y\in\x} \theta_x\alpha(x,y)u(x)u(y)>0.\label{preposdef}
\end{align}
\end{itemize}

This is, for example, trivially satisfied by any symmetric, positive definite matrix $(\alpha(x,y))_{x,y\in\x}$. Under this condition, Champagnat, Jabin, and Raoul have proven convergence to a unique stable equilibrium.
\begin{theorem}[\cite{ChJaRa10}, Prop.1]\label{LVThm}
Assume (B$_\x$) for a subset $\x\subset\Hn$ such that $r(x)>0$, for all $x\in\x$.
Then there exists a unique $\bar{\xi}_\x\in(\R_+)^\x\backslash\{0\}$ such that for any solution $\xi^0_t$ to (\ref{LV}) with initial condition $\xi^0_0\in(\R_{>0})^\x\times\{0\}^{\Hn\backslash\x}$,
\begin{align}
\left.\xi^0_t\right|_\x\to\bar{\xi}_\x\,\text{ as }t\to\infty.
\end{align}
\end{theorem}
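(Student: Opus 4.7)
The plan is to implement the classical Volterra--LaSalle strategy, with the weights $\theta_x$ from (B$_\x$) serving as symmetrizing factors. I would first recast the existence and uniqueness of $\bar\xi_\x$ as a convex optimization problem. By the symmetry (\ref{sym}), the function
\begin{equation*}
\Phi(\xi):=-\sum_{x\in\x}\theta_x r(x)\xi(x)+\tfrac{1}{2}\sum_{x,y\in\x}\theta_x\alpha(x,y)\xi(x)\xi(y)
\end{equation*}
has a symmetric Hessian that is strictly positive definite by (\ref{preposdef}); hence $\Phi$ is strictly convex and coercive on $(\R_{\geq 0})^\x$ and admits a unique minimizer $\bar\xi=\bar\xi_\x$. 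Its Karush--Kuhn--Tucker conditions
\begin{equation*}
r(x)-\sum_{y\in\x}\alpha(x,y)\bar\xi(y)\leq 0, \text{ with equality whenever } \bar\xi(x)>0,
\end{equation*}
are precisely the LVE relations of (\ref{equil}) strengthened by a stability-type inequality on vanishing components. Since $r(x)>0$ for every $x\in\x$, one has $\Phi(\varepsilon e_x)<0$ for small $\varepsilon>0$, so $\bar\xi\neq 0$.

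Next, I would introduce the Volterra-type Lyapunov functional
\begin{equation*}
V(\xi):=\sum_{x\in\x}\theta_x H(\xi(x),\bar\xi(x)), \qquad H(u,v):=u-v-v\log(u/v) \text{ for } v>0, \quad H(u,0):=u,
\end{equation*}
which is non-negative and vanishes exactly at $\bar\xi$. A direct computation along solutions of (\ref{LV}), rewriting $r(x)=[r(x)-\sum_y\alpha(x,y)\bar\xi(y)]+\sum_y\alpha(x,y)\bar\xi(y)$ and invoking (\ref{sym}) to symmetrize the resulting double sum, gives
\begin{equation*}
\tfrac{d}{dt}V(\xi_t)=\sum_{x\in\x}\theta_x(\xi_t(x)-\bar\xi(x))\!\left[r(x)-\sum_{y\in\x}\alpha(x,y)\bar\xi(y)\right]-\sum_{x,y\in\x}\theta_x\alpha(x,y)(\xi_t(x)-\bar\xi(x))(\xi_t(y)-\bar\xi(y)).
\end{equation*}
The second sum is $\leq 0$ by (\ref{preposdef}), with equality iff $\xi_t|_\x=\bar\xi$. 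The first sum splits over $\{\bar\xi(x)>0\}$, where the bracket vanishes by KKT, and $\{\bar\xi(x)=0\}$, where the bracket is $\leq 0$ and $\xi_t(x)-\bar\xi(x)=\xi_t(x)\geq 0$; it is therefore also $\leq 0$. Thus $\dot V\leq 0$ with equality only at $\xi=\bar\xi$.

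To conclude I would verify uniform a priori bounds and forward invariance of positivity. Condition (\ref{preposdef}) applied to $u=e_x$ yields $\alpha(x,x)>0$, so the scalar inequality $\tfrac{d}{dt}\xi_t(x)\leq \xi_t(x)[r(x)-\alpha(x,x)\xi_t(x)]$ gives $\xi_t(x)\leq \max\{\xi_0(x),r(x)/\alpha(x,x)\}$; positivity of $\xi_0(x)$ propagates because $\tfrac{d}{dt}\log\xi_t(x)$ remains bounded on bounded orbits. LaSalle's invariance principle then confines the $\omega$-limit set of the bounded orbit to the largest invariant subset of $\{\dot V=0\}=\{\bar\xi\}$, which gives $\xi_t|_\x\to\bar\xi_\x$.

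The main technical subtlety lies in the boundary components $\bar\xi(x)=0$: the entropy term $v\log(u/v)$ is singular there, so $V$ must be defined piecewise and the sign of the residual linear contribution to $\dot V$ has to be checked by hand, using precisely the strict KKT inequality available on that face. Once this boundary bookkeeping is handled and (\ref{preposdef}) is brought to bear on the interior quadratic term, the remainder of the argument is standard LaSalle, and $\bar\xi_\x$ emerges as the unique global attractor of (\ref{LV}) on $(\R_{>0})^\x\times\{0\}^{\Hn\setminus\x}$.
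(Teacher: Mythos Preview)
Your argument is correct, but it follows a different route from the one the paper sketches (which is the Champagnat--Jabin--Raoul proof). You use the quadratic form $\Phi$ only for the static problem---existence and uniqueness of $\bar\xi_\x$ via strict convexity and KKT---and then switch to the Volterra relative-entropy functional $V$ as the dynamical Lyapunov function. The paper instead uses $L=\Phi$ itself as the Lyapunov function: the symmetry (\ref{sym}) gives
\[
\tfrac{d}{dt}L(\xi^0_t)=-\sum_{x\in\x}\theta_x\Big[r(x)-\sum_{y\in\x}\alpha(x,y)\xi^0_t(y)\Big]^2\xi^0_t(x)\leq 0,
\]
while (\ref{preposdef}) supplies strict convexity of $L$ and hence uniqueness of its minimizer. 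The trade-off is this: in the paper's approach the dissipation identity is one line and has no boundary subtleties, but the zero set $\{\dot L=0\}$ is the entire equilibrium set $\mathrm{LVE}(\x)$, so one must then use convexity of $L$ (and the fact that $L$ is constant on the $\omega$-limit set) to single out the minimizer $\bar\xi_\x$ among all equilibria. In your approach the zero set $\{\dot V=0\}$ is already the singleton $\{\bar\xi_\x\}$ by strict positive definiteness, so LaSalle concludes immediately---at the price of the piecewise definition of $H$ and the separate bookkeeping for the face $\{\bar\xi(x)=0\}$ that you correctly flag. Both arguments are standard and either would serve here.
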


The proof of this theorem uses the Lyapunov functional
\begin{align}
L(\xi)=\frac{1}{2}\sum_{x,y\in\x} \theta_x\alpha(x,y)\xi(x)\xi(y)-\sum_{x\in\x} \theta_xr(x)\xi(x),\quad \xi\in\R^\x.
\end{align}
(\ref{sym}) ensures that
\begin{align}
\frac{d}{dt}L(\left.\xi^0_t\right|_\x)=(\nabla L)(\left.\xi^0_t\right|_\x)\cdot\tfrac{d}{dt}\left.\xi^0_t\right|_\x=-\sum_{x\in\x} \theta_x\left[r(x)-\sum_{y\in\x}\alpha(x,y)\xi^0_t(y)\right]^2\xi^0_t(x)\leq 0,
\end{align}
while (\ref{preposdef}) gives convexity of $L$.

\begin{remark}
Note that \ref{preposdef} implies
\begin{align}
\forall\ u\in\R^\x\backslash\{0\}:&\ \sum_{x,y\in\x} \theta_x\alpha(x,y)u(x)u(y)\geq \kappa_\x\norm{u}^2,\label{posdef}
\end{align}
where
\begin{align}
\kappa_\x:=\min_{u:\norm{u}=1}\sum_{x,y\in\x} \theta_x\alpha(x,y)u(x)u(y)>0.
\end{align}
We set $\kappa:=\min_{\x\subset\Hn}\kappa_\x$.
\end{remark}

Connected to this positive definiteness property and the Lotka-Volterra equilibria, we define a norm that is used to measure the distance between the current state of the population and the equilibrium size. Since the $\theta_x$, $x\in\x$, in (B$_\x$) are not unique, we fix an arbitrary choice of such parameters. In the case where $(\alpha(x,y))_{x,y\in\x}$ is irreducible, we can choose the unique normalised version where $\sum_{x\in\x}\theta_x=1$.
\begin{definition}
For $\x\subset\Hn$ such that $\text{LVE}_+(\x)=\{\bar{\xi}_\x\}$ and (B$_\x$) is satisfied, we define a scalar product on $\R^\x$ (or $\MM(\x)$, the set of non-negative measures on $\x$) by
\begin{align}
\langle u,v\rangle_\x:=\sum_{x\in\x}\frac{\theta_x}{\bar{\xi}_\x(x)}u(x)v(x),\quad u,v\in\R^\x.\label{defnorm}
\end{align}
The corresponding norm is defined by $\norm{u}_\x:=\sqrt{\langle u,u\rangle_\x}$.
\end{definition}

This scalar product is chosen exactly in a way such that we can use the positive definiteness (\ref{posdef}) and the properties of $\bar{\xi}_\x$. Moreover, we notice that
\begin{align}
c_\x^2\norm{u}^2:=\left(\min_{x\in\x}\frac{\theta_x}{\bar{\xi}_\x(x)}\right)\norm{u}^2\leq\norm{u}^2_\x\leq \left(\max_{x\in\x}\frac{\theta_x}{\bar{\xi}_\x(x)}\right)\norm{u}^2=:C_\x^2\norm{u}^2.
\end{align}

\begin{remark}
Throughout the paper, constants labelled $c$ and $C$ have varying values. Specific constants, as $c_\x$ and $C_\x$ above, are labelled differently and referenced when used repetitively.
\end{remark}

While some types $\x$ coexist at their equilibrium size $\bar{\xi}_\x$, other types $y\in\Hn\backslash\x$, which only have a small population size, grow in their presence. Considering the rate of exponential growth in (\ref{LV}), we formulate a notion of invasion fitness.
\begin{definition}
For $\x\subset\Hn$ such that $\text{LVE}_+(\x)=\{\bar{\xi}_\x\}$ and $y\in\Hn$, we define the \textit{invasion fitness} of an individual with type $y$ in a population of coexisting types $\x$ at equilibrium by \linebreak $f_{y,\x}:=r(y)-\sum_{x\in\x}\alpha(y,x)\bar{\xi}_\x(x)$.
\end{definition}

Notice that $f_{x,\x}=0$ for all $x\in\x$. In contrast to the individual fitness $r$, which is fixed, this notion of fitness varies over time and depends on the current resident types.

%%%%%%%%%%%%%%%%%%%%%%%%%%%%%%%%%%%%%%%%%%%%%%%%%%%%%%%%%%%%%%%%%%%%%%%%%%%%%%%%%%%%%%%%

\subsection{Convergence to a deterministic jump process}\label{1stThm}

We now come back to the system (\ref{DE}), involving mutation. We assume that the system starts out close to the equilibrium size of some subset of types $\x\subset\Hn$ and study its evolution over time. We distinguish between macroscopic resident types that coexist at their equilibrium size and microscopic mutant types that have a population size that tends to $0$ as $\mu\to0$. The initial conditions are specified as follows.

\begin{definition}
A collection of measures $\xi^\mu_0\in\MM(\Hn)$, depending on $\mu$, \textit{satisfies the initial conditions for resident types $\x\subset\Hn$, $\eta>0$, and $\bar{c}>0$} if $\text{LVE}_+(\x)=\{\bar{\xi}_\x\}$ and there exists a $\mu_0\in(0,1]$ and constants $0\leq c_y\leq C_y<\infty$ and $\lambda_y\geq0$, for each $y\in\Hn$, such that, for every $\mu\in(0,\mu_0]$,
\begin{align}
\xi^\mu_0(y)\in[c_y\mu^{\lambda_y},C_y\mu^{\lambda_y}],\label{initial}
\end{align}
where
\begin{align}
\forall~y\in\x:&\ \lambda_y=0,\ \bar{\xi}_\x(y)-\eta\frac{\bar{c}}{\sqrt{|\x|}}\leq c_y,C_y\leq\bar{\xi}_\x(y)+\eta\frac{\bar{c}}{\sqrt{|\x|}},\label{initialx}\\
\forall~y\in\Hn\backslash\x:&\ \lambda_y>0,\ 0\leq c_y,C_y<\infty\quad\text{or}\\
&\ \lambda_y=0,\ 0\leq c_y,C_y\leq\frac{\eta}{3},\ f_{y,\x}<0.
\end{align}
If $\xi^\mu_0(y)\equiv0$, we choose any $\lambda_y>\max_{z\in\Hn: \xi^\mu_0(z)>0}\lambda_z+n$.\\
We write $\xi^\mu_0\in\text{IC}(\x,\eta,\bar{c})$.
\end{definition}

This definition is very technical. We could choose more simple initial conditions for our main theorem, for example a monomorphic macroscopic type and no microscopic types of positive population size. However, after the first invasion step, this is exactly what the system looks like, and we want to be able to iterate our procedure. The definition roughly implies that all macroscopic types are close to their coexistence equilibrium (within an attractive domain) and all microscopic types $y$ are of order $\mu^{\lambda_y }$ as $\mu\to0$. The types that are not part of the resident types but of order $\mu^0$ are assumed to be unfit and of small enough size. This ensures that they do not "trigger" the stopping time that marks the beginning of the next Lotka-Volterra phase, i.e.\ the time when the first fit mutant reaches a macroscopic level. This stopping time is defined in (3.1)

Let $\x^0\subset\Hn$ be the initial set of coexisting types, i.e.\ $\xi^\mu_0\in\text{IC}(\x^0,\eta,\bar{c})$, and set $T_0:=0$. During a time of order 1, each type $y\in\Hn$ grows to a size of order $\mu^{\rho^0_y}$, where
\begin{align}
\rho^0_y:=\min_{z\in\Hn}[\lambda_z+|z-y|],
\end{align}
due to incoming mutants from other types.
This can be argued as folllows. The population of type $y$ collects incoming mutants from all other types $z$ of order $\xi^\mu_0(z)\mu^{|z-y|}$. These influences are summed up but in the limit of $\mu\to0$, the asymptotically largest summand, i.e.\ the smallest exponent of $\mu$, dominates all other terms.

Assume now that, after the $(i-1)^\text{st}$ invasion, at time $T_{i-1}\ln 1/\mu$, we have coexisting resident types $\x^{i-1}$ and all types $y\in\Hn$ have population size of order $\mu^{\rho^{i-1}}_y$, where $\rho^{i-1}_y=\min_{z\in\Hn}[\rho^{i-1}_z+|z-y|]$ is satisfied. During a time of order $\ln1/\mu$, microscopic types grow until the first type reaches a population size of order $1$. The population sizes during growth can be approximated as
\begin{align}
\xi^\mu_{t\ln\frac{1}{\mu}}(y)\approx\mu^{\min_{z\in\Hn}[\rho^{i-1}_z+|z-y|-(t-T_{i-1})f_{z,\x^{i-1}}]}.
\end{align}
This is a little tricky and takes into account that there are three possible sources that could  dominate the growth of type $y$:
First, the population at $y$ could just grow at its own exponential growth rate $f_{y,\x^{i-1}}$. This gives $\mu^{\rho^{i-1}_y -(t-T_{i-1})f_{y,\x^{i-1}}}$. 
Second, it could come from mutants from the large populations in $x\in\x^{i-1}$. This gives
$\mu^{|x-y|}$ since $x$ has to mutate $|x-y|$ times to reach $y$.
Finally, it could come from the mutants that have grown at any other site $z$ over the last period. This gives $\mu^{\rho^{i-1}_z+ |z-y|-(t-T_{i-1})f_{z,\x^{i-1}}}$.

Since mutants from another type can never increase the population size past $\mu^1$, the first microscopic type $y$ to reach a size of order 1 must have grown at its own rate $f_{y,\x^{i-1}}>0$. The time to reach this macroscopic size (after the last invasion) is of order $(\rho^{i-1}_y/f_{y,\x^{i-1}})\ln1/\mu$. 

Summarising these thoughts, we inductively define
\begin{align}
y^i_*&:=\argmin_{\substack{y\in\Hn:\\f_{y,\x^{i-1}}>0}}\frac{\rho^{i-1}_y}{f_{y,\x^{i-1}}},\label{unqmin}
\end{align}
the $i^\text{th}$ invading type (if the minimiser is unique),
\begin{align}\label{defTi}
T_i&:=T_{i-1}+\min_{\substack{y\in\Hn:\\f_{y,\x^{i-1}}>0}}\frac{\rho^{i-1}_y}{f_{y,\x^{i-1}}},
\end{align}
the time of the $i^\text{th}$ invasion on the time scale $\ln1/\mu$, and
\begin{align}
\rho^i_y&:=\min_{z\in\Hn}[\rho^{i-1}_z+|z-y|-(T_i-T_{i-1})f_{z,\x^{i-1}}]
\end{align}
the $\mu$-exponent of the population size of type $y$ at the time of the $i^\text{th}$ invasion.
If there is no $y\in\Hn$ such that $f_{y,\x^{i-1}}>0$, we set $T_i:=\infty$.

At time $T_i\ln1/\mu$, the types $y^i_*$ and $\x^{i-1}$ re-equilibrate according to the mutation-free Lotka-Volterra dynamics. If it is unique, we denote the support of the new equilibrium, i.e.\ the new coexisting resident types, by $\x^i$.

\begin{remark}
The results in this paper can easily be generalised to finite, possibly directed graphs as type spaces, where (directed) edges mark the possibility of mutation. In these cases the Hamming distance on the hypercube (e.g.\ $|z-y|$ in (2.22)) is replaced by a ``directed'' distance, corresponding to lengths of directed paths (e.g.\ by the length of the shortest path from $z$ to $y$). Note that this directed distance is not a distance in the classical sense since it might not be symmetric. For ease of notation and due to the nice applicability to genetic sequences, we stick with the hypercube in this paper.
\end{remark}

%\anton{ Let us discuss them meaning of the  terms appearing in the theorem.
%$ \frac{\rho^{i-1}_y}{f_{y,\mathbf{x}^{i-1}}}$ is the time (measured in units of $\ln (1/u)$ it takes for the population at $y$ that has initial size $u^{\rho^{i-1}_y}$ that grows with positive  rate $f_{y,\mathbf{x}^{i-1}}$
%to reach size of order $1$. $y^i_*$ is then the site where this happens first, and $T_i$ is the absolute time when this happens.  The formula for the new initial conditions, $\rho^i_y$., is tricky.  It takes into account that
%there are three possible sources that could dominate the new initial condition at $y$:
%First,  the population at $y$ could have just continued to grow. This gives 
%$\rho^i_y=\rho^{i-1}_y +(T_i-T_{i-1})f_{y,\bx^{i-1}}$. 
%Second, it could come from mutants form the large populations in $z\in \bx^{i-1}$. This gives
%$\rho^i_y=0+|z-y|$.  Finally, it could come from the mutants that have grown at any other site $z$, which have grown 
%over the last period. This gives
%$\rho^i_y=\rho^{i-1}_z+ |z-y|-(T_{i}-T_{i-1})f_{z,\bx^{i-1}}$. }

With the above notation, we can now characterise the limiting process as follows.

\begin{theorem}\label{MainThm}
Consider the system of differential equations (\ref{DE}) and let $\xi^\mu_0\in\text{IC}(\x^0,\eta,\bar{c})$, for $\eta$ small enough. Assume (A) and (B$_{\x^{i-1}\cup y^i_*}$), for every $1\leq i<I$, where we set $I:=i$ for the smallest $i\in\N$ where either
\begin{itemize}
\item[(a)] the minimiser in (\ref{unqmin}) is not unique, or
\item[(b)] there is a $y\in(\x^{i-1}\cup y^i_*)\backslash\x^i$ such that $f_{y,\x^i}\geq0$,
\end{itemize}
and $I:=\infty$ if none of these occur. In the latter case, we set $T_\infty:=\infty$.

Then, for every $t\in[0,T_I)\backslash\{T_i,0\leq i\leq I\}$,
\begin{align}
\lim_{\mu\to0}\xi^\mu_{t\ln\frac{1}{\mu}}=\sum_{i=0}^{I-1} \1_{T_i\leq t< T_{i+1}}\sum_{x\in\x^i}\delta_{x}\bar{\xi}_{\x^i}(x).
\end{align}
\end{theorem}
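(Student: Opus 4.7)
The plan is to proceed by induction on the invasion index $i$. The inductive hypothesis at step $i$ is that there exists a time $t^\mu_{i-1} = T_{i-1}\ln(1/\mu) + o(\ln(1/\mu))$ at which $\xi^\mu_{t^\mu_{i-1}} \in \text{IC}(\x^{i-1},\eta,\bar c)$, with the microscopic exponents equal to $\rho^{i-1}_y$. The base case $i=1$ is the assumption of the theorem. Each inductive step decomposes into an exponential growth phase of length $\sim (T_i-T_{i-1})\ln(1/\mu)$, followed by a Lotka--Volterra invasion phase of length $O(1)$, which is negligible on the time scale $\ln(1/\mu)$.

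For the growth phase I would follow \cite{BoWa13}: as long as every $y \notin \x^{i-1}$ has population size $\ll 1$, the resident coordinates $\xi^\mu(x)$, $x \in \x^{i-1}$, remain in the attractive domain of $\bar\xi_{\x^{i-1}}$ and stay within distance $O(\eta+\mu)$ of it (via the Lyapunov functional from Theorem 2.1 applied to the perturbed system). Freezing the residents at $\bar\xi_{\x^{i-1}}$ linearises the equation for each microscopic $y$ to a scalar ODE with growth rate $f_{y,\x^{i-1}}$ plus a mutational influx $\mu\sum_{z\sim y} b(z)m(z,y)\xi^\mu(z)$. The aim is to show
\begin{align}
\xi^\mu_{t\ln(1/\mu)}(y) = \mu^{\rho(y,t) + o(1)}, \qquad \rho(y,t) := \min_{z\in\Hn}\bigl[\rho^{i-1}_z + |z-y| - (t-T_{i-1})f_{z,\x^{i-1}}\bigr],
\end{align}
uniformly on $t \in [T_{i-1}, T_i - \delta]$ for every $\delta>0$. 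The upper bound comes from dropping self-competition and iterating a Gr\"onwall comparison; the lower bound uses that, while all non-resident populations are $o(1)$, the competitive feedback they exert on each other is negligible, and expanding $\xi^\mu(y)$ through the mutation chain realises the $(\min,+)$-infimum defining $\rho(y,t)$.

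Next I identify the jump time. By the definition (2.23), $T_i$ is the first $t$ at which some $\rho(y,t)$ hits $0$; uniqueness of the minimiser in (2.22) (hypothesis (a) does not occur before $i$) ensures that at $t=T_i$ a single type $y^i_*$ reaches order $1$, while every other microscopic type has size $O(\mu^\delta)$ for some $\delta>0$. I then feed the state $\xi^\mu_{T_i\ln(1/\mu)}$ into the Lotka--Volterra invasion phase: restricting to the support $\x^{i-1}\cup\{y^i_*\}$ and treating the mutation terms and the remaining microscopic types as an $O(\mu^\delta)$ perturbation, Theorem 2.1 applied under (B$_{\x^{i-1}\cup y^i_*}$) drives the restricted system to $\bar\xi_{\x^{i-1}\cup y^i_*}$ in time $O(1)$. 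Any type in $(\x^{i-1}\cup\{y^i_*\})\setminus \x^i$ has $f_{y,\x^i}<0$ by the stopping condition (b), and therefore decays back to microscopic size with a new exponent $\lambda_y > 0$. A short perturbation argument verifies that the resulting state satisfies $\text{IC}(\x^i,\eta,\bar c)$ with updated exponents $\rho^i_y$, closing the induction.

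The genuinely hard part is the growth phase: the microscopic populations form a coupled, graph-structured family in which each $y$ can be seeded by \emph{any} other microscopic $z$ through a long chain of mutations, and one must obtain matching upper and lower $\mu$-power bounds uniformly in $y\in\Hn$ and $t$. This amounts to showing that the true dynamics realise the tropical $(\min,+)$ formula for $\rho(y,t)$, while simultaneously excluding the pathological possibility that some type is pushed past $\mu^0$ via cumulative mutation influx rather than through its own positive invasion fitness. Balancing these bounds against the $O(\eta)$ drift of the residents, and ensuring that the resulting $o(1)$ errors remain uniform up to $t = T_i - \delta$ for $\delta\downarrow 0$, is the central technical step; the rest of the argument is an iteration of well-understood Lotka--Volterra facts.
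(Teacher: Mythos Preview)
Your outline matches the paper's proof closely: induction on the invasion index, a growth phase governed by the tropical $(\min,+)$ exponent formula (the paper's Theorem~3.2 and Corollary~3.3), followed by a Lotka--Volterra phase handled via Theorem~2.1, with the errors controlled through a small parameter $\eta$. You have also correctly identified the growth-phase bounds as the technical core.

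Two points where the paper is more careful than your sketch. First, the claim that the Lotka--Volterra phase takes time $O(1)$ is not immediate from Theorem~2.1, which only asserts convergence as $t\to\infty$ without a rate. The paper proves a separate continuity lemma (Lemma~4.2) showing that the time $\bar\tau^0_\eta(\xi,\x^i,\x^{i-1}\cup y^i_*)$ to re-enter an IC-neighbourhood is continuous in the initial data $\xi$, and then takes a supremum over the compact set of possible entry states to get a uniform bound $\bar T_\eta$. Without this, the $o(\ln(1/\mu))$ control on the invasion phase is unjustified. Second, your assertion that types $y\in(\x^{i-1}\cup y^i_*)\setminus\x^i$ ``decay back to microscopic size with a new exponent $\lambda_y>0$'' is not quite right: such types retain $\rho^i_y=0$ (they were macroscopic at time $T_i$) but satisfy $f_{y,\x^i}<0$ by the stopping rule~(b), and the IC definition is designed precisely to accommodate this case via the clause ``$\lambda_y=0$, $c_y,C_y\le\eta/3$, $f_{y,\x}<0$''. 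The paper's proof of Corollary~3.3 then checks separately that such types cannot trigger the next stopping time $\tilde T^\mu_\eta$.
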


\begin{remark}
(i) Note that (B$_{\x^{i-1}\cup y^i_*}$) implies (B$_{\x^{i-1}}$) and (B$_{\x^i}$), with the same constants $\theta_x$.

(ii) Case (a) is very unlikely if the parameters of the model are chosen in a random fashion since it requires a very particular equality. Case (b) guarantees that we terminate the procedure as soon as the conditions of $\text{IC}(\x^i,\eta,\bar{c})$ are not satisfied after the $i^\text{th}$ invasion. For every \linebreak $y\in(\x^{i-1}\cup y^i_*)\backslash\x^i$, $f_{y,\x^i}\leq 0$ is ensured (going through the proof of Theorem \ref{LVThm}), so the only problem can arise from equality.

(iii) Note that the theorem implies that, in the case of $T_i=\infty$, even if there was a mutant type $y\in\Hn\backslash\x^{i-1}$ such that $f_{y,\x^{i-1}}=0$, it would not be able to invade the resident population. 
\end{remark}

The proof of this result is given in Section 3 and 4.

%%%%%%%%%%%%%%%%%%%%%%%%%%%%%%%%%%%%%%%%%%%%%%%%%%%%%%%%%%%%%%%%%%%%%%%%%%%%%%%%%%%%%%

\subsection{Convergence in the case of equal competition}

In the context of adaptive walks and flights, the fitness landscape on the type space is possibly random, but usually fixed over time. For a monomorphic resident type, the current fitness of any type, corresponding to its invasion fitness, is determined by the difference between its individual fitness and the fitness of the resident type.

As a special case of our model, we consider equal competition between all types on the hypercube. In this case, one can  simplify the description of the limit process and derive some interesting properties.

We introduce the additional assumption
\begin{itemize}
\item[\textbf{(C)}] For every $x,y\in\Hn$, $\alpha(x,y)\equiv\alpha>0$.
\end{itemize}
This leads to a couple of nice properties of the invasion fitness $f_{y,\{x\}}$. As in the adaptive walks framework, we obtain
\begin{align}
f_{y,\{x\}}=r(y)-\alpha(y,x)\bar{\xi}_{\{x\}}(x)=r(y)-r(x),
\end{align}
which yields
\begin{align}
f_{y,\{x\}}=-f_{x,\{y\}}~\text{ and }~f_{z,\{y\}}+f_{y,\{x\}}=f_{z,\{x\}}.\label{trans}
\end{align}

As a consequence, there is some kind of transitivity of invasion fitness. A type $z$ that is unfit relative to some other type $y$, i.e.\ $f_{z,\{y\}}<0$, is unfit relative to all types that are fitter than $y$. This ensures that types which are once suppressed by resident types stay at a microscopic level forever. In particular, case (b) in Theorem \ref{MainThm} is automatically excluded by assumption (C).

As before, we terminate the procedure as soon as case (a) in Theorem \ref{MainThm} occurs to ensure that there is always a unique mutant that reaches the threshold of order 1 first after an invasion. Starting out with only a single type at its equilibrium size, 
i.e.\ $\x^0=\{x^0\}$, this also implies that we avoid coexistence and always maintain a monomorphic resident population. This is due to the fact that an invading type has to have higher rate $r$ than the current resident type, which prevents a polymorphic Lotka-Volterra equilibrium.

Assumption (B$_\x$) can no longer be satisfied for constant $\alpha$, as soon as $|\x|\geq 2$. However, it is no longer needed since the resident types are monomorphic, i.e.\ $|\x^i|=1$, and have a lower rate $r$ than the invading types, which implies a unique stable equilibrium of the Lotka-Volterra sytem involving $\x^i\cup y^i_*$. We comment on this in more detail in Section 5, where we adapt the proof of Theorem \ref{MainThm} to this situation.

In the case of a monomorphic resident population $\x=\{x\}$, we use the shorthand notation $\bar{\xi}_x:=\bar{\xi}_{\{x\}}$, $f_{y,x}:=f_{y,\{x\}}$. For types $x^i$, $x^j$, we write $f_{i,j}:=f_{x^i,x^j}$.

The limiting jump process can now be described in a simple way.

\begin{theorem}\label{EqComp}
Consider the system of differential equations (\ref{DE}) and let $\xi^\mu_0\in\text{IC}(\{x^0\},\eta,\bar{c})$ such that $\lambda_y\geq |y-x^0|$, for all $y\in\Hn$, and $\eta$ small enough. Assume (A) and (C) and set $I:=i$ for the smallest $i\in\N$ such that the minimiser in (\ref{unqmin}) is not unique, and $I:=\infty$ if this does not occur. In the latter case, we set $T_\infty:=\infty$.

Then, for every $t\in[0,T_I)\backslash\{T_i,0\leq i\leq I\}$,
\begin{align}
\lim_{\mu\to0}\xi^\mu_{t\ln\frac{1}{\mu}}=\sum_{i=0}^{I-1} \1_{T_i\leq t< T_{i+1}}\delta_{x^i}\bar{\xi}_{x^i}(x^i).
\end{align}

Moreover, the following identities hold:
\begin{align}
x^i&=\argmin_{y\in\Hn: f_{y,x^{i-1}}>0}\frac{|y-x^0|-|x^{i-1}-x^0|}{f_{y,x^{i-1}}},\label{argmin}\\
T_i&=\frac{|x^i-x^0|-|x^{i-1}-x^0|}{f_{i,i-1}}.
\end{align}
\end{theorem}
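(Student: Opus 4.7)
The plan is to build on Theorem \ref{MainThm} and then exploit the transitivity of invasion fitness under (C) to extract the explicit formulas. The first subtlety is that (B$_{\x}$) fails whenever $|\x|\geq 2$ under constant competition, so Theorem \ref{LVThm} does not apply to the re-equilibration phase. Starting from $\x^0=\{x^0\}$, however, the residents remain monomorphic throughout: each invasion $x^{i-1}\to x^i$ satisfies $f_{x^i,x^{i-1}}>0$, i.e.\ $r(x^i)>r(x^{i-1})$, and the two-type Lotka-Volterra system on $\{x^{i-1},x^i\}$ with equal competition $\alpha$ can be analyzed directly. Indeed, $\xi(x^i)/\xi(x^{i-1})$ grows at the strictly positive rate $r(x^i)-r(x^{i-1})$, so $x^{i-1}$ is driven extinct while $x^i$ converges to $\bar{\xi}_{x^i}(x^i)=r(x^i)/\alpha$. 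This substitute for Theorem \ref{LVThm} is the only ingredient of the proof of Theorem \ref{MainThm} that needs modification; the growth-phase analysis of Section 3 then yields $T_i-T_{i-1}=\min_y\rho^{i-1}_y/f_{y,x^{i-1}}$, with $x^i$ attaining the minimum.

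The main remaining task is to compute $\rho^{i-1}_y$ explicitly. Iterating the recursion and using $\rho^0_z=|z-x^0|$ (which follows from $\lambda_{x^0}=0$, $\lambda_y\geq|y-x^0|$, and the triangle inequality), one obtains, for $i\geq 2$,
\begin{align*}
\rho^{i-1}_y=\min_{(z_0,\ldots,z_{i-2})\in\Hn^{i-1}}\bigg[|z_0-x^0|+\sum_{k=0}^{i-3}|z_k-z_{k+1}|+|z_{i-2}-y|-\sum_{k=0}^{i-2}(T_{k+1}-T_k)f_{z_k,x^k}\bigg].
\end{align*}
The constant path $z_0=\cdots=z_{i-2}=y$ makes the bracket equal to $h_{i-1}(y):=|y-x^0|-\sum_{j=1}^{i-1}(T_j-T_{j-1})f_{y,x^{j-1}}$, giving $\rho^{i-1}_y\leq h_{i-1}(y)$. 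The central technical claim I would prove by induction on $i$ is that this upper bound is sharp whenever $f_{y,x^{i-1}}>0$, i.e.\ $\rho^{i-1}_y=h_{i-1}(y)$ for every potential next invader. This is the hard part of the proof: under (C) the transitivity $f_{z,x^k}=f_{z,x^{i-1}}+f_{x^{i-1},x^k}$ and the inductive argmin inequalities $|w-x^0|-|x^{j-1}-x^0|\geq T_j\,f_{w,x^{j-1}}$ (valid for every $w$ with $f_{w,x^{j-1}}>0$ and $j\leq i-1$) together bound the growth advantage any detour type $z$ offers by its extra graph distance, ruling out paths that do strictly better than the constant one.

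Once $\rho^{i-1}_y=h_{i-1}(y)$ is established for potential invaders, the reduction to (\ref{argmin}) and its time analogue is purely algebraic. A short induction using the claimed formula for $T_j$ yields $h_{i-1}(x^{i-1})=0$, which rewrites as $\sum_{j=1}^{i-1}(T_j-T_{j-1})f_{x^{i-1},x^{j-1}}=|x^{i-1}-x^0|$. Substituting this identity into $h_{i-1}(y)/f_{y,x^{i-1}}$ and using transitivity one more time gives
\begin{align*}
\frac{h_{i-1}(y)}{f_{y,x^{i-1}}}=\frac{|y-x^0|-|x^{i-1}-x^0|}{f_{y,x^{i-1}}}-T_{i-1}.
\end{align*}
The shift $-T_{i-1}$ is independent of $y$, so the argmins agree, which is exactly (\ref{argmin}); reading off the minimum value and adding $T_{i-1}$ then yields $T_i=(|x^i-x^0|-|x^{i-1}-x^0|)/f_{i,i-1}$. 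The pointwise limit of $\xi^\mu_{t\ln(1/\mu)}$ on each plateau $[T_i,T_{i+1})$ finally follows, exactly as in Theorem \ref{MainThm}, by combining the growth-phase estimates with the two-type Lotka-Volterra convergence from the first paragraph.
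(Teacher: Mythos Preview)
Your overall strategy matches the paper's: reuse Theorem \ref{MainThm}, replace (B$_\x$) by a direct analysis of the two-type Lotka-Volterra system (this is exactly what Section 5 does, and your ratio argument is correct), then simplify $\rho^{i-1}_y$ using transitivity. The algebra at the end, reducing $h_{i-1}(y)/f_{y,x^{i-1}}$ to $(|y-x^0|-|x^{i-1}-x^0|)/f_{y,x^{i-1}}-T_{i-1}$, is also the same as the paper's computation.

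There is, however, a genuine gap in your ``central technical claim''. The statement $\rho^{i-1}_y=h_{i-1}(y)$ for \emph{every} $y$ with $f_{y,x^{i-1}}>0$ is false. Take $n=3$, $x^0=000$, $r(000)=0$, $r(100)=1$, $r(110)=1.9$, $r(111)=1.5$, other rates $\leq 0$. Then $x^1=100$, $T_1=1$, and for $y=111$ (which has $f_{y,x^1}=0.5>0$) the constant path gives $h_1(111)=3-1.5=1.5$, while the detour through $z=110$ gives $|110|+|110-111|-T_1f_{110,x^0}=2+1-1.9=1.1$; hence $\rho^1_{111}=1.1<h_1(111)$. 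A fitter intermediate type on a geodesic to $y$ can strictly beat the constant path, and nothing in your sketched inequality $|w-x^0|-|x^{j-1}-x^0|\geq T_jf_{w,x^{j-1}}$ rules this out.

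What is true, and what the paper actually proves, is the weaker statement that the \emph{argmin} and minimal value of $\rho^{i-1}_y/f_{y,x^{i-1}}$ over $y$ with $f_{y,x^{i-1}}>0$ coincide with those of $h_{i-1}(y)/f_{y,x^{i-1}}$. The paper's argument (Section 5) proceeds by contradiction at the minimiser: if the inner minimum over $z$ in $F(y,z)$ were attained only at some $\bar z\neq y^i_*$, one shows $f_{\bar z,x^{i-2}}>f_{y^i_*,x^{i-2}}>0$ and then $F(\bar z,\bar z)<F(y^i_*,\bar z)$, contradicting the choice of $y^i_*$. This step genuinely uses that $y^i_*$ is the global minimiser and does not extend to arbitrary $y$. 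Once you have this weaker diagonal statement, your algebraic reduction to (\ref{argmin}) and the formula for $T_i$ goes through unchanged; but you should replace the pointwise identity $\rho^{i-1}_y=h_{i-1}(y)$ by the argmin-level statement.
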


\begin{remark}
(i) $\lambda_y\geq|y-x^0|$ ensures that the initial population size of all microscopic types is not larger than what they gain due to incoming mutants from $x^0$ within a time of order 1. This is neccessary to obtain the identities for $x^i$ and $T_i$.

(ii) Uniqueness of the minimiser in (\ref{unqmin}) is equivalent to uniqueness of the minimiser in (\ref{argmin}) and hence $x^i$ is well-defined.

(iii) In the case where $I=\infty$, the jump process in Theorem \ref{EqComp} continues as long as there is a type with higher individual fitness, i.e.\ higher rate $r$. As a result, it can cross arbitrarily large valleys in the fitness landscape (defined by $r$) and eventually reaches a global fitness maximum, where it remains. Note that this global maximum does not have to be unique. The jump process reaches the maximum that is closest to $x^0$ in $\Hn$, which is unique if $I=\infty$, and equally fit types cannot invade as mentioned in Remark 5(iii).
With these properties, the jump process resembles an adaptive flight. However, it does not quite fit into that framework since it is not only jumping to local fitness maxima.

(iv) Every invasion step increases the distance on $\Hn$ between the resident type and $x_0$. This can be seen inductively as follows. Consider the $(i+1)^\text{st}$ invasion. $x^i$ was a minimiser of $(|y-x^0|-|x^{i-1}-x^0|)/f_{y,x^{i-1}}$. If now $y$ satisfies $f_{y,x^i}>0$, then
\begin{align}
\frac{|y-x^0|-|x^{i-1}-x^0|}{f_{y,x^{i-1}}}\geq\frac{|x^i-x^0|-|x^{i-1}-x^0|}{f_{i,i-1}},
\end{align}
and since $f_{y,x^{i-1}}=f_{y,x^i}+f_{i,i-1}>f_{i,i-1}$ and $|x^i-x^0|>|x^{i-1}-x^0|$ (by assumption), \linebreak $|y-x^0|-|x^{i-1}-x^0|>|x^i-x^0|-|x^{i-1}-x^0|$, and hence $|y-x^0|>|x^i-x^0|$.

The proof of Theorem \ref{EqComp} is found in Section 5.
\end{remark}

%%%%%%%%%%%%%%%%%%%%%%%%%%%%%%%%%%%%%%%%%%%%%%%%%%%%%%%%%%%%%%%%%%%%%%%%%%%%%%%%%%%%%%%

\subsection{Derivation from the individual-based stochastic model in the large population limit}

The deterministic system, that is studied above, can be obtained as the large population limit of an individual-based Markov process. At time $t$, we consider a population of finite size $N(t)\in\N$. Each living individual is represented by its type $x_1(t),...,x_{N(t)}(t)\in\Hn$ and the state of the population is described by the finite point measure
\begin{align}
\nu^\mu_t=\sum_{i=1}^{N(t)} \delta_{x_i(t)}.
\end{align}
$\nu^\mu_t(x)$ describes the number of induviduals of type $x\in\Hn$ at time $t$. The dynamics of the Markov process are determined by the same parameters $b$, $d$, $\alpha$, $\mu$, and $m$ as for the deterministic system $\xi^\mu_t$.

To let the size of the population tend to infinity, we introduce the \textit{carrying capacity} of the environment, denoted by $K\in\N$. This can for example be interpreted as the amount of available space or resources. As $K$ increases, the competitive pressure between individuals decreases and we scale $\alpha_K(x,y)\equiv\frac{\alpha(x,y)}{K}$. This leads to an equilibrium population size of order $K$. To derive a finite limit for large populations, i.e.\ as $K\to\infty$, we consider the rescaled measure
\begin{align}
\nu^{\mu,K}_t:=\frac{\nu_t}{K}.
\end{align}

This measure-valued Markov process can be constructed similar to \cite[Ch 2]{FoMe04}, with infinitesimal generator
\begin{align}
\mathcal{L}^K\phi(\nu)=&\sum_{x\in\Hn}K\nu(x)\left(\phi\left(\nu+\frac{\delta_x}{K}\right)-\phi(\nu)\right)b(x)(1-\mu)\notag\\
&+\sum_{x\in\Hn}K\nu(x)\sum_{y\sim x}
\left(\phi\left(\nu+\frac{\delta_y}{K}\right)-\phi(\nu)\right)b(x)\mu m(x,y)\notag\\
&+\sum_{x\in\Hn}K\nu(x)\left(\phi\left(\nu-\frac{\delta_x}{K}\right)-\phi(\nu)\right)\left(d(x)+\sum_{y\in\Hn}\frac{\alpha(x,y)}{K}K\nu(y)\right),
\end{align}
where $\nu\in\MM(\Hn)$ is a non-negative measure on $\Hn$ and $\phi$ a measurable bounded function from $\MM(\Hn)$ to $\R$.

Ethier and Kurtz have shown convergence of this process to $\xi^\mu$ as $K$ tends to infinity.

\begin{theorem}[\cite{EtKu86}, Chap.11, Thm.2.1]
Assume that the initial conditions converge almost surely to a deterministic limit, i.e. $\nu^{\mu,K}_0\to\xi^\mu_0$, as $K\to\infty$. Then, for every $T\geq 0$, $(\nu^{\mu,k}_t)_{0\leq t\leq T}$ almost surely converges uniformly to the deterministic process $(\xi^\mu_t)_{0\leq t\leq T}$, which is the unique solution to the system of differential equations \ref{DE} with initial condition $\xi^\mu_0$.
\end{theorem}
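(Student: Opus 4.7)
The plan is the standard density-dependent Markov chain argument from \cite{EtKu86}: write $\nu^{\mu,K}$ as a semimartingale whose drift coincides with the right-hand side of \eqref{DE}, establish a uniform a priori bound on the total mass, show the martingale part vanishes uniformly on $[0,T]$ as $K\to\infty$, and close by Gronwall. Since $\Hn$ is finite, one may work componentwise in $x\in\Hn$.

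First, applying $\mathcal{L}^K$ to the test function $\phi_x(\nu):=\nu(x)$ and collecting terms gives
\begin{align*}
\mathcal{L}^K\phi_x(\nu) = \Bigl[r(x)-\sum_{y\in\Hn}\alpha(x,y)\nu(y)\Bigr]\nu(x) + \mu\sum_{y\sim x}b(y)m(y,x)\nu(y) - \mu b(x)\nu(x) =: F_x(\nu),
\end{align*}
which is exactly the right-hand side of \eqref{DE}. Dynkin's formula then yields the decomposition
\begin{align*}
\nu^{\mu,K}_t(x) = \nu^{\mu,K}_0(x) + \int_0^t F_x(\nu^{\mu,K}_s)\,ds + M^{\mu,K}_t(x),
\end{align*}
where $M^{\mu,K}(x)$ is a purely discontinuous local martingale whose jumps are bounded by $1/K$.

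Second, since each jump changes the total mass $\nu^{\mu,K}_t(\Hn)$ by $\pm 1/K$ and the competition term produces a drift that is quadratically negative in the total mass once it exceeds a threshold depending only on $\max_x b(x)$ and $\min_{x,y}\alpha(x,y)$, a stochastic comparison with a one-dimensional logistic birth--death process yields a deterministic constant $R_T<\infty$ such that $\sup_{K\in\N}\sup_{t\le T}\nu^{\mu,K}_t(\Hn)\le R_T$ almost surely, provided the initial total masses are uniformly bounded (which follows from $\nu^{\mu,K}_0\to\xi^\mu_0$). On the compact set $\{\nu\in\MM(\Hn):\nu(\Hn)\le R_T\}$ the drift $F_x$ is Lipschitz with some constant $L=L(T)$. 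Third, since each jump of $\phi_x$ is of size $1/K$, a direct computation of the predictable quadratic variation via $\langle M^{\mu,K}(x)\rangle_t=\int_0^t\bigl(\mathcal{L}^K\phi_x^2-2\phi_x\mathcal{L}^K\phi_x\bigr)(\nu^{\mu,K}_s)\,ds$ produces $\langle M^{\mu,K}(x)\rangle_T\le C(T)/K$ on the event of the a priori bound. The Burkholder--Davis--Gundy inequality then gives $\E\sup_{t\le T}|M^{\mu,K}_t(x)|^p=O(K^{-p/2})$ for every even $p\ge 4$, and a Borel--Cantelli argument with $p>2$ upgrades this to
\begin{align*}
\sup_{t\le T}|M^{\mu,K}_t(x)|\xrightarrow[K\to\infty]{\text{a.s.}}0.
\end{align*}

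Finally, subtracting $\xi^\mu_t(x)=\xi^\mu_0(x)+\int_0^t F_x(\xi^\mu_s)\,ds$ from the semimartingale representation and summing over the finite set $\Hn$ produces
\begin{align*}
\sup_{s\le t}\sum_{x\in\Hn}|\nu^{\mu,K}_s(x)-\xi^\mu_s(x)| \le \sum_{x\in\Hn}|\nu^{\mu,K}_0(x)-\xi^\mu_0(x)| + L\int_0^t \sup_{u\le s}\sum_{x\in\Hn}|\nu^{\mu,K}_u(x)-\xi^\mu_u(x)|\,ds + \sum_{x\in\Hn}\sup_{s\le T}|M^{\mu,K}_s(x)|,
\end{align*}
and Gronwall's lemma delivers the claimed uniform a.s.\ convergence, since the initial-data term and the martingale term on the right both vanish almost surely in the limit $K\to\infty$. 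The main technical point is securing the a priori bound on the total mass in the second step: without it, the polynomial drift $F_x$ cannot be treated as Lipschitz and the Gronwall closure breaks down. Once that bound is in hand, everything downstream is routine.
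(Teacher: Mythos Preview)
The paper does not prove this statement at all: it is quoted verbatim as a known result from \cite{EtKu86}, Chapter~11, Theorem~2.1, and is used only as input to Section~2.4. So there is no ``paper's proof'' to compare against; what you have written is essentially a sketch of the standard density-dependent argument in the cited reference.

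That said, your sketch has one genuine weak spot, and you yourself flag it as ``the main technical point'': the claim that stochastic comparison with a one-dimensional logistic birth--death process yields a \emph{deterministic} constant $R_T$ with
\[
\sup_{K}\sup_{t\le T}\nu^{\mu,K}_t(\Hn)\le R_T\quad\text{almost surely}
\]
is not correct as written. For each fixed $K$, a logistic birth--death process exceeds any level with positive probability, so no such uniform deterministic bound exists. What the Ethier--Kurtz argument actually does is localise: one introduces $\tau^K_R:=\inf\{t:\nu^{\mu,K}_t(\Hn)>R\}$, applies the Lipschitz/Gronwall step on $[0,T\wedge\tau^K_R]$ (where $F$ is Lipschitz with constant depending only on $R$), and then uses that the deterministic solution $\xi^\mu$ stays in a compact set on $[0,T]$ to conclude $\liminf_{K\to\infty}\tau^K_R>T$ almost surely for $R$ large enough, thereby removing the stopping. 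Once you replace your second step by this localisation, the rest of your outline (Dynkin decomposition, quadratic-variation bound of order $1/K$, Borel--Cantelli, Gronwall) goes through and matches the argument in \cite{EtKu86}.
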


\subsection{Convergence for a limited radius of mutation}
The limiting process in Theorem \ref{EqComp} already looks similar to the greedy adaptive walk of \cite{NoKr15}, mentioned in the introduction. It is a monomorphic jump process on the type space that always jumps to types of higher individual fitness $r$. However, it can take larger steps than just to neighbouring types and we have seen that the initial type $x^0$ plays an important role in determining the jump chain. This is due to the fact that, already after an arbitrarily small time, mutation has induced a positive population size for every possible type. These mutant populations have size of order $\mu$ to the power of the distance to $x^0$ on $\Hn$. The next invading type is then found balancing low initial $\mu$-power and high (invasion) fitness.

In all our previous considerations, arbitrarily small populations were able to reproduce and foster mutants, which can lead to population sizes as small as $\mu^n$. This might not always fit reality well.

If we consider the stochastic model, introduced in the previous subsection, and allow for the mutation probability $\mu_K$ to decrease as $K$ increases, we can study the simultaneous limit of large populations and rare mutations. To be able to reproduce within a time of order 1 in a population of size $\mu_K^n$ implies that
\begin{align}
\lim_{K\to\infty}\mu_K^n\cdot K\geq 1,
\end{align}
or equivalently
\begin{align}
\lim_{K\to\infty}\frac{\mu_K}{K^{-\frac{1}{n}}}\geq 1.
\end{align}
In this case, we would recover the deterministic system (\ref{DE}) in the limit of $K\to\infty$.

If now $\mu_K$ was of order $K^{-\frac{1}{\ell}}$ for some $\ell<n$, this implies that populations with a size of order $\mu_K^\lambda$, for $\lambda>\ell$ are vanishing as $K\to\infty$ and hence cannot reproduce. If we consider a monomorphic resident type $x$, it spreads mutants $y$ of population size $\mu_K^{|y-x|}$. This means that it can initially only foster mutant populations in a radius of $\ell$.

This regime has already been studied by \cite{BoCoSm18} and Champagnat, M\'el\'eard, and Tran (preprint 2019). It is shown that, on the type space $\N$ (with neighbours having difference exactly $1$) and on the usual time scale of $\ln 1/\mu_K$, a fitness valley of width $\leq \ell$, but no further, can be crossed. However, crossing a wider valley is possible on a faster diverging time scale.

%refer to tran etc

In the following, we want to mimic this parameter regime of the stochastic system in our determnistic model. To do so, we introduce a cut-off that freezes the dynamics of a population below the threshold of $\bar{\xi}\mu^\ell$, where $\bar{\xi}:=\min\{\bar{\xi}_x(x)/2: x\in\Hn,r(x)>0\}>0$ is chosen such that every resident type will eventually surpass this value (which is relevant in the case $\ell=1$). The new system of differential equations then reads as follows.
\begin{align}\label{DE'}
\tfrac{d}{dt}\xi^\mu_t(x)=&\ \left[b(x)\1_{\xi^\mu_t(x)\geq\bar{\xi}\mu^\ell}-d(x)-\sum_{y\in\Hn}\alpha(x,y)\xi^\mu_t(y)\1_{\xi^\mu_t(y)\geq\bar{\xi}\mu^\ell}\right]\xi^\mu_t(x)\notag\\
&\ +\mu\sum_{y\sim x}\xi^\mu_t(y)\1_{\xi^\mu_t(y)\geq\bar{\xi}\mu^\ell}b(y)m(y,x)-\mu\xi^\mu_t(x)\1_{\xi^\mu_t(x)\geq\bar{\xi}\mu^\ell}b(x).
\end{align}

\begin{remark}
Reproduction (clonal and non-clonal) is stopped for types below the threshold of $\bar{\xi}\mu^\ell$. As a result, those types are in a kind of dormant state and can only grow due to the mutational influence of other, larger types. It does not affect the system that these dormant types remain at a low level since they do not influence the dynamics of other types and only become active again if they gain a larger amount due to incoming mutants.

The death rate of populations below $\bar{\xi}\mu^\ell$ is not set to zero. This is neccessary to actually drop below the threshold if a population declines due to negative fitness. Otherwise, the population would remain at exactly $\bar{\xi}\mu^\ell$ and could immediately start growing again when its fitness becomes positive due to a change of resident types. This is however not what we want to achieve since populations that drop to the threshold are supposed to go extinct and only reappear due to incoming new mutants.

We cannot simply set the population size of a type to zero below the threshold. In that case, a zero-type would never become active since every gain due to mutation would immediately be cancelled.
\end{remark}

As mentioned above, for $\ell\geq n$, we just recover the original scenario of Theorem \ref{MainThm}. This is due to the fact that, as long as there is at least one macroscopic type, every other type has population size of at least $\mu^n$, due to mutants from this macroscopic type.

For $\ell=1$, if we keep assumption (C) of constant competition and a monomorphic initial condition, we obtain the greedy adaptive walk of \cite{NoKr15}, where the process always jumps to the fittest direct neighbour of the current resident type. We re-define
\begin{align}
x^i&:=\argmax_{y\sim x^{i-1}}r(y),\label{unqmax}\\
T_i&:=T_{i-1}+\frac{1}{f_{i,i-1}},
\end{align}
and set $T_i:=\infty$, as soon as there exists no $y\sim x^{i-1}$ such that $r(y)>r(x^{i-1})$.

The convergence can be stated as follows.

\begin{theorem}\label{AW}
Consider the system of differential equations (\ref{DE'}) for $\ell=1$ and let $\xi^\mu_0\in\text{IC}(\{x^0\},\eta,\bar{c})$ such that $\lambda_y\geq 1$, for all $y\sim x^0$, $\lambda_y\geq 2$, for $|y-x^0|\geq2$, and $\eta$ small enough. Assume (A) and (C) and set $I:=i$ for the smallest $i\in\N$ such that the maximiser in (\ref{unqmax}) is not unique, and $I:=\infty$ if this does not occur. In the latter case, we set $T_\infty:=\infty$.

Then, for every $t\in[0,T_I)\backslash\{T_i,0\leq i\leq I\}$,
\begin{align}
\lim_{\mu\to0}\xi^\mu_{t\ln\frac{1}{\mu}}=\sum_{i=0}^{I-1} \1_{T_i\leq t< T_{i+1}}\delta_{x^i}\bar{\xi}_{x^i}(x^i).
\end{align}
\end{theorem}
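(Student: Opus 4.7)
The plan is to follow the inductive scheme used in the proof of Theorem \ref{EqComp}, carefully accounting for the dormancy cutoff at the threshold $\bar{\xi}\mu$ in (\ref{DE'}). At the start of the $i$-th monomorphic phase, the inductive hypothesis reads: $\xi^\mu_{T_{i-1}\ln(1/\mu)}$ is close to $\bar{\xi}_{x^{i-1}}(x^{i-1})\delta_{x^{i-1}}$ up to errors of order $\mu$, neighbors $y\sim x^{i-1}$ are of order $\mu$, and types $z$ with $|z-x^{i-1}|\geq 2$ are of order at most $\mu^2$. This matches the hypotheses of Theorem \ref{AW} at the base case $i=1$. I would decompose each phase into a short quasi-equilibrium step of duration $O(1)$, an exponential growth step on the scale $\ln(1/\mu)$, and a Lotka--Volterra invasion step of duration $O(1)$.

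In the quasi-equilibrium step, direct neighbors $y\sim x^{i-1}$ accumulate mutants from the macroscopic resident at rate $\mu b(x^{i-1})m(x^{i-1},y)\bar{\xi}_{x^{i-1}}(x^{i-1})$ while decaying at rate $d(y)+r(x^{i-1})$ when dormant; standard ODE estimates then show that $\xi^\mu_t(y)$ is of exact order $\mu$ within time $O(1)$. Types $z$ with $|z-x^{i-1}|\geq 2$ are not fed directly by the resident. Since the outgoing-mutation term in (\ref{DE'}) is switched off by $\1_{\xi^\mu_t(\cdot)\geq\bar{\xi}\mu}$, dormant neighbors do not relay mutations to $z$, and active neighbors, themselves of order $\mu$, contribute only $O(\mu^2)$. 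A standard differential inequality then yields $\xi^\mu_t(z)=O(\mu^2)\ll\bar{\xi}\mu$ uniformly over the phase. This is the crux of the argument: the dormancy cutoff suppresses the chained amplification that in Theorems \ref{MainThm} and \ref{EqComp} allowed types at distance $\geq 2$ from the resident to grow on the time scale $\ln(1/\mu)$ to intermediate $\mu$-powers strictly less than $1$.

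In the growth step on the scale $\ln(1/\mu)$, each neighbor $y\sim x^{i-1}$ with $r(y)>r(x^{i-1})$ that has crossed the threshold grows approximately as $\xi^\mu_t(y)\sim\bar{\xi}\mu\cdot e^{f_{y,x^{i-1}}(t-t_0)}$. Because all such $y$ enter their growth regime at the same $\mu$-order, the first to reach a size of order $1$ is the unique (by the uniqueness hypothesis on the maximizer) $x^i=\argmax_{y\sim x^{i-1}}r(y)$, reached at time $T_i\ln(1/\mu)$ with $T_i-T_{i-1}=1/f_{x^i,x^{i-1}}$, as claimed. The exponential-growth machinery of Section 3 provides the precise comparison, now applied only to direct neighbors and therefore substantially simpler than in the general case. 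Once $x^i$ is macroscopic, the equal-competition Lotka--Volterra analysis of Section 5 (which bypasses assumption (B$_\cdot$)) shows that $x^i$ displaces $x^{i-1}$ and becomes the new monomorphic resident within $O(1)$ further time. The reset of initial conditions for step $i+1$ is then verified: direct neighbors of $x^i$, which include $x^{i-1}$ itself and all types at former distance $2$ from $x^{i-1}$, sit at order $\mu$, while types at distance $\geq 2$ from $x^i$ remain at $O(\mu^2)$.

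The main obstacle is the discontinuous indicator in (\ref{DE'}): a neighbor $y$ may oscillate across the threshold as its effective drift switches between $-(d(y)+r(x^{i-1}))$ and $f_{y,x^{i-1}}-\mu b(y)$, and one must verify that such switching does not invalidate the exponential comparison estimates used to identify the winner. The delicate case is a neighbor $y$ whose dormant quasi-equilibrium $b(x^{i-1})m(x^{i-1},y)\bar{\xi}_{x^{i-1}}(x^{i-1})/(d(y)+r(x^{i-1}))$ sits near $\bar{\xi}$; such a $y$ may oscillate indefinitely near the threshold without ever reaching macroscopic size, but one checks that this can only happen when $r(y)\leq r(x^i)$, since otherwise the exponential drive would push $y$ deterministically across, so the identification of the winner $x^i$ and the jump time $T_i$ is unaffected. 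Maintaining the $O(\mu^2)$ bound on distance-$\geq 2$ types uniformly over the full growth phase of length $(1/f_{x^i,x^{i-1}})\ln(1/\mu)$ is the other nontrivial bookkeeping step, and again relies crucially on the cutoff blocking chained amplification through dormant intermediaries.
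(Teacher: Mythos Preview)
Your overall inductive scheme matches the paper's, and the identification of the winner as $\argmax_{y\sim x^{i-1}}r(y)$ with jump time $T_i-T_{i-1}=1/f_{x^i,x^{i-1}}$ is correct. However, the claim you single out as ``the crux of the argument'' --- that types $z$ with $|z-x^{i-1}|\geq 2$ remain $O(\mu^2)$ uniformly over the full growth phase --- is false, and the justification you give for it is where the error lies. The cutoff blocks amplification only through \emph{dormant} intermediaries; but a fit neighbour $y\sim x^{i-1}$ with $r(y)>r(x^{i-1})$ crosses the threshold $\bar{\xi}\mu$ early (within $O(1)$ time) and thereafter is \emph{active}, growing exponentially from order $\mu$ toward order $1$. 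Its distance-$2$ neighbours $z$ then receive mutational input $\mu\cdot\xi^\mu_t(y)$, which increases from order $\mu^2$ up to order $\mu$ over the growth phase. Concretely, the paper derives the bound
\[
\xi^\mu_{t\ln\frac{1}{\mu}}(z)\leq C\mu^{2-t\eta\hat{C}}\Big(\big(1+t\ln\tfrac{1}{\mu}\big)^2\mu^{-t\max_{y\sim x^0}f_{y,x^0}}+1\Big),
\]
which is not uniformly $O(\mu^2)$ at all: it shows that $z$ reaches order $\mu$ (and hence crosses the threshold) at time $t\approx 1/\max_{y\sim x^0}f_{y,x^0}=T_1$ on the $\ln(1/\mu)$ scale. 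The paper's actual argument is therefore not that distance-$2$ types stay small, but that the time at which they first cross $\bar{\xi}\mu$ coincides (up to $O(\eta)$) with $\tilde{T}^\mu_{\eta,1}$, so that up to that moment only direct neighbours are active and the winner is determined as claimed.

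This same misconception contaminates your reset step. At the start of phase $i+1$, previously-fit neighbours $y\sim x^{i-1}$ with $r(x^{i-1})<r(y)<r(x^i)$ sit at intermediate orders $\mu^{1-f_{y,x^{i-1}}/f_{x^i,x^{i-1}}}$ strictly between $\mu$ and $1$; since such $y$ are at distance $2$ from $x^i$, your claimed inductive hypothesis (distance-$\geq 2$ types at $O(\mu^2)$) fails immediately. The paper instead argues that these leftover types satisfy $r(y)<r(x^i)$, hence are unfit relative to the new resident, decay, and eventually drop back below threshold without ever pushing further neighbours above it --- so they can be discarded from the bookkeeping rather than fitted into a uniform $O(\mu^2)$ bound.
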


\begin{remark}
(i) $\lambda_y\geq 2$, for $|y-x^0|\geq2$, ensures that no microscopic type has a larger initial population that what it gains due to the first incoming mutants from other types.

(ii) The adaptive walk in Theorem \ref{AW} stops as soon as it reaches a local maximum of the individual fitness $r$ since only direct neighbours of the resident type can be reached. Local maxima do not need to be strict. However, as in the previous cases, mutants with invasion fitness $0$ cannot invade the resident population.

(iii) It is no longer the case that every step increases the distance to $x^0$. The walk could return to a type close to $x^0$, which just could not be reached before because one had to go around a valley in the fitness landscape defined by $r$.
\end{remark}

In Section 6, we discuss the proof of Theorem \ref{AW}, as well as the intermediate cases of $1<\ell<n$.

%%%%%%%%%%%%%%%%%%%%%%%%%%%%%%%%%%%%%%%%%%%%%%%%%%%%%%%%%%%%%%%%%%%%%%%%%%%%%%%%%%%%%%%%

\subsection{Structure of the proofs}

The general strategy of the proofs of all three theorems is to split the analysis of the evolution into two parts. First, the microscopic mutants grow in the presence of the coexisting resident types until one of them reaches a macroscopic population size of order $1$, i.e.\ that does not vanish as $\mu\to0$. Second, this macroscopic mutant and the resident types attain a new equilibrium according to the Lotka-Volterra dynamics. The two phases are visualised in Figure 2, found in Section 4, prior to the proof of Theorem \ref{MainThm}.

The first phase is studied in detail in Section 3. Theorem \ref{Thmexp} gives upper and lower bounds for the exponential growth of the non-resident types. The growth can be due to a type's own (invasion) fitness or due to mutants from a growing neighbour. To get the correct approximation, the influences of all existing types have to be summed up. Meanwhile, the resident types stay close to their equilibrium. Corollary \ref{Corexp} considers the $\ln 1/\mu$-time scale and derives an approximation for the first time that a mutant reaches the macroscopic threshold.

After the threshold is reached, for the second phase, we can apply Theorem \ref{LVThm} to the Lotka-Volterra system involving the macroscopic mutant type and the resident types to derive the convergence to a new equilibrium state. This is possible since we now have a non-negative initial condition that does not vanish as $\mu\to0$.

In Section 4, this theorem is combined with Theorem \ref{Thmexp}, or rather Corollary \ref{Corexp}, to analyse the full evolution of our system (\ref{DE}). First, in Lemma \ref{cont}, the dependence of solutions on the initial condition and the size of $\mu$ is studied to be able to approximate the full system by the Lotka-Volterra system only involving the macroscopic types. Second, in Lemma \ref{conttau}, continuity of the duration of the Lotka-Volterra phase in the initial condition is shown. From this, a uniform bound on the time to reach the initial conditions of Theorem \ref{Thmexp} again is derived.
All of this is then combined to show the convergence in Theorem \ref{MainThm}, one invasion step at a time, and recursively describe the limiting process. 

To prove Theorem \ref{EqComp}, only slight changes have to be made. Since assumption (B$_\x$) is not satisfied, Theorem \ref{LVThm} can no longer be applied directly. However, the assumption is mainly needed to show uniqueness of the limiting equilibrium, which is, in this case, already implied by the structure of the individual fitness landscape. The rest of the proof, found in Section 5, is then devoted to simplifying the expressions for $y^i_*$, $T_i$, and $\rho^i_y$.

In Section 6, Theorem \ref{AW} is proved. Here, the bounds from Theorem \ref{Thmexp} have to be revised. The rest of the argument follows the previous proofs.

%%%%%%%%%%%%%%%%%%%%%%%%%%%%%%%%%%%%%%%%%%%%%%%%%%%%%%%%%%%%%%%%%%%%%%%%%%%%%%%%%%%%%%%%%%%%%%%%%%%%%%%%%%%%%%%%%%%%%%%%%%%%%%%%%%%%%%%%%%%%%%%%%%%%%%%%%%%%%%%%%%%%%%%%%%%%%%%%%%%%%%%%

\section{Invasion Analysis}

In this section, we prove an exponential approximation of the growth of the non-resident subpopulations until the first type reaches a macroscopic threshold of order 1. We choose this threshold to be at $\eta>0$, independent of $\mu$, and pick $\eta$ small enough for our purposes in the end.
\begin{definition}
For a resident population of $\x\subset\Hn$, the time when the first mutant type reaches $\eta>0$ is defined as
\begin{align}
\tilde{T}^\mu_\eta&:=\inf\{s\geq 0:\exists\ y\in\Hn\backslash\x:\xi^\mu_s(y)>\eta\}.
\end{align}
To consider the evolutionary time scale $\ln1/\mu$, we define $T^\mu_\eta$ through $\tilde{T}^\mu_\eta=T^\mu_\eta\ln1/\mu$.
\end{definition}

We can now state the first result that describes the evolution of the system until $\tilde{T}^\mu_\eta$.
\begin{theorem}\label{Thmexp}
Consider the system of differential equations (\ref{DE}) and assume (A). Then there exist $\tilde{\eta}>0$ and $0<\bar{c}\leq\bar{C}$, uniform in all $\x\subset\Hn$ for which $\text{LVE}_+(\x)=\{\bar{\xi}_\x\}$ and (B$_\x$) is satisfied, such that for $\eta\leq\tilde{\eta}$ and $\mu<\eta$ the following holds:\\
If $\xi^\mu_0\in\text{IC}(\x,\eta,\bar{c})$, then, for every $0<t_0\leq t<\tilde{T}^\mu_\eta$ and every $y\in\Hn$,
\begin{align}
\check{c}\sum_{z\in\Hn}\ee^{t(f_{z,\x}-\eta\check{C})}\mu^{\rho_z+|z-y|}\leq\xi^\mu_t(y)\leq \hat{c}\sum_{z\in\Hn}\ee^{t(f_{z,\x}+\eta\hat{C})}\mu^{\rho_z+|z-y|}(1+t)^m, \label{preexp}
\end{align}
where $\rho_y:=\min_{z\in\Hn}(\lambda_z+|z-y|)$, $m\in\N$, and $0<\check{c},\check{C},\hat{c},\hat{C}<\infty$ are independent of $\mu$ and $\eta$ (but dependent on $t_0$).\\
Moreover, for all $x\in\x$,
\begin{align}
\xi^\mu_t(x)\in[\bar{\xi}_\x(x)-\eta\bar{C},\bar{\xi}_\x(x)+\eta\bar{C}].\label{equieta}
\end{align}
\end{theorem}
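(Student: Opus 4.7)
The plan is to split the analysis into two regimes: the resident types $x\in\x$, which remain near the coexistence equilibrium, and the non-resident types $y\in\Hn\setminus\x$, which stay microscopic and evolve essentially linearly. By definition of $\tilde T^\mu_\eta$, no non-resident exceeds $\eta$ on $[0,\tilde T^\mu_\eta)$, which keeps the two analyses consistent with one another.

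For (\ref{equieta}) I would adapt the Lyapunov argument behind Theorem \ref{LVThm}. Restricting (\ref{DE}) to the coordinates in $\x$, the evolution differs from the pure Lotka--Volterra system (\ref{LV}) by (i) the mutation terms, of order $O(\mu)$, and (ii) the competition contribution $\sum_{y\notin\x}\alpha(\cdot,y)\xi^\mu_t(y)$, of order $O(\eta)$ on $[0,\tilde T^\mu_\eta)$. Differentiating $L(\xi^\mu_t|_\x)$ thus yields the non-positive main term from the paper's Lyapunov computation plus a perturbation of size $O(\eta+\mu)\cdot\|\xi^\mu_t|_\x-\bar\xi_\x\|_\x$. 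Combined with the quadratic lower bound $L(\xi)-L(\bar\xi_\x)\geq c\,\|\xi-\bar\xi_\x\|_\x^2$ near $\bar\xi_\x$ coming from (\ref{posdef}), this forces $\|\xi^\mu_t|_\x-\bar\xi_\x\|_\x$ to decrease as soon as it exceeds a constant multiple of $\eta+\mu$, and a standard continuation argument then gives (\ref{equieta}) with $\bar C$ uniform in $\x$.

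Given (\ref{equieta}), for any $y\in\Hn$ the equation (\ref{DE}) reduces, up to a relative perturbation of size $O(\eta)$, to
$$\tfrac{d}{dt}\xi^\mu_t(y)=\bigl(f_{y,\x}+\varepsilon_y^\mu(t)\bigr)\xi^\mu_t(y)+\mu\sum_{z\sim y}b(z)m(z,y)\xi^\mu_t(z)-\mu b(y)\xi^\mu_t(y),$$
with $|\varepsilon_y^\mu(t)|\leq\eta\hat C'$. I would then introduce linear majorant and minorant systems
$$\dot u^\pm(y)=(f_{y,\x}\pm\eta\hat C)u^\pm(y)+\mu\sum_{z\sim y}b(z)m(z,y)u^\pm(z),$$
apply the Kamke cooperative comparison theorem (off-diagonal entries are non-negative and $\xi^\mu_t\geq 0$) to sandwich $\xi^\mu_t(y)$ between $u^-(y)$ and $u^+(y)$, and solve $u^\pm(t)=\exp\bigl(t(D^\pm+\mu M)\bigr)u^\pm(0)$ via a Duhamel expansion in the off-diagonal perturbation $\mu M$. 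In this expansion the $k=|z-y|$ term, supported on shortest hypercube paths from $z$ to $y$, contributes a quantity of order $\mu^{|z-y|}\xi^\mu_0(z)=O(\mu^{\lambda_z+|z-y|})$ multiplied by an iterated integral of exponentials, which I bound from above by $e^{t(f_{z,\x}+\eta\hat C)}(1+t)^m$ (with $m$ depending only on $n$) and from below by a constant times $e^{t(f_{z,\x}-\eta\check C)}$ on $[t_0,\tilde T^\mu_\eta)$. Summing over $z$ and using that $\min_z(\lambda_z+|z-y|)=\rho_y$ yields (\ref{preexp}).

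The main technical obstacle is the interdependence of (\ref{equieta}) and (\ref{preexp}): the Lyapunov step for residents relies on non-residents being of order $\eta$, while the linear comparison for non-residents relies on residents staying near equilibrium. I would close this loop by a bootstrap on the largest time $t^\star\leq\tilde T^\mu_\eta$ for which both bounds hold simultaneously, choosing $\eta,\mu$ small enough that the quadratic Lyapunov decay strictly dominates its $O(\eta+\mu)$ perturbation and that the upper bound in (\ref{preexp}) remains strictly below $\eta$ by the very definition of $\tilde T^\mu_\eta$. Continuity of the solution to (\ref{DE}) in $t$ then upgrades $t^\star$ to $\tilde T^\mu_\eta$. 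The lower restriction $t\geq t_0$ enters because some reachable sites may start out below their asymptotic power of $\mu$; a short initial transient is needed for the mutation cascade to realise every $y$ at its stationary level $\mu^{\rho_y}$ from which the bounds in (\ref{preexp}) become valid.
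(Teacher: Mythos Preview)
Your proposal is correct and shares the paper's two-step structure: first pin the residents near $\bar\xi_\x$, then derive two-sided exponential bounds for all types. The paper, however, implements both steps somewhat differently.

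For (\ref{equieta}) the paper does not differentiate $L$ but rather $\norm{\xi^\mu_t|_\x-\bar\xi_\x}_\x^2$ directly, splitting $\tfrac{d}{dt}\xi^\mu_t|_\x$ into the Lotka--Volterra part $F$ and a perturbation $V$ with $\norm{V}_\x\leq C\eta\norm{\xi^\mu_t|_\x}_\x$; a perturbed positive-definiteness estimate on $(\theta_x\alpha(x,y))$ then gives $\langle\xi^\mu_t|_\x-\bar\xi_\x,F\rangle_\x\leq -\tfrac{\kappa}{2}\norm{\xi^\mu_t|_\x-\bar\xi_\x}^2$, so the distance decreases on an explicit annulus and the solution cannot escape. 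Your Lyapunov-with-perturbation argument achieves the same end by a closely related route.

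For (\ref{preexp}) the paper does not invoke Kamke comparison or a Duhamel/Dyson expansion. Instead it works directly with the scalar differential inequalities $\tfrac{d}{dt}\xi^\mu_t(y)\lessgtr(f_{y,\x}\pm\eta C)\xi^\mu_t(y)+\mu C\sum_{z\sim y}\xi^\mu_t(z)$ and runs an induction on a depth parameter $m$: at stage $m$ one substitutes the stage-$(m{-}1)$ bound into the neighbour sum and applies Gronwall, picking up one extra factor of $(1+t)$ whenever two exponential rates along the chain coincide. After $m\geq n$ steps the residual $\mu^{m+1}$ terms are absorbed. Your linear-comparison plus matrix-exponential expansion is a legitimate and arguably more systematic alternative that yields the same bounds; the paper's inductive Gronwall is more elementary and sidesteps controlling the full series tail.

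One simplification you should make: the bootstrap you describe at the end is unnecessary. On $[0,\tilde T^\mu_\eta)$ every non-resident is $\leq\eta$ \emph{by definition} of $\tilde T^\mu_\eta$, so Step~1 holds outright on that interval with no appeal to Step~2; Step~2 then uses Step~1 as input. There is no circularity to close.
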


As a Corollary, we estimate the growth of the different subpopulations on the time scale $\ln 1/\mu$ and derive the asymptotics of $T^\mu_\eta$ as $\mu\to 0$.
\begin{corollary}\label{Corexp}
Under the same assumptions as in Theorem \ref{Thmexp} and with the same constants, we obtain that, for every $y\in\Hn$ and every $t_0\leq t\ln1/\mu\leq \tilde{T}^\mu_\eta$,
\begin{align}
\check{c}\mu^{\min_{z\in\Hn}[\rho_z+|z-y|-t(f_{z,\x}-\eta\check{C})]}&\leq\xi^\mu_{t\ln\frac{1}{\mu}}(y)\leq 2^n\hat{c}\mu^{\min_{z\in\Hn}[\rho_z+|z-y|-t(f_{z,\x}+\eta\hat{C})]}\left(1+t\ln\frac{1}{\mu}\right)^m.\label{exp}
\end{align}
Moreover, as long as there is a $y\in\Hn$ for which $f_{y,\x}>0$, there is an $\bar{\eta}\leq\tilde{\eta}$ such that for every $\eta\leq\bar{\eta}$
\begin{align}
\min_{\substack{y\in\Hn\\ \lambda_y>0}}\min_{\substack{z\in\Hn\\f_{z,\x}>0}}\frac{\rho_z+|z-y|}{f_{z,\x}+\eta\hat{C}}\leq\liminf_{\mu\to0}T^\mu_\eta\leq\limsup_{\mu\to0}T^\mu_\eta\leq\min_{\substack{y\in\Hn\\ \lambda_y>0}}\min_{\substack{z\in\Hn\\f_{z,\x}>0}}\frac{\rho_z+|z-y|}{f_{z,\x}-\eta\check{C}}.\label{stappr}
\end{align}
\end{corollary}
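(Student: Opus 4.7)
My plan is to derive (\ref{exp}) by a direct substitution of the macroscopic time $t\mapsto t\ln(1/\mu)$ in Theorem~\ref{Thmexp}, using the identity $\ee^{t\ln(1/\mu)a}=\mu^{-ta}$ to convert each exponential into a power of $\mu$. For the lower bound, I would keep only the largest summand of $\sum_z\ee^{t\ln(1/\mu)(f_{z,\x}-\eta\check C)}\mu^{\rho_z+|z-y|}$, which corresponds to taking the smallest exponent, i.e.\ $\min_z[\rho_z+|z-y|-t(f_{z,\x}-\eta\check C)]$. For the upper bound, the sum contains at most $|\Hn|=2^n$ terms and each is bounded by the maximum, producing the prefactor $2^n\hat c$; the polylog factor $(1+t\ln(1/\mu))^m$ is untouched by the substitution.

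To deduce the upper bound on $\limsup_{\mu\to 0}T^\mu_\eta$ in (\ref{stappr}), I call $T^*$ the right-hand side and pick a minimising pair $(y^*,z^*)$. Because $\lambda_{y^*}>0$, the type $y^*$ lies in $\Hn\setminus\x$. Plugging $t=T^*+\delta$ into the lower bound of (\ref{exp}) gives
\begin{align*}
\xi^\mu_{(T^*+\delta)\ln(1/\mu)}(y^*)\geq\check{c}\,\mu^{\rho_{z^*}+|z^*-y^*|-(T^*+\delta)(f_{z^*,\x}-\eta\check C)}=\check{c}\,\mu^{-\delta(f_{z^*,\x}-\eta\check C)}\xrightarrow[\mu\to 0]{}\infty,
\end{align*}
so $\xi^\mu_{(T^*+\delta)\ln(1/\mu)}(y^*)>\eta$ for $\mu$ sufficiently small, forcing $T^\mu_\eta\leq T^*+\delta$. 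Letting $\delta\downarrow 0$ yields the desired upper bound.

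To derive the lower bound on $\liminf_{\mu\to 0}T^\mu_\eta$, I denote by $T^{**}$ the left-hand side of (\ref{stappr}). I would show that, for any $\delta>0$, any $t<T^{**}-\delta$ and any $y\in\Hn\setminus\x$, the upper bound in (\ref{exp}) forces $\xi^\mu_{t\ln(1/\mu)}(y)<\eta$ once $\mu$ is small, whence $T^\mu_\eta\geq T^{**}-\delta$. This reduces to proving a uniform strict positive lower bound on $\min_z[\rho_z+|z-y|-t(f_{z,\x}+\eta\hat C)]$, since any such bound beats the polylog factor $(1+t\ln(1/\mu))^m$. For $z$ with $f_{z,\x}>0$, the initial conditions force $\lambda_z>0$ (otherwise $\text{IC}(\x,\eta,\bar c)$ would demand $f_{z,\x}<0$), so plugging $y'=z$ into the definition of $T^{**}$ yields $T^{**}\leq\rho_z/(f_{z,\x}+\eta\hat C)$; combined with $|z-y|\geq0$ and $t\leq T^{**}-\delta$, the exponent is at least $\delta(f_{z,\x}+\eta\hat C)>0$. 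For $z\in\x$ one has $\rho_z=f_{z,\x}=0$, so the exponent reduces to $|z-y|-t\eta\hat C\geq 1-t\eta\hat C$, positive for $\bar\eta$ small. In the remaining case $f_{z,\x}\leq 0$ and $z\notin\x$, I would split: if $z=y$ with $\lambda_y=0$ then $\rho_z=0$ and $f_{y,\x}<0$, so the contribution is $-t(f_{y,\x}+\eta\hat C)>0$ for $\eta$ small; if $z\neq y$ then $|z-y|\geq 1$ and $-t(f_{z,\x}+\eta\hat C)\geq -t\eta\hat C$, again giving a positive exponent.

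The principal obstacle lies in this last case analysis: the non-resident types $y$ with $\lambda_y=0$ are excluded from the minimisation defining $T^{**}$, and the naive triangle estimate $\rho_z+|z-y|\geq\rho_y$ collapses to $0$ for them. Exploiting the initial-condition assumption $f_{y,\x}<0$ to control the $z=y$ term, and the mutational distance $|z-y|\geq 1$ for the off-diagonal terms, is exactly the mechanism by which these unfit types never trigger $T^\mu_\eta$. The uniform choice of $\bar\eta\leq\tilde\eta$ (small enough to make each of the lower bounds above strictly positive, uniformly in $\x$) then follows from the uniformity of the constants $\check c,\check C,\hat c,\hat C$ in Theorem~\ref{Thmexp}, after which the remaining steps of the argument are routine.
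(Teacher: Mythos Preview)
Your derivation of (\ref{exp}) and your contradiction argument for the upper bound on $\limsup_{\mu\to0}T^\mu_\eta$ are correct and essentially match the paper (the paper works directly at time $T^\mu_\eta$ using $\xi^\mu_{\tilde T^\mu_\eta}(y)=\eta$, but the content is the same).

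The gap is in your lower bound on $\liminf_{\mu\to0}T^\mu_\eta$, precisely in the case you flag as the principal obstacle. For a non-resident $y$ with $\lambda_y=0$, the diagonal term $z=y$ gives exponent $-t(f_{y,\x}+\eta\hat C)$. This is indeed positive for each fixed $t>0$, but it is \emph{not bounded away from zero uniformly in $t$}: as $t\downarrow0$ the exponent vanishes. Consequently the upper bound in (\ref{exp}) only yields $\xi^\mu_{t\ln(1/\mu)}(y)\leq 2^n\hat c\,\mu^{o(1)}(1+t\ln(1/\mu))^m$, which does not force $\xi^\mu<\eta$ since $2^n\hat c$ is typically much larger than $\eta$. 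In other words, (\ref{exp}) alone cannot exclude that such a type triggers $\tilde T^\mu_\eta$ at a time $T^\mu_\eta\to 0$. The pointwise positivity of the exponent is not enough; you need a uniform positive lower bound, and for these types you do not have one.

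The paper closes this gap by going \emph{back into the proof} of Theorem~\ref{Thmexp} rather than using only its statement. From the first Gronwall step one has the sharper estimate
\[
\xi^\mu_t(y)\leq \ee^{t(f_{y,\x}+\eta\hat C)}C_y\mu^{\lambda_y}+\frac{\mu\tilde C C}{|f_{y,\x}+\eta\hat C|}\bigl(1-\ee^{t(f_{y,\x}+\eta\hat C)}\bigr),
\]
which for $\lambda_y=0$, $C_y\leq\eta/3$, $f_{y,\x}+\eta\hat C<0$ gives $\xi^\mu_t(y)\leq \eta/3+O(\mu)\leq 2\eta/3$ uniformly in $t$. This shows directly that unfit macroscopic types never reach $\eta$, so the type determining $T^\mu_\eta$ necessarily has $\lambda_y>0$; the case analysis can then proceed (and the paper again evaluates (\ref{exp}) at $t=T^\mu_\eta$ rather than at a generic $t$, which sidesteps the remaining uniformity issues). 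Your proposed mechanism is morally right, but it cannot be executed from (\ref{exp}) alone---you need this finer input.
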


\begin{proof}[Theorem \ref{Thmexp}]
The proof consists of two steps. We only derive the existence of $\tilde{\eta}$ for a specific set $\x$. To get a uniform parameter, we just have to minimise over the finite set of all such sets $\x$.

First, we show that (\ref{equieta}) holds up to time $\tilde{T}^\mu_\eta$.
Second, we inductively prove the upper bound in (\ref{preexp}). The lower bound can derived analogously.
%Next, we prove that, for every $0<t_0<\tilde{T}^\mu_\eta$ and for every $y\in\Hn$,
%\begin{align}
%\xi^\mu_{\frac{t_0}{2}}(y)\geq c_{t_0}\mu^{\rho_y},
%\end{align}
%for some $c_{t_0}>0$ independent of $\mu$, $\eta$, and $y$.
%
%Finally, we prove the claim of the theorem.

\noindent\textit{Step 1:} $\xi^\mu_t(x)\in[\bar{\xi}_\x(x)-\eta\bar{C},\bar{\xi}_\x(x)+\eta\bar{C}]$.\\
To prove our first claim, we analyse the distance of $\left.\xi^\mu_t\right|_\x:=(\xi^\mu_t(x))_{x\in\x}$ from $\bar{\xi}_\x$ with respect to the norm $\norm{\cdot}_\x$, defined in (\ref{defnorm}). We prove that, in an annulus with respect to the norm $\norm{\cdot}_\x$, this distance declines. Hence, starting inside the annulus, $\left.\xi^\mu_t\right|_\x$ will remain there. This argument is depicted in Figure 1.

To approximate
\begin{align}
\frac{d}{dt}\frac{\norm{\left.\xi^\mu_t\right|_\x-\bar{\xi}_\x}_\x^2}{2}=\left\langle\left.\xi^\mu_t\right|_\x-\bar{\xi}_\x,\frac{d}{dt}(\left.\xi^\mu_t\right|_\x-\bar{\xi}_\x)\right\rangle_\x
\end{align}
from above, we split the right hand side of (\ref{DE}) into two parts.

We define $F,V:\mathcal{M}(\Hn)\to\R^\x$,
\begin{align}
F_x(\xi)=\left[r(x)-\sum_{y\in\x}\alpha(x,y)\xi(y)\right]\xi(x),\quad x\in\x,
\end{align}
the Lotka-Volterra part, and 
\begin{align}
V_x(\xi)=-\sum_{y\in\Hn\backslash\x}\alpha(x,y)\xi(y)\xi(x)
+\mu\sum_{y\sim x}b(y)m(y,x)\xi(y)-\mu b(x)\xi(x),\quad x\in\x,
\end{align}
the error part of the differential equation.

With this,
\begin{align}
\left\langle\left.\xi^\mu_t\right|_\x-\bar{\xi}_\x,\frac{d}{dt}(\left.\xi^\mu_t\right|_\x-\bar{\xi}_\x)\right\rangle_\x
=\langle\left.\xi^\mu_t\right|_\x-\bar{\xi}_\x,F(\xi^\mu_t)\rangle_\x
+\langle\left.\xi^\mu_t\right|_\x-\bar{\xi}_\x,V(\xi^\mu_t)\rangle_\x.
\end{align}

We first approximate the norm of the error part, using that $|\xi^\mu_t(y)|\leq\eta$ for $y\in\Hn\backslash\x$. In addition, we assume that, for every $x\in\x$, $\xi^\mu_t(x)\geq\eta$. We choose $\eta$ such that this is always implied by (\ref{equieta}) at the end of Step 1.

We estimate
\begin{align}
|V_x(\xi^\mu_t)|\leq \eta2^n\max_{x\in\x,y\in\Hn}\alpha(x,y)|\xi^\mu_t(x)|+\mu n\max_{y\in\Hn}b(y)\max_{y\in\Hn}|\xi^\mu_t(y)|+\mu\max_{y\in\Hn}b(y)|\xi^\mu_t(x)|
\end{align}
and hence, using that $\max_{y\in\Hn}|\xi^\mu_t(y)|\leq\norm{\left.\xi^\mu_t\right|_\x}\leq c_\x^{-1}\norm{\left.\xi^\mu_t\right|_\x}$,
\begin{align}\label{errorest}
\norm{V(\xi^\mu_t)}_\x\leq&\ \eta2^n\max_{x\in\x,y\in\Hn}\alpha(x,y)\norm{\left.\xi^\mu_t\right|_\x}_\x\notag\\
&\ +\mu\max_{y\in\Hn}b(y)\left(n\sqrt{|\x|\max_{x\in\x}\frac{\theta_x}{\bar{\xi}_\x(x)}\max_{y\in\Hn}|\xi^\mu_t(y)|^2}+\norm{\left.\xi^\mu_t\right|_\x}_\x\right)\notag\\
\leq&\ \eta2^n\max_{x\in\x,y\in\Hn}\alpha(x,y)\norm{\left.\xi^\mu_t\right|_\x}_\x+\mu\max_{y\in\Hn}b(y)\left(n\sqrt{|\x|} C_\x c_\x^{-1}+1\right)\norm{\left.\xi^\mu_t\right|_\x}_\x\notag\\
\leq&\ \eta\norm{\left.\xi^\mu_t\right|_\x}_\x C,
\end{align}
for some $C<\infty$ independent of $\eta$ and $\mu$.

Next, we approximate the Lotka-Volterra part. To do so, we show that a slight perturbation of the positive definite matrix $(\theta_x\alpha(x,y))_{x,y\in\x}$ is still positive definite. Let $\zeta\in\R^\x$ such that, for $x\in\x$, $|\zeta(x)-1|\leq \tilde{\eps}_\x$. Then
\begin{align}
\sum_{x,y\in\x}\zeta(x)\theta_x\alpha(x,y)u(x)u(y)\notag&=\sum_{x,y\in\x}\theta_x\alpha(x,y)u(x)u(y)+\sum_{x,y\in\x}(\zeta(x)-1)\theta_x\alpha(x,y)u(x)u(y)\notag\\
&\geq \kappa\norm{u}^2-\max_{x\in\x}|\zeta(x)-1|\max_{x,y\in\x}(\theta_x\alpha(x,y))\sum_{x,y\in\x} |u(x)||u(y)|\notag\\
&\geq \norm{u}^2\big[\kappa-\tilde{\eps}_\x|\x|^2\max_{x,y\in\x}\theta_x\alpha(x,y)\big]\geq\frac{\kappa}{2}\norm{u}^2,
\end{align}
as long as $\tilde{\eps}_\x\leq\kappa(2|\x|^2\max_{x,y\in\x}\theta_x\alpha(x,y))^{-1}$.

We now apply this to $\zeta(x)=\xi^\mu_t(x)/\bar{\xi}_\x(x)$. The condition $|\zeta(x)-1|\leq\tilde{\eps}_\x$ is satisfied whenever 
\begin{align}
|\xi^\mu_t(x)-\bar{\xi}_\x(x)|\leq\tilde{\eps}_\x\bar{\xi}_\x(x),
\end{align}
which is the case if
\begin{align}
\norm{\left.\xi^\mu_t\right|_\x-\bar{\xi}_\x}_\x \leq c_\x\tilde{\eps}_\x\min_{x\in\x}\bar{\xi}_\x(x)=:\eps_\x.
\end{align}

Using the fact that $\bar{\xi}_\x$ is an equilibrium of (\ref{equil}) for which $\bar{\xi}_\x(x)>0$ holds for all $x\in\x$, we derive
\begin{align}\label{LVest}
\langle\left.\xi^\mu_t\right|_\x-\bar{\xi}_\x,F(\xi^\mu_t)\rangle_\x 
=&\ \sum_{x\in\x} \frac{\theta_x}{\bar{\xi}_\x(x)}(\xi^\mu_t(x)-\bar{\xi}_\x(x))
 \left[r(x)-\sum_{y\in\x}\alpha(x,y)\xi^\mu_t(y)\right]\xi^\mu_t(x)\notag\\
=&\ -\sum_{x\in\x} \frac{\theta_x}{\bar{\xi}_\x(x)}(\xi^\mu_t(x)-\bar{\xi}_\x(x))
 \left[\sum_{y\in\x}\alpha(x,y)(\xi^\mu_t(y)-\bar{\xi}_\x(y))\right]\xi^\mu_t(x)\notag\\
=&\ -\sum_{x,y\in\x}\frac{\xi^\mu_t(x)}{\bar{\xi}_\x(x)}\theta_x\alpha(x,y)(\xi^\mu_t(x)-\bar{\xi}_\x(x))(\xi^\mu_t(y)-\bar{\xi}_\x(y))\notag\\
\leq&\ -\frac{\kappa}{2}\norm{\left.\xi^\mu_t\right|_\x-\bar{\xi}_\x}^2.
\end{align}

Combining estimates (\ref{errorest}) and (\ref{LVest}), we get
\begin{align}
\frac{d}{dt}\frac{\norm{\left.\xi^\mu_t\right|_\x-\bar{\xi}_\x}_\x ^2}{2}
 =&\ \langle\left.\xi^\mu_t\right|_\x-\bar{\xi}_\x,F(\xi^\mu_t)\rangle_\x 
 +\langle\left.\xi^\mu_t\right|_\x-\bar{\xi}_\x,V(\xi^\mu_t)\rangle_\x \notag\\
\leq&\ -\frac{\kappa}{2}\norm{\left.\xi^\mu_t\right|_\x-\bar{\xi}_\x}^2
 +\norm{\left.\xi^\mu_t\right|_\x-\bar{\xi}_\x}_\x \norm{V(\xi^\mu_t)}_\x \notag\\
\leq&\ -\frac{\kappa}{2}\norm{\left.\xi^\mu_t\right|_\x-\bar{\xi}_\x}^2
 +\norm{\left.\xi^\mu_t\right|_\x-\bar{\xi}_\x}_\x \eta\norm{\left.\xi^\mu_t\right|_\x}_\x C\notag\\
\leq&\ -\norm{\left.\xi^\mu_t\right|_\x-\bar{\xi}_\x}_\x ^2\left(\frac{\kappa}{2C_\x^2}-\eta\frac{C\norm{\left.\xi^\mu_t\right|_\x}_\x }{\norm{\left.\xi^\mu_t\right|_\x-\bar{\xi}_\x}_\x }\right)\notag\\
\leq&\ -\norm{\left.\xi^\mu_t\right|_\x-\bar{\xi}_\x}_\x ^2\frac{\kappa}{4C_\x^2}<0,\label{negderiv}
\end{align}
whenever
\begin{align}
\eps_\x\geq\norm{\left.\xi^\mu_t\right|_\x-\bar{\xi}_\x}_\x\geq\eta C\norm{\left.\xi^\mu_t\right|_\x}_\x\frac{4C_\x^2}{\kappa}\geq\eta C(\norm{\bar{\xi}_\x}_\x-\eps_\x)\frac{4C_\x^2}{\kappa}=:\eta \underline{c}.
\end{align}

%Finally, we define $\eta_0:=\tilde{\eps}_\x/c_{2,\x}$. Then as long as $\eta\leq\eta_0$ and
Finally, we choose $\tilde{\eta}$ small enough such that $\tilde{\eta}<\eps_\x/\underline{c}$.

Now we can follow the argument that was outlined in the beginning and is supported by Figure 1. As long as $\eta\leq\tilde{\eta}$ and
\begin{align}
\norm{\left.\xi^\mu_0\right|_\x-\bar{\xi}_\x}\leq\eta \underline{c}C_\x^{-1}=:\eta\bar{c}_\x,
\end{align}
we obtain that $\norm{\left.\xi^\mu_0\right|_\x-\bar{\xi}_\x}_\x\leq\eta c$. Because of (\ref{negderiv}), we obtain that $\norm{\left.\xi^\mu_t\right|_\x-\bar{\xi}_\x}_\x\leq\eta \underline{c}$, for every $0\leq t\leq \tilde{T}^\mu_\eta$, and hence
\begin{align}
\norm{\left.\xi^\mu_t\right|_\x-\bar{\xi}_\x}\leq\eta \underline{c}c_\x^{-1}=: \eta \bar{C}_\x.
\end{align}
For the single types, this implies, for every $0\leq t\leq \tilde{T}^\mu_\eta$, that
\begin{align}
\xi^\mu_t(x)\in[\bar{\xi}_\x(x)-\eta \bar{C}_\x,\bar{\xi}_\x(x)+\eta \bar{C}_\x],\quad x\in\x,
\end{align}
whenever
\begin{align}
\xi^\mu_0(x)\in\left[\bar{\xi}_\x(x)-\eta\frac{\bar{c}_\x}{\sqrt{|\x|}},\bar{\xi}_\x(x)+\eta\frac{\bar{c}_\x}{\sqrt{|\x|}}\right],\quad x\in\x.
\end{align}
Setting $\bar{c}:=\min_{\textbf{y}\subset\Hn}\bar{c}_\textbf{y}$ and $\bar{C}:=\max_{\textbf{y}\subset\Hn}\bar{C}_\textbf{y}$, and
%$\eta_1:=\eta_0\land\min_{1\leq i\leq l}\bar{\xi}_\x(x_i)/(2\bar{C}+2)$ 
choosing $\tilde{\eta}\leq\min_{x\in\x}\bar{\xi}_\x(x)/(2\bar{C}+2)$ to ensure that $\xi^\mu_t(x)>\eta$, for every $x\in\x$, we arrive at the claim.

\begin{figure}[h]
\centering
 \scriptsize
 \begin{tikzpicture}
 \draw[thick] (0,0) ellipse (2cm and 1cm);
 \draw[thick] (0,0) ellipse (3cm and 1.5cm);
 \draw[thick,->] (-2.9,0)--(-2.1,0);
 \draw[thick,->] (0,1.4)--(0,1.1);
 \draw[thick,->] (2.9,0)--(2.1,0);
 \draw[thick,->] (0,-1.4)--(0,-1.1);
 \draw[thick,dashed] (0,0) circle (1);
 \draw[thick,dashed] (0,0) circle (2);
 \fill[black] (0,0) circle (0.05) node[anchor=north]{$\bar{\xi}_\x$};
 \fill[black] (-0.5,0.2) circle (0.05) node[anchor=north]{$\left.\xi^\mu_0\right|_\x$};
 \draw (-0.5,0.2) edge[out=25,in=170,->,thick,dotted] (1.4,0.4);
 \fill[black] (1.4,0.4) circle (0.05) node[anchor=north]{$\left.\xi^\mu_t\right|_\x$};
 \end{tikzpicture}
 \caption{Scheme for the argument in Step 1. Dashed lines indicate balls $B(\bar{\xi}_\x,\eta\bar{c}_\x)$ and $B(\bar{\xi}_\x,\eta\bar{C}_\x)$ with respect to the standard Euclidean norm, while solid lines correspond to balls $B_\x(\bar{\xi}_\x,\eta c)$ and  $B_\x(\bar{\xi}_\x,\eps_\x)$ with respect to the $\norm{\cdot}_\x$ norm.}
 \end{figure}
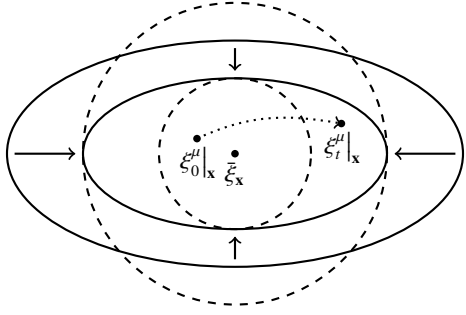

%%%%%%%%%%%%%%%%%%%%%%%%%%%%%%%%%%%%%%%%%%%%%%%%%%%%%%%%%%%%%%%%%%%%%%%%%%%%%%%%%%%%%%%
%
\noindent\textit{Step 2:} Inductive exponential bounds.\\
We derive the upper bound for $\xi^\mu_t(y)$ in (\ref{preexp}) in full length. At the end of the proof, we comment on how the same strategy can be adapted to the lower bound. 

To begin, we establish an upper bound on $\tfrac{d}{dt}\xi^\mu_t$.
\begin{align}
\tfrac{d}{dt}\xi^\mu_t(y)&\leq\Big[r(y)-\sum_{x\in\x}\alpha(y,x)(\bar{\xi}_\x(x)-\eta\bar{C} )\Big]\xi^\mu_t(y)+\mu\sum_{z\sim y}\underbrace{b(z)m(z,y)}_{\leq \tilde{C}_y\forall z\sim y}\xi^\mu_t(z)\notag\\
&\leq\Big[r(y)-\sum_{x\in\x}\alpha(y,x)\bar{\xi}_\x(x)+\eta\bar{C}\underbrace{\sum_{x\in\x}\alpha(y,x)}_{=:\hat{C}_y}\Big]\xi^\mu_t(y)+\mu\tilde{C}_y\sum_{z\sim y}\xi^\mu_t(z)\notag\\
&\leq[f_{y,\x}+\eta\hat{C}]\xi^\mu_t(y)+\mu\tilde{C}\sum_{z\sim y}\xi^\mu_t(z),\label{upbnd}
\end{align}
where $\hat{C}:=\max_{y\in\Hn}\hat{C}_y<\infty$ and $\tilde{C}:=\max_{y\in\Hn}\tilde{C}_y<\infty$.

We prove by induction that, for every $m\geq 0$, there exists a constant $C_m<\infty$, independent of $\mu$, $\eta$, and $y$, such that, for every $0\leq t\leq \tilde{T}^\mu_\eta$,
\begin{align}
\xi^\mu_t(y)\leq C_m\Bigg[ \sum_{\substack{z\in\Hn\\|z-y|\leq m}}\ee^{t(f_{z,\x}+\eta\hat{C})}\Big(\mu^{\rho_z+|z-y|}+\frac{1}{\eta}\mu^{m+1}\Big)(1+t)^m+\mu^{m+1}\Bigg].\label{ind3}
\end{align}

For the case $m=0$, we approximate
\begin{align}
\tfrac{d}{dt}\xi^\mu_t(y)\leq[f_{y,\x}+\eta\hat{C}]\xi^\mu_t(y)+\mu\tilde{C}\underbrace{\sum_{z\sim y}\1_{z\in\x}(\bar{\xi}_\x(z)+\eta\bar{C} )+\1_{z\in\Hn\backslash\x}\eta}_{\leq C\text{ uniformly in }y,z},
\end{align}
and hence 
\begin{align}
\xi^\mu_t(y)&\leq \ee^{t(f_{y,\x}+\eta\hat{C})}\xi^\mu_0(y)+\mu\tilde{C}C\int_0^t \ee^{(t-s)(f_{y,\x}+\eta\hat{C})}ds\notag\\
&\leq \ee^{t(f_{y,\x}+\eta\hat{C})}C_y\mu^{\lambda_y}+\mu\tilde{C}C\frac{1}{f_{y,\x}+\eta\hat{C}}(\ee^{t(f_{y,\x}+\eta\hat{C})}-1).
\end{align}
Choose $\tilde{\eta}>0$ small enough such that $f_{y,\x}+\tilde{\eta}\hat{C}< 0$, for every $y\in\Hn$ for which $f_{y,\x}<0$. Then, for $\eta\leq\tilde{\eta}$ and a different constant $C<\infty$, the second summand can be bounded from above by $C\mu$, for $f_{y,\x}<0$, and by $C/\eta\cdot \ee^{t(f_{y,\x}+\eta\hat{C})}\mu$, for $f_{y,\x}\geq0$. $C$ can be chosen independent of $y$, $\mu$, $\eta\leq\tilde{\eta}$, and $0\leq t\leq \tilde{T}^\mu_\eta$. Overall, using $\lambda_y\geq\rho_y$, we get
\begin{align}
\xi^\mu_t(y)\leq \underbrace{((\max_{y\in\Hn}C_y)\lor C)}_{=:C_0<\infty}\Big[\ee^{t(f_{y,\x}+\eta\hat{C})}\Big(\mu^{\rho_y}+\frac{1}{\eta}\mu\Big)+\mu\Big],
\end{align}
which is the desired bound.\\
Assuming that the hypothesis holds for $m-1$ and using (\ref{upbnd}), we approximate
\begin{align}
\tfrac{d}{dt}\xi^\mu_t(y)\leq&\ [f_{y,\x}+\eta\hat{C}]\xi^\mu_t(y)\notag\\
&\ +\mu\tilde{C}\sum_{z\sim y}C_{m-1}\Bigg[\sum_{\substack{u\in\Hn\\|u-z|\leq m-1}}\ee^{t(f_{u,\x}+\eta\hat{C})}\Big(\mu^{\rho_u+|u-z|}+\frac{1}{\eta}\mu^m\Big)(1+t)^{m-1}+\mu^m\Bigg].
\end{align}
Splitting up the second summand, Gronwall's inequality yields
\begin{align}
\xi^\mu_t(y)\leq&\ \ee^{t(f_{y,\x}+\eta\hat{C})}\xi^\mu_0(y)+\tilde{C}C_{m-1}n\mu^{m+1}
  \int_0^t\ee^{(t-s)(f_{y,\x}+\eta\hat{C})}ds\notag\\
&\ +\tilde{C}C_{m-1}
  \sum_{z\sim y}\sum_{\substack{u\in\Hn\\|u-z|\leq m-1}}
  \Big(\mu^{\rho_u+|u-z|+1}+\frac{1}{\eta}\mu^{m+1}\Big)\notag\\
&\ \qquad\cdot\int_0^t(1+s)^{m-1}\ee^{s(f_{u,\x}+\eta\hat{C})}\ee^{(t-s)(f_{y,\x}+\eta\hat{C})}ds  \notag\\
\leq&\ \ee^{t(f_{y,\x}+\eta\hat{C})}C_y\mu^{\lambda_y}+C
  \mu^{m+1}\Big(1+\frac{1}{\eta}\ee^{t(f_{y,\x}+\eta\hat{C})}\Big)\notag\\
&\ +\tilde{C}C_{m-1}
  \sum_{z\sim y}\sum_{\substack{u\in\Hn\\|u-z|\leq m-1}}
  \Big(\mu^{\rho_u+|u-z|+1}+\frac{1}{\eta}\mu^{m+1}\Big)(1+t)^{m-1}\notag\\
&\ \qquad\cdot\int_0^t\ee^{t(f_{y,\x}+\eta\hat{C})}\ee^{s(f_{u,\x}-f_{y,\x})}ds ,\label{upbnd1}
\end{align}
where we bound the first integral just as before in the base case.

We distinguish two cases to approximate the second integral. If $f_{u,\x}\neq f_{y,\x}$, then
\begin{align}
\int_0^t \ee^{t(f_{y,\x}+\eta\check{C})}\ee^{s(f_{u,\x}-f_{y,\x})}ds&=\frac{1}{f_{u,\x}-f_{y,\x}}(\ee^{t(f_{u,\x}+\eta\check{C})}-\ee^{t(f_{y,\x}+\eta\check{C})})\notag\\
&=\frac{1}{|f_{u,\x}-f_{y,\x}|}|\ee^{t(f_{u,\x}+\eta\check{C})}-\ee^{t(f_{y,\x}+\eta\hat{C})}|\notag\\
&\leq C'(\ee^{t(f_{u,\x}+\eta\check{C})}+\ee^{t(f_{y,\x}+\eta\check{C})}),
\end{align}
for some $C'<\infty$ large enough, uniformly in $y$ and $u$.\\
If $f_{u,\x}=f_{y,\x}$, then
\begin{align}
\int_0^t \ee^{t(f_{y,\x}+\eta\check{C})}\ee^{s(f_{u,\x}-f_{y,\x})}ds=t\ee^{t(f_{y,\x}+\eta\check{C})}.
\end{align}
Plugging this back into (\ref{upbnd1}) we get
\begin{align}
\xi^\mu_t(y)\leq&\ee^{t(f_{y,\x}+\eta\hat{C})}C_y\mu^{\lambda_y}+C
  \mu^{m+1}\Big(1+\frac{1}{\eta}\ee^{t(f_{y,\x}+\eta\hat{C})}\Big)\notag\\
&+\tilde{C}C_{m-1} 
  \sum_{z\sim y}\sum_{\substack{u\in\Hn\\|u-z|\leq m-1}}
  \Big(\mu^{\rho_u+|u-z|+1}+\frac{1}{\eta}\mu^{m+1}\Big)(1+t)^{m-1}\notag\\
&\ \qquad\cdot C'(1+t)(\ee^{t(f_{u,\x}+\eta\hat{C})}+\ee^{t(f_{y,\x}+\eta\hat{C})})\notag\\
\leq&\underbrace{(n+o(1))((\max_{y\in\Hn}C_y)\lor C\lor\tilde{C}C_{m-1}C')}_{\leq C_m\text{ for }\mu<\tilde{\eta}}\notag\\
  &\qquad\cdot\Bigg[\sum_{\substack{z\in\Hn\\|z-y|\leq m}}\ee^{t(f_{z,\x}+\eta\hat{C})}\Big(\mu^{\rho_z+|z-y|}+\frac{1}{\eta}\mu^{m+1}\Big)(1+t)^m+\mu^{m+1}\Bigg],
\end{align}
where we used that $\rho_y\leq(\rho_u+|u-z|+1)\land\lambda_y$ for all $z\sim y$ and $|u-z|\leq m-1$, and gathered all the higher $\mu$-powers in the $o(1)$ with respect to the limit $\mu\to 0$. This concludes the proof of (\ref{ind3}).\\
Finally, we can choose $m\geq\max_{y\in\Hn}\max_{z\in\Hn}\rho_z+|z-y|\geq n$ and, since $f_{z,\x}=0$ for all $z\in\x$, we get
\begin{align}
\xi^\mu_t(y)&\leq C_m\Bigg[ \sum_{z\in\Hn}\ee^{t(f_{z,\x}+\eta\hat{C})}\Big(\mu^{\rho_z+|z-y|}+\frac{1}{\eta}\mu^{m+1}\Big)(1+t)^m+\mu^{m+1}\Bigg]\notag\\
&\leq C_m\Bigg[ \sum_{z\in\Hn}\ee^{t(f_{z,\x}+\eta\hat{C})}(\mu^{\rho_z+|z-y|}+\mu^m)(1+t)^m+\sum_{z\in\x}\ee^{t(f_{z,\x}+\eta\hat{C})}\mu^{m+1}\Bigg]\notag\\
&\leq 3C_m\sum_{z\in\Hn}\ee^{t(f_{z,\x}+\eta\hat{C})}\mu^{\rho_z+|z-y|}(1+t)^m.
\end{align}
With $\hat{c}:=3C_m$ and choosing $\tilde{\eta}$ uniform over all subsets $\x\subset\Hn$ of coexisting resident types, this yields the desired upper bound.

The proof of the lower bound is very similar. We approximate, for every $y\in\Hn$,
\begin{align}
\tfrac{d}{dt}\xi^\mu_t(y)\geq[f_{y,\x}-\eta\check{C}]\xi^\mu_t(y)+\mu\tilde{c}\sum_{z\sim y}\xi^\mu_t(z),\label{lowbnd}
\end{align}
and then use the inductive application of Gronwall's inequality twice.

First, to prove that, for an arbitrarily small $t_0>0$, $\xi^\mu_{t_0/2}(y)\geq c_{t_0}\mu^{\rho_y}$, where $c_{t_0}>0$ can be chosen uniformly in $\mu$, $\eta$, and $y$. This corresponds to mutation producing a positive population size for every type within a time of order 1.

Second, we show that, for every $0\leq m\leq n$, there exists a constant $c_m>0$, independent of $\mu$, $\eta$, and $y$, such that, for $(n+m){t_0}/(2n)\leq t\leq \tilde{T}^\mu_\eta$,
\begin{align}
\xi^\mu_t(y)\geq c_m \sum_{\substack{z\in\Hn\\|z-y|\leq m}}\mu^{\rho_z+|z-y|}\ee^{t(f_{z,\x}-\eta\check{C})}.\label{ind2}
\end{align}

Setting $\check{c}:=c_n$ yields the lower bound in (\ref{preexp}), for ${t_0}\leq t\leq \tilde{T}^\mu_\eta$.
\end{proof}

%%%%%%%%%%%%%%%%%%%%%%%%%%%%%%%%%%%%%%%%%%%%%%%%%%%%%%%%%%%%%%%%%%%%%%%%%%%%%%%%%
%%%%%%%%%%%%%%%%%%%%%%%%%%%%%%%%%%%%%%%%%%%%%%%%%%%%%%%%%%%%%%%%%%%%%%%%%%%%%%%%%

\begin{proof}[Corollary \ref{Corexp}]
The inequalities in (\ref{exp}) follow directly from (\ref{preexp}) by inserting the new time scale. For the lower bound, only the asymptotically largest summand, corresponding to the smallest $\mu$-power, is kept. For the upper bound, every one of the $2^n$ summands is estimated against this largest one.

To prove the second part of the corollary, we first show that, for $\mu$ small enough, the first non-resident type $y$ that reaches the $\eta$-threshold, i.e.\ the type that determines the stopping time $T^\mu_\eta$, satisfies $\lambda_y>0$ and hence $\rho_z+|z-y|>0$, for every $z\in\Hn$.\\
Let $y\in\Hn\backslash\x$ be a non-resident type for which $\lambda_y=0$. This implies $\xi^\mu_0(y)\leq\eta/3$ and $f_{y,\x}<0$. Going back into the proof of (\ref{ind3}) and using that $\tilde{\eta}$ is chosen such that $f_{y,\x}+\tilde{\eta}\hat{C}<0$, this yields
\begin{align}
\xi^\mu_t(y)&\leq \ee^{t(f_{y,\x}+\eta\hat{C})}C_y\mu^{\lambda_y}+\mu\tilde{C}C\frac{1}{f_{y,\x}+\eta\hat{C}}(\ee^{t(f_{y,\x}+\eta\hat{C})}-1)\notag\\
&\leq \ee^{t(f_{y,\x}+\eta\hat{C})}\frac{\eta}{3}+\mu\tilde{C}C\frac{1}{|f_{y,\x}+\eta\hat{C}|}(1-\ee^{t(f_{y,\x}+\eta\hat{C})})\notag\\
&\leq \frac{\eta}{3}+\frac{\mu\tilde{C}C}{|f_{y,\x}+\tilde{\eta}\hat{C}|}\leq\frac{2}{3}\eta,
\end{align}
whenever $\mu\leq\eta|f_{y,\x}+\tilde{\eta}\hat{C}|/3\tilde{C}C$. As a consequence, as $\mu\to0$, $y$ stays strictly below $\eta$ and does not determine $T^\mu_\eta$.

Now we assume that $T^\mu_\eta$ is determined by a non-resident type $y\in\Hn$ for which $\lambda_y>0$, i.e.\ $y$ is the first mutant to reach the $\eta$-threshold. Let $\bar{\eta}\leq\tilde{\eta}\land1\land\check{c}$. Then, assuming that $0<\mu\leq\eta\leq\bar{\eta}$, the lower bound in (\ref{exp}) yields 
\begin{align}
\check{c}\mu^{\min_{z\in\Hn}[\rho_z+|z-y|-T^\mu_\eta(f_{z,\x}-\eta\check{C})]}\leq\xi^\mu_{\tilde{T}^\mu_\eta}(y)=\eta,
\end{align}
and hence
\begin{align}
\ln(\mu)\min_{z\in\Hn}[\rho_z+|z-y|-T^\mu_\eta(f_{z,\x}-\eta\check{C})]\leq\ln\left(\frac{\eta}{\check{c}}\right)\leq 0.
\end{align}
Since $\ln(\mu)<0$, we obtain, for every $z\in\Hn$, that
\begin{align}
\rho_z+|z-y|\geq T^\mu_\eta(f_{z,\x}-\eta\check{C}),
\end{align}
and therefore, if we choose $\bar{\eta}$ small enough such that, for every $\eta\leq\bar{\eta}$ and every $z\in\Hn$ for which $f_{z,\x}>0$, also $f_{z,\x}-\eta\check{C}>0$,
\begin{align}
T^\mu_\eta\leq\min_{\substack{z\in\Hn\\f_{z,\x}>0}}\frac{\rho_z+|z-y|}{f_{z,\x}-\eta\check{C}}.\label{upperT}
\end{align}

To get a lower bound for $T^\mu_\eta$, (\ref{exp}) implies
\begin{align}
\eta=\xi^\mu_{\tilde{T}^\mu_\eta}(y)\leq 2^n\hat{c}\mu^{\min_{z\in\Hn}[\rho_z+|z-y|-T^\mu_\eta(f_{z,\x}+\eta\hat{C})]}\left(1+\tilde{T}^\mu_\eta\right)^m,
\end{align}
which yields
\begin{align}
\ln(\mu)\min_{z\in\Hn}[\rho_z+|z-y|-T^\mu_\eta(f_{z,\x}+\eta\hat{C})]\geq\ln\left(\frac{\eta}{2^n\hat{c}(1+\tilde{T}^\mu_\eta)^m}\right),
\end{align}
and therefore there exists a $z\in\Hn$ such that
\begin{align}
\rho_z+|z-y|\leq T^\mu_\eta(f_{z,\x}+\eta\hat{C})+\frac{\ln\left(\frac{2^n\hat{c}}{\eta}\right)+m\ln(1+\tilde{T}^\mu_\eta)}{\ln\frac{1}{\mu}}.
\end{align}
The second summand on the right hand side is positive and, with (\ref{upperT}), converges to zero as $\mu\to0$. Since the left hand side is positive this implies that $f_{z,\x}+\eta\hat{C}>0$ and by our choice of $\tilde{\eta}$ in the proof of (\ref{ind3}) we obtain $f_{z,\x}\geq0$.\\
Consequently, for every fixed $0<\eta\leq\bar{\eta}$, it follows that
\begin{align}
\liminf_{\mu\to0}T^\mu_\eta\geq\frac{\rho_z+|z-y|}{f_{z,\x}+\eta\check{C}}\geq\min_{\substack{z\in\Hn\\f_{z,\x}\geq0}}\frac{\rho_z+|z-y|}{f_{z,\x}+\eta\check{C}}.
\end{align}

Overall, for every fixed $0<\eta\leq\bar{\eta}$, we obtain
\begin{align}
\min_{\substack{z\in\Hn\\f_{z,\x}\geq0}}\frac{\rho_z+|z-y|}{f_{z,\x}+\eta\check{C}}\leq \liminf_{\mu\to0}T^\mu_\eta\leq\limsup_{\mu\to0}T^\mu_\eta\leq\min_{\substack{z\in\Hn\\f_{z,\x}>0}}\frac{\rho_z+|z-y|}{f_{z,\x}-\eta\check{C}}.
\end{align}
If we now pick $\bar{\eta}$ small enough, both minima are realised by the same $z\in\Hn$ for which $f_{z,\x}>0$, that also minimise
\begin{align}
\min_{\substack{z\in\Hn\\f_{z,\x}>0}}\frac{\rho_z+|z-y|}{f_{z,\x}},
\end{align}
and we can reduce to only considering $z\in\Hn$ such that $f_{z,\x}>0$ in the lower bound.

All the above considerations apply to a single $y$ for which $\lambda_y>0$. Considering all such $y\in\Hn$ we get that asymptotically
\begin{align}
\min_{\substack{y\in\Hn\\ \lambda_y>0}}\min_{\substack{z\in\Hn\\f_{z,\x}>0}}\frac{\rho_z+|z-y|}{f_{z,\x}+\eta\check{C}}\leq \liminf_{\mu\to 0}T^\mu_\eta\leq \limsup_{\mu\to 0} T^\mu_\eta\leq\min_{\substack{y\in\Hn\\ \lambda_y>0}}\min_{\substack{z\in\Hn\\f_{z,\x}>0}}\frac{\rho_z+|z-y|}{f_{z,\x}-\eta\check{C}}. 
\end{align}
For the upper bound, the minimum can be used since, if $T^\mu_\eta$ was larger than this minimum, the minimiser would reach the $\eta$-level before $\tilde{T}^\mu_\eta$, which would be a contradiction.\\
This finishes the proof of the corollary.
\end{proof}

%%%%%%%%%%%%%%%%%%%%%%%%%%%%%%%%%%%%%%%%%%%%%%%%%%%%%%%%%%%%%%%%%%%%%%%%%%%%%%%%%%%%%%%%%%%%%%%%%%%%%%%%%%%%%%%%%%%%%%%%%%%%%%%%%%%%%%%%%%%%%%%%%%%%%%%%%%%%%%%%%%%%%%%%%%%%%%%%%%%%%%%%

\section{Construction of the Jump Process}

In this section we combine the results of Theorem \ref{Thmexp}, or rather Corollary \ref{Corexp}, and Theorem \ref{LVThm} to derive the convergence of $\xi^\mu$ as $\mu\to0$ to a jump process that moves between Lotka-Volterra equilibria of coexistence. We prove the convergence by an induction over the invasion steps and show that after each invasion the criteria for the initial conditions in Theorem \ref{Thmexp} are again satisfied.

Before we get to the actual proof, we derive two lemmas. The first lemma treats the boundedness of solutions of (\ref{DE}), the continuity in the initial condition, and the perturbation through the mutation rate $\mu$.
\begin{lemma}\label{cont}
Let
\begin{align}
\Omega:=\left\{\xi\in\MM(\Hn):\forall x\in\Hn: \xi(x)\in\left[0,2\frac{|r(x)|}{\alpha(x,x)}\right]\right\}.
\end{align}
There is a $\mu_0>0$ such that, for every $0\leq\mu<\mu_0$, for every $\xi^\mu_0\in\Omega$, and for every $t\geq0$, we obtain $\xi^\mu_t\in\Omega$, where $\xi^\mu_t$ is the solution of (\ref{DE}).

Moreover, there are positive, finite constants $A$, $B$ such that, for every $0\leq\mu_1,\mu_2<\mu_0$, for every $\xi^{\mu_1}_0,\xi^{\mu_2}_0\in\Omega$, and every $t\geq s\geq0$,
\begin{align}
\norm{\xi^{\mu_1}_t-\xi^{\mu_2}_t}\leq \ee^{(t-s)A}\left(\norm{\xi^{\mu_1}_s-\xi^{\mu_2}_s}+\sqrt{(\mu_1+\mu_2)\frac{B}{A}}\right).
\end{align}
\end{lemma}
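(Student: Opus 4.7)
The two claims are almost independent. Part~(1) is a standard boundary-flow argument: on each face of $\partial\Omega$ the vector field must point inward (or be tangent), which forces trajectories starting inside to stay inside. Part~(2) follows from computing $\tfrac{d}{dt}\norm{\xi^{\mu_1}_t-\xi^{\mu_2}_t}^2$, exploiting that on the compact set $\Omega$ the mutation-free Lotka-Volterra field is Lipschitz while the mutation term is uniformly bounded and carries a prefactor $\mu$, and then invoking scalar Gronwall.

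\textbf{Part~(1).} On the face $\{\xi(x)=0\}$ the right-hand side of (\ref{DE}) reduces to $\mu\sum_{y\sim x}b(y)m(y,x)\xi^\mu_t(y)\geq 0$, so $\xi^\mu_t(x)$ cannot become negative. On the face $\xi(x)=2|r(x)|/\alpha(x,x)$ one has
\[
r(x)-\sum_{y\in\Hn}\alpha(x,y)\xi(y)\;\leq\; r(x)-\alpha(x,x)\cdot\tfrac{2|r(x)|}{\alpha(x,x)}\;\leq\;-|r(x)|,
\]
so the Lotka-Volterra part of $\tfrac{d}{dt}\xi^\mu_t(x)$ is at most $-|r(x)|\xi(x)$, bounded strictly away from zero. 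The mutation contribution is bounded by $C\mu$ for some $C<\infty$ uniformly on $\Omega$, since $b$, $m$ and the coordinates of $\xi$ are all bounded. Choosing $\mu_0$ small enough therefore makes the full derivative strictly negative on this face, which yields forward invariance. Together with local Lipschitz continuity of the polynomial right-hand side this gives global existence of solutions within~$\Omega$.

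\textbf{Part~(2).} Write the right-hand side of (\ref{DE}) as $\phi_\mu(\xi)=\phi_0(\xi)+\mu M(\xi)$, where $\phi_0$ is the mutation-free Lotka-Volterra field and $M$ is the linear mutation operator. On the compact set $\Omega$, the polynomial $\phi_0$ is Lipschitz with some constant $L<\infty$, and $\norm{M(\xi)}\leq C_M<\infty$. Then
\begin{align*}
\tfrac{d}{dt}\norm{\xi^{\mu_1}_t-\xi^{\mu_2}_t}^2
&= 2\bigl\langle \xi^{\mu_1}_t-\xi^{\mu_2}_t,\,\phi_0(\xi^{\mu_1}_t)-\phi_0(\xi^{\mu_2}_t)\bigr\rangle\\
&\quad+2\bigl\langle \xi^{\mu_1}_t-\xi^{\mu_2}_t,\,\mu_1 M(\xi^{\mu_1}_t)-\mu_2 M(\xi^{\mu_2}_t)\bigr\rangle\\
&\leq 2L\norm{\xi^{\mu_1}_t-\xi^{\mu_2}_t}^2+2C_M(\mu_1+\mu_2)\norm{\xi^{\mu_1}_t-\xi^{\mu_2}_t}\\
&\leq 2A\norm{\xi^{\mu_1}_t-\xi^{\mu_2}_t}^2+2B(\mu_1+\mu_2),
\end{align*}
where in the last line I apply Young's inequality $2C_M u(\mu_1+\mu_2)\leq u^2+C_M^2(\mu_1+\mu_2)^2$ and then $(\mu_1+\mu_2)^2\leq 2\mu_0(\mu_1+\mu_2)$, with $A:=L+\tfrac12$ and $B:=\mu_0 C_M^2$. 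Scalar Gronwall for this linear differential inequality yields
\[
\norm{\xi^{\mu_1}_t-\xi^{\mu_2}_t}^2\;\leq\; e^{2A(t-s)}\norm{\xi^{\mu_1}_s-\xi^{\mu_2}_s}^2+\tfrac{B(\mu_1+\mu_2)}{A}\bigl(e^{2A(t-s)}-1\bigr),
\]
and taking square roots together with $\sqrt{a+b}\leq\sqrt{a}+\sqrt{b}$ and $e^{2A(t-s)}-1\leq e^{2A(t-s)}$ gives the claimed bound (after relabelling constants). The only genuinely delicate step is the choice of $\mu_0$ in Part~(1); everything downstream is a routine Lipschitz/Gronwall estimate on a compact invariant domain.
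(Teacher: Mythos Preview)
Your proof is correct and follows essentially the same route as the paper: Part~(1) is the same boundary-flow argument showing the derivative is strictly negative on the upper face for small $\mu$, and Part~(2) is the same Gronwall estimate on $\tfrac{d}{dt}\norm{\xi^{\mu_1}_t-\xi^{\mu_2}_t}^2$. The only cosmetic difference is that the paper expands the Lotka-Volterra and mutation contributions term by term and bounds the mutation cross-term crudely via $|\xi^{\mu_1}_t(x)-\xi^{\mu_2}_t(x)|\leq|\xi^{\mu_1}_t(x)|+|\xi^{\mu_2}_t(x)|$, whereas you package $\phi_0$ as a Lipschitz field and handle the cross-term with Young's inequality; both arrive at the same linear differential inequality $\tfrac{d}{dt}\norm{\cdot}^2\leq 2A\norm{\cdot}^2+2B(\mu_1+\mu_2)$.
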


\begin{proof}
To prove the first claim, assume that $\xi^\mu_t\in\Omega$ and $\xi^\mu_t(x)=2|r(x)|/\alpha(x,x)$, for some $x\in\Hn$. Then
\begin{align}
\tfrac{d}{dt}\xi^\mu_t(x)&\leq[r(x)-\alpha(x,x)\xi^\mu_t(x)]\xi^\mu_t(x)+\mu\sum_{y\sim x}b(y)m(y,x)\xi^\mu_t(y)\notag\\
&\leq\frac{-2|r(x)|^2}{\alpha(x,x)}+\mu2n\max_{y\in\Hn}\frac{b(y)|r(y)|}{\alpha(y,y)}<0,
\end{align}
for 
\begin{align}
\mu<\mu_0:=\min_{y\in\Hn}\frac{2|r(y)|^2}{\alpha(y,y)}\left(2n\max_{y\in\Hn}\frac{b(y)|r(y)|}{\alpha(y,y)}\right)^{-1}.
\end{align}
Hence, $\xi^\mu_t$ cannot leave $\Omega$.

For the second claim, we approximate
\begin{align}
\frac{d}{dt}\frac{\norm{\xi^{\mu_1}_t-\xi^{\mu_2}_t}^2}{2}
=&\ \sum_{x\in\Hn}(\xi^{\mu_1}_t(x)-\xi^{\mu_2}_t(x))r(x)(\xi^{\mu_1}_t(x)-\xi^{\mu_2}_t(x))\notag\\
&\ -\sum_{x\in\Hn}(\xi^{\mu_1}_t(x)-\xi^{\mu_2}_t(x))\sum_{y\in\Hn}\alpha(x,y)(\xi^{\mu_1}_t(x)\xi^{\mu_1}_t(y)-\xi^{\mu_2}_t(x)\xi^{\mu_2}_t(y))\notag\\
&\ +\sum_{x\in\Hn} (\xi^{\mu_1}_t(x)-\xi^{\mu_2}_t(x))\mu_1\left(\sum_{y\sim x}b(y)m(y,x)\xi^{\mu_1}_t(y)-b(x)\xi^{\mu_1}_t(x)\right)\notag\\
&\ -\sum_{x\in\Hn} (\xi^{\mu_1}_t(x)-\xi^{\mu_2}_t(x))\mu_2\left(\sum_{y\sim x}b(y)m(y,x)\xi^{\mu_2}_t(y)-b(x)\xi^{\mu_2}_t(x)\right)\notag\\
\leq&\ \max_{x\in\Hn}|r(x)|\sum_{x\in\Hn}(\xi^{\mu_1}_t(x)-\xi^{\mu_2}_t(x))^2\notag\\
&\ -\sum_{x\in\Hn}\sum_{y\in\Hn}\alpha(x,y)(\xi^{\mu_1}_t(x)-\xi^{\mu_2}_t(x))^2\xi^{\mu_1}_t(y)\notag\\
&\ +\sum_{x\in\Hn}\sum_{y\in\Hn}\alpha(x,y)|\xi^{\mu_1}_t(x)-\xi^{\mu_2}_t(x)|\cdot|\xi^{\mu_1}_t(y)-\xi^{\mu_2}_t(y)|\cdot|\xi^{\mu_2}_t(x)|\notag\\
&\ +\mu_1\max_{x\in\Hn}b(x)\sum_{x\in\Hn} \max_{x\in\Hn}(|\xi^{\mu_1}_t(x)|+|\xi^{\mu_2}_t(x)|)\max_{x\in\Hn}|\xi^{\mu_1}_t(x)|\left(\sum_{y\sim x}m(y,x)+1\right)\notag\\
&\ +\mu_2\max_{x\in\Hn}b(x)\sum_{x\in\Hn} \max_{x\in\Hn}(|\xi^{\mu_1}_t(x)|+|\xi^{\mu_2}_t(x)|)\max_{x\in\Hn}|\xi^{\mu_2}_t(x)|\left(\sum_{y\sim x}m(y,x)+1\right),
\end{align}
which implies
\begin{align}
\frac{d}{dt}\frac{\norm{\xi^{\mu_1}_t-\xi^{\mu_2}_t}^2}{2}
\leq&\ \norm{\xi^{\mu_1}_t-\xi^{\mu_2}_t}^2\Big[\max_{x\in\Hn}|r(x)|+2^{2n}\max_{x,y\in\Hn}\alpha(x,y)\norm{\xi^{\mu_2}_t}\Big]\notag\\
&\qquad\qquad\qquad+(\mu_1+\mu_2)(2^n\cdot2)\max_{x\in\Hn}b(x)(\norm{\xi^{\mu_1}_t}+\norm{\xi^{\mu_2}_t})^2\notag\\
=: &\ \norm{\xi^{\mu_1}_t-\xi^{\mu_2}_t}^2A+(\mu_1+\mu_2)B,
\end{align}
where $A$ and $B$ depend on $b,r,\alpha$, and can be chosen uniformly in $t\geq0$, $0\leq\mu_i<\mu_0$, and initial values $\xi^{\mu_i}_0\in\Omega$ since $\norm{\xi^{\mu_i}_t}\leq\max_{\xi\in\Omega}\norm{\xi}<\infty$. Applying Gronwall's inequality and taking the square root implies the claim.
\end{proof}

%%%%%%%%%%%%%%%%%%%%%%%%%%%%%%%%%%%%%%%%%%%%%%%%%%%%%%%%%%%%%%%%%%%%%%%%%%%%%%%%%%%

Theorem \ref{Thmexp} and Corollary \ref{Corexp} provide us with approximations for $\xi^\mu_t$ during the exponential growth phase and Theorem \ref{LVThm} guarantees convergence to a new equilibrium during the invasion phase. To show that this second phase vanishes on the time scale $\ln1/\mu$, we need to bound its duration uniformly in the approximate state of the system at its beginning.

We introduce the following notation for the time until the initial conditions for the next growth phase are reached.

\begin{definition}
\begin{align}
\tilde{\tau}^\mu_\eta(\xi, \x):=\inf\big\{t\geq0:&\ \forall\ x\in\x: |\xi^\mu_t(x)-\bar{\xi}_\x(x)|\leq \eta\frac{\bar{c}}{\sqrt{|\x|}},\notag\\
&\ \forall\ y\in\Hn\backslash\x: \xi^\mu_t(y)\leq \frac{\eta}{3}; \xi^\mu_0=\xi\big\},
\end{align}
\end{definition}

In the proof of Theorem \ref{MainThm}, we approximate the true system, solving (\ref{DE}), by the mutation-free Lotka-Volterra system during the invasion. The second lemma proves continuity in the initial condition for a slight variation of $\tilde{\tau}^\mu_\eta(\xi,\x)$, corresponding to the case of $\mu=0$.

\begin{lemma}\label{conttau}
Let $\y\subset\Hn$ such that $r(y)>0$, for all $y\in\y$, and (B$_\y$) is satisfied. Let $\x\subset\y$ such that the equilibrium state of the Lotka-Volterra system involving types $\y$ is supported on $\x$ and assume $f_{y,\x}<0$, for every $y\in\y\backslash\x$. Define
\begin{align}
\bar{\tau}^0_\eta(\xi,\x,\y):=\inf\{t\geq0:&\ \norm{\left.\xi^0_t\right|_\x-\bar{\xi}_\x}_\x\leq\frac{\eta\bar{c}c_\x}{2\sqrt{|\x|}},\notag\\
&\ \forall\ y\in\y\backslash\x: \xi^0_t(y)\leq\frac{\eta}{6}\land\hat{\eta};\xi^0_0=\xi\},
\end{align}
where $\norm{\cdot}_\x$ is the norm defined in (\ref{defnorm}), corresponding to $\bar{\xi}_\x$, and $\hat{\eta}:=\eta\bar{c}c_\x/(2\sqrt{|\x|}\underline{c})$. Then, for $\eta$ small enough, $\bar{\tau}^0_\eta(\xi,\x,\y)$ is continuous in $\xi\in(\R_{>0})^\y\times\{0\}^{\Hn\backslash\y}$.
\end{lemma}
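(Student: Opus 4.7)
The plan is to combine the long-time convergence provided by Theorem \ref{LVThm}, a positive invariance property of the target set, and the continuous dependence of ODE solutions on initial data established in Lemma \ref{cont}. First I would observe that $\bar{\tau}^0_\eta(\xi,\x,\y) < \infty$ for every admissible $\xi$: by $(B_\y)$ and Theorem \ref{LVThm} applied to $\y$, $\xi^0_t|_\y$ converges to $\bar{\xi}_\y$, which by hypothesis is supported on $\x$ and agrees there with $\bar{\xi}_\x$. Hence both defining inequalities of $\bar{\tau}^0_\eta$ hold for all sufficiently large $t$.

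The heart of the argument, and what I expect to be the main obstacle, is to prove that for $\eta$ small enough the closed target set
\begin{align*}
S_\eta := \left\{\xi \in \MM(\y) : \norm{\xi|_\x - \bar{\xi}_\x}_\x \leq \tfrac{\eta\bar{c}c_\x}{2\sqrt{|\x|}},\ \xi(y) \leq \tfrac{\eta}{6}\wedge\hat{\eta}\ \forall\, y \in \y\backslash\x\right\}
\end{align*}
is positively invariant under the mutation-free dynamics (\ref{LV}). For $y \in \y\backslash\x$, as soon as $\xi^0_t \in S_\eta$, the closeness of $\xi^0_t|_\x$ to $\bar{\xi}_\x$ gives $r(y) - \sum_{x\in\y}\alpha(y,x)\xi^0_t(x) \leq f_{y,\x} + C\eta < \tfrac{1}{2}f_{y,\x} < 0$ for $\eta$ small, so $\xi^0_t(y)$ is strictly decreasing and its upper bound is preserved. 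For the $\x$-components, I would reproduce the annulus argument from Step~1 of the proof of Theorem \ref{Thmexp} in the specialisation $\mu = 0$: the error term $V$ now arises only from interactions with $\y\backslash\x$ and is bounded by a constant multiple of $\hat{\eta}\,\norm{\xi^0_t|_\x}_\x$. The precise choice $\hat{\eta} = \eta\bar{c}c_\x/(2\sqrt{|\x|}\underline{c})$ is designed exactly so that $\tfrac{d}{dt}\norm{\xi^0_t|_\x - \bar{\xi}_\x}_\x^2$ is strictly negative on the boundary sphere $\norm{\cdot}_\x = \eta\bar{c}c_\x/(2\sqrt{|\x|})$, which prevents exits from $S_\eta$.

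With invariance established, continuity of $\bar{\tau}^0_\eta$ follows from a standard sandwich argument. Fix $\xi_0$ and set $T_0 := \bar{\tau}^0_\eta(\xi_0,\x,\y)$. For upper semi-continuity, pick any $T_1 > T_0$; invariance and long-time convergence place $\xi^0_{T_1}(\xi_0)$ in the open interior of $S_\eta$, so by Lemma \ref{cont} with $\mu_1 = \mu_2 = 0$, $\xi^0_{T_1}(\xi) \in S_\eta$ for every $\xi$ sufficiently close to $\xi_0$, giving $\bar{\tau}^0_\eta(\xi) \leq T_1$. For lower semi-continuity, fix $T_{-1} < T_0$; the compact trajectory $\{\xi^0_t(\xi_0) : 0 \leq t \leq T_{-1}\}$ is disjoint from the closed set $S_\eta$ and hence lies at strictly positive distance from it, so the same continuous dependence forces $\xi^0_t(\xi)$ to avoid $S_\eta$ on $[0, T_{-1}]$ for $\xi$ near $\xi_0$, giving $\bar{\tau}^0_\eta(\xi) > T_{-1}$. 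Letting $T_1 \downarrow T_0$ and $T_{-1} \uparrow T_0$ then yields continuity at $\xi_0$.
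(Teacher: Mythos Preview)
Your approach is correct and follows a more topological route than the paper's. Both proofs rest on the same core observation---the target set $S_\eta$ is forward-attracting for the mutation-free flow---but they package it differently. You establish positive invariance via a strictly inward vector field on $\partial S_\eta$ (reusing the annulus computation from Step~1 of Theorem~\ref{Thmexp} with $\mu=0$, the $\y\backslash\x$ components playing the role of the perturbation) and then invoke a standard hitting-time continuity argument via compactness and Lemma~\ref{cont}. The paper instead makes the attraction quantitative: it derives explicit exponential decay estimates $\norm{\xi^0_t|_\x - \bar{\xi}_\x}_\x \leq e^{-\tilde{\kappa}(t-t_0)}\norm{\xi^0_{t_0}|_\x - \bar{\xi}_\x}_\x$ and $\xi^0_t(y) \leq e^{-C(t-t_0)}\xi^0_{t_0}(y)$ inside an attractive neighbourhood, and then uses Lemma~\ref{cont} to show that if two initial conditions are within a prescribed $\gamma$-dependent distance, the second trajectory at time $\bar{\tau}^0_\eta(\xi^{0,1}_0,\x,\y)$ is within a factor $e^{\tilde{\kappa}\gamma}\leq 2$ of each target threshold, whence the decay rates push it into $S_\eta$ within a further time~$\gamma$. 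Your argument is conceptually cleaner and avoids tracking rates; the paper's yields an explicit modulus of continuity, which matches the quantitative style used later in the proof of Theorem~\ref{MainThm}. One small point worth making explicit: the claim that $\xi^0_{T_1}(\xi_0)$ lies in the \emph{open} interior of $S_\eta$ for \emph{every} $T_1 > T_0$ does not follow from positive invariance and long-time convergence alone; it follows from the strict inward flow on $\partial S_\eta$ that you already established, which prevents the trajectory from revisiting the boundary after~$T_0$.
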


\begin{remark}
Theorem \ref{LVThm} ensures that the Lotka-Volterra system involving the types $\y$ converges to a unique equilibrium and hence $\x$ in Lemma \ref{conttau} is uniquely determined.
\end{remark}

\begin{proof}
Since we are considering the case of $\mu=0$, we obtain $\xi^0_t\in(\R_{>0})^\y\times\{0\}^{\Hn\backslash\y}$, for all $t\geq 0$ and $\xi^0_0\in(\R_{>0})^\y\times\{0\}^{\Hn\backslash\y}$. As in Step 1 of the proof of Theorem \ref{Thmexp}, it follows that, as long as $\xi^0_t(y)\leq \hat{\eta}$, for $y\in\y\backslash\x$, and
\begin{align}
\hat{\eta} \underline{c}\leq \norm{\left.\xi^0_t\right|_\x-\bar{\xi}_{\x}}_\x\leq \eps_{\x},\label{disc}
\end{align}
we obtain
\begin{align}
\frac{d}{dt}\frac{\norm{\left.\xi^0_t\right|_\x-\bar{\xi}_{\x}}_\x^2}{2}\leq -\norm{\left.\xi^0_t\right|_\x-\bar{\xi}_{\x}}_\x^2\frac{\kappa}{4C_\x^2}=:-\tilde{\kappa}\norm{\left.\xi^0_t\right|_\x-\bar{\xi}_{\x}}_\x^2.
\end{align}
Hence
\begin{align}
\norm{\left.\xi^0_t\right|_\x-\bar{\xi}_{\x}}_\x\leq \ee^{-\tilde{\kappa}(t-t_0)}\norm{\left.\xi^0_{t_0}\right|_\x-\bar{\xi}_{\x}}_\x.\label{decresident}
\end{align}

Moreover, (\ref{disc}) implies, for every $x\in\x$,
\begin{align}
|\xi^0_t(x)-\bar{\xi}_{\x}(x)|\leq\frac{\eps_\x}{c_\x}.
\end{align}
Since $f_{y,\x}<0$ for every $y\in\y\backslash\x$, we can choose $\eps_\x$ small enough such that
\begin{align}
\tfrac{d}{dt}\xi^0_t(y)&=[r(y)-\sum_{z\in\Hn}\alpha(y,z)\xi^0_t(z)]\xi^0_t(y)\notag\\
&\leq\left[f_{y,\x}+\sum_{x\in\x}\alpha(y,x)\frac{\eps_\x}{c_\x}\right]\xi^0_t(y)\leq -C\xi^0_t(y),
\end{align}
for some $C>0$. Hence,
\begin{align}
\xi^0_t(y)\leq \ee^{-C(t-t_0)}\xi^0_{t_0}(y).\label{decrest}
\end{align}

We have now found an attractive domain around the limiting equilibrium of the Lotka-Volterra system.

Next, we can derive the continuity of $\bar{\tau}^0_\eta(\xi,\x,\y)$. Let $\gamma>0$ arbitrarily small such that $\ee^{\tilde{\kappa}\gamma},\ee^{C\gamma}\leq 2$. Let $\xi^{0,1}$ and $\xi^{0,2}$ be two versions of the process with different initial values $\xi^{0,1}_0$ and $\xi^{0,2}_0$. By Lemma \ref{cont},
\begin{align}
&\norm{\left.\xi^{0,1}_t\right|_\x-\left.\xi^{0,2}_t\right|_\x}_\x\leq C_\x\norm{\left.\xi^{0,1}_t\right|_\x-\left.\xi^{0,2}_t\right|_\x}\leq \ee^{(t-t_0)A}C_\x\norm{\left.\xi^{0,1}_{t_0}\right|_\x-\left.\xi^{0,2}_{t_0}\right|_\x},\\
&\ |\xi^{0,1}_t(y)-\xi^{0,2}_t(y)|\leq\norm{\xi^{0,1}_t-\xi^{0,2}_t}\leq \ee^{(t-t_0)A}\norm{\xi^{0,1}_{t_0}-\xi^{0,2}_{t_0}}.
\end{align}

Now, if we pick initial conditions that are very similar, namely that satisfy
\begin{align}
\norm{\xi^{0,1}_0-\xi^{0,2}_0}\leq \ee^{-(\bar{\tau}^0_{\bar{\eta}}(\xi^{0,1}_{t_0},\x,\y)+\gamma)A}\left[(\ee^{\tilde{\kappa}\gamma}-1)\frac{\eta\bar{c}c_\x}{2\sqrt{|\x|}C_\x}\land(\ee^{C\gamma}-1)\left(\frac{\eta}{6}\land\hat{\eta}\right)\right],
\end{align}
we can apply (4.15) and (4.16) and use the definition of $\bar{\tau}^0_{\eta}(\xi^{0,1}_0,\x,\y)$ to derive
\begin{align}
\norm{\left.\xi^{0,2}_{\bar{\tau}^0_{\eta}(\xi^{0,1}_0,\x,\y)}\right|_\x-\bar{\xi}_{\x}}_\x
&\leq \norm{\left.\xi^{0,2}_{\bar{\tau}^0_{\eta}(\xi^{0,1}_0,\x,\y)}\right|_\x-\left.\xi^{0,1}_{\bar{\tau}^0_{\eta}(\xi^{0,1}_0,\x,\y)}\right|_\x}_\x+\norm{\left.\xi^{0,1}_{\bar{\tau}^0_{\eta}(\xi^{0,1}_0,\x,\y)}\right|_\x-\bar{\xi}_{\x}}_\x\notag\\
&\leq \ee^{\bar{\tau}^0_{\eta}(\xi^{0,1}_0,\x,\y)A}C_\x\norm{\left.\xi^{0,2}_0\right|_\x-\left.\xi^{0,1}_0\right|_\x}+\frac{\eta\bar{c}c_\x}{2\sqrt{|\x|}}
\leq \ee^{\tilde{\kappa}\gamma}\frac{\eta\bar{c}c_\x}{2\sqrt{|\x|}},
\end{align}
and for $y\in\y\backslash\x$,
\begin{align}
\xi^{0,2}_{\bar{\tau}^0_{\eta}(\xi^{0,1}_0,\x,\y)}(y)
&\leq |\xi^{0,2}_{\bar{\tau}^0_{\eta}(\xi^{0,1}_0,\x,\y)}(y)-\xi^{0,1}_{\bar{\tau}^0_{\eta}(\xi^{0,1}_0,\x,\y)}(y)|+\xi^{0,1}_{\bar{\tau}^0_{\eta}(\xi^{0,1}_0,\x,\y)}(y)\notag\\
&\leq \ee^{\bar{\tau}^0_{\eta}(\xi^{0,1}_0,\x,\y)A}\norm{\xi^{0,2}_0-\xi^{0,1}_0}+\left(\frac{\eta}{6}\land\hat{\eta}\right)
\leq \ee^{C\gamma}\left(\frac{\eta}{6}\land\hat{\eta}\right).
\end{align}

For all $\eta>0$ such that
\begin{align}
\hat{\eta}\underline{c}=\frac{\eta\bar{c}c_\x}{2\sqrt{|\x|}}\leq \frac{\eps_\x}{2},
\end{align}
we obtain
\begin{align}
\norm{\left.\xi^{0,2}_{\bar{\tau}^0_{\eta}(\xi^{0,1}_0,\x,\y)}\right|_\x-\bar{\xi}_{\x}}_\x\leq\eps_\x
\end{align}
and hence (\ref{decresident}) and (\ref{decrest}) can be applied to $\xi^{0,2}$ with $t=\bar{\tau}^0_{\eta}(\xi^{0,1}_0,\x,\y)+\gamma$ and $t_0=\bar{\tau}^0_{\eta}(\xi^{0,1}_0,\x,\y)$ to obtain $\bar{\tau}^0_{\eta}(\xi^{0,2}_0,\x,\y)\leq\bar{\tau}^0_{\eta}(\xi^{0,1}_0,\x,\y)+\gamma$.

Repeating the same calculation switching 1 and 2 and using this bound for $\bar{\tau}^0_{\eta}(\xi^{0,2}_0,\x,\y)$ to apply (4.17), it follows that
\begin{align}
&\norm{\left.\xi^{0,1}_{\bar{\tau}^0_{\eta}(\xi^{0,2}_0,\x,\y)}\right|_\x-\bar{\xi}_{\x}}_\x\leq \ee^{\tilde{\kappa}\gamma}\frac{\eta\bar{c}c_\x}{2\sqrt{|\x|}},\\
&\xi^{0,1}_{\bar{\tau}^0_{\eta}(\xi^{0,2}_0,\x,\y)}(y)\leq \ee^{C\gamma}\left(\frac{\eta}{6}\land\hat{\eta}\right),
\end{align}
and therefore $\bar{\tau}^0_{\eta}(\xi^{0,1}_0,\x,\y)\leq\bar{\tau}^0_{\eta}(\xi^{0,2}_0,\x,\y)+\gamma$. Hence, $|\bar{\tau}^0_{\eta}(\xi^{0,1}_0,\x,\y)-\bar{\tau}^0_{\eta}(\xi^{0,2}_0,\x,\y)|\leq\gamma$, which proves the continuity.
\end{proof}

%%%%%%%%%%%%%%%%%%%%%%%%%%%%%%%%%%%%%%%%%%%%%%%%%%%%%%%%%%%%%%%%%%%%%%%%%%%%%%%%%%%%%

To mark the transition between the exponential growth phase and the Lotka-Volterra invasion phase, we extend the definition of $\tilde{T}^\mu_\eta$ (Definition 7) to the $i^\text{th}$ invasion.
\begin{definition}
For $i\geq1$, the time when the first mutant type reaches $\eta>0$ after the $(i-1)^\text{st}$ invasion is defined as
\begin{align}
\tilde{T}^\mu_{\eta,i}:=\inf\{s\geq \tilde{T}^\mu_{\eta,i-1}:\exists\ y\in\Hn\backslash(\x^{i-2}\cup\x^{i-1}):\xi^\mu_s(y)>\eta\}.
\end{align}
We set $\tilde{T}^\mu_{\eta,0}:=0$ and $\x^{-1}:=\emptyset$.

To consider the evolutionary time scale $\ln1/\mu$, we define $T^\mu_{\eta,i}$ through $\tilde{T}^\mu_{\eta,i}=T^\mu_{\eta,i}\ln1/\mu$.
\end{definition}

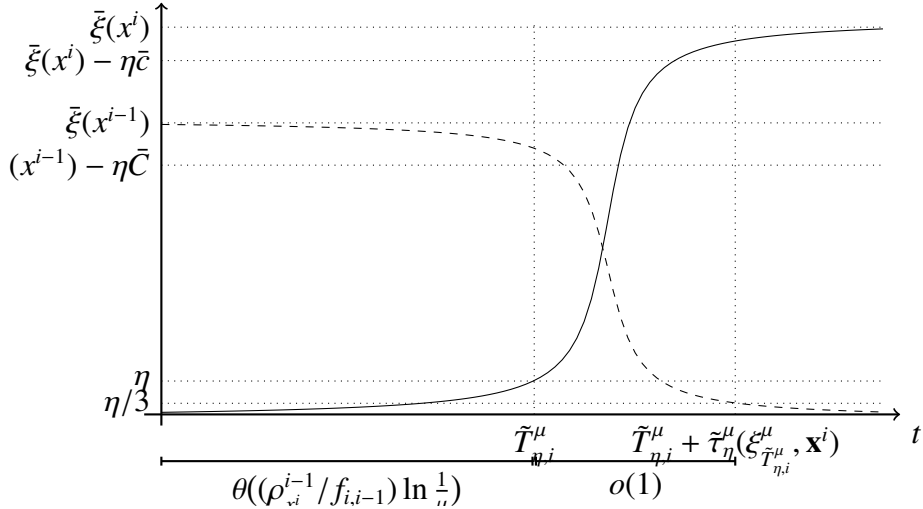
\begin{figure}[h]
\centering
\begin{tikzpicture}
%\begin{scope}[rotate=90]
\begin{axis}[xscale=1.9,yscale=1.25,hide axis]
\addplot [domain=-13:8, samples=100,]{atan(1.5*x)+3};
\addplot [domain=-13:8, samples=100,dashed]{-0.75*atan(1.5*x)-20};
\addplot [domain=-15:9, samples=100, color=white,]{95}; % upper space
\addplot [domain=-15:9, samples=100, color=white,]{-95}; % lower space
\draw[thick,->](axis cs:-13.5,-85) -- (axis cs:8.5,-85); % x axis
	\addplot[mark=none] coordinates {(8.5,-85)} node[anchor=north west]{$t$};
\draw[thick,->](axis cs:-13,-90)--(axis cs:-13,100); % y axis
%\draw[dotted](axis cs:-13,-82) -- (axis cs:8,-82);
%	\addplot[mark=none] coordinates {(-13,-82)} node[anchor=east]{$\mu$};
\draw[dotted](axis cs:-13,-70) -- (axis cs:8,-70);
	\addplot[mark=none] coordinates {(-13,-70)} node[anchor=east]{$\eta$};
\draw[dotted](axis cs:-13,-80) -- (axis cs:8,-80);
	\addplot[mark=none] coordinates {(-13,-80)} node[anchor=east]{$\eta/3$};
\draw[dotted](axis cs:-13,46) -- (axis cs:8,46);
	\addplot[mark=none] coordinates {(-13,46)} node[anchor=east]{$\bar{\xi}(x^{i-1})$};
\draw[dotted](axis cs:-13,27) -- (axis cs:8,27);
	\addplot[mark=none] coordinates {(-13,27)} node[anchor=east]{$\bar{\xi}(x^{i-1})-\eta\bar{C}$};
\draw[dotted](axis cs:-13,89) -- (axis cs:8,89);
	\addplot[mark=none] coordinates {(-13,89)} node[anchor=east]{$\bar{\xi}(x^i)$};
\draw[dotted](axis cs:-13,74) -- (axis cs:8,74);
	\addplot[mark=none] coordinates {(-13,74)} node[anchor=east]{$\bar{\xi}(x^i)-\eta\bar{c}$};
%\draw[dotted](axis cs:-7,-85) -- (axis cs:-7,90);
%	\addplot[mark=none] coordinates {(-7,-85)} node[anchor=north]{$T^\mu$};
\draw[dotted](axis cs:-2.15,-85) -- (axis cs:-2.15,90);
	\addplot[mark=none] coordinates {(-2.15,-85)} node[anchor=north]{$\tilde{T}^\mu_{\eta,i}$};
\draw[dotted](axis cs:3.7,-85) -- (axis cs:3.7,90);
	\addplot[mark=none] coordinates {(3.7,-85)} node[anchor=north]{$\tilde{T}^\mu_{\eta,i}+\tilde{\tau}^\mu_\eta(\xi^\mu_{\tilde{T}^\mu_{\eta,i}},\x^i)$};
%\draw[thick](axis cs:-10,-100)--(axis cs:-7.05,-100);
%	\draw[thick](axis cs:-10,-102)--(axis cs:-10,-98);
%	\draw[thick](axis cs:-7.05,-102)--(axis cs:-7.05,-98);
%	\addplot[mark=none] coordinates {(-8.525,-100)} node[anchor=north]{$o(1)$};
\draw[thick](axis cs:-13,-106)--(axis cs:-2.2,-106);
	\draw[thick](axis cs:-13,-108)--(axis cs:-13,-104);
	\draw[thick](axis cs:-2.2,-108)--(axis cs:-2.2,-104);
	\addplot[mark=none] coordinates {(-7.6,-106)} node[anchor=north]{$\theta\big((\rho^{i-1}_{x^i}/f_{i,i-1})\ln\frac{1}{\mu}\big)$};
\draw[thick](axis cs:-2.1,-106)--(axis cs:3.7,-106);
	\draw[thick](axis cs:-2.1,-108)--(axis cs:-2.1,-104);
	\draw[thick](axis cs:3.7,-108)--(axis cs:3.7,-104);
	\addplot[mark=none] coordinates {(0.8,-106)} node[anchor=north]{$o(1)$};
\end{axis}
\end{tikzpicture}
\caption{The two phases of $y^i_*=x^i$ invading $x^{i-1}$, in the case where there is no coexistence. The dashed line corresponds to $\xi^\mu_t(x^{i-1})$, the solid line depicts $\xi^\mu_t(x^i)$.}
\end{figure}

We can now turn to the proof of Theorem \ref{MainThm} and inductively derive the convergence of $\xi^\mu_{t\ln 1/\mu}$ to a jump process as $\mu\to0$. The two phases of an invasion (exponential growth and Lotka-Volterra) are depicted in Figure 2.

%%%%%%%%%%%%%%%%%%%%%%%%%%%%%%%%%%%%%%%%%%%%%%%%%%%%%%%%%%%%%%%%%%%%%%%%%%%%%

\begin{proof}[Theorem \ref{MainThm}]
The proof is split into several parts. The main goal is to inductively approximate $T^\mu_{\eta,i}$ and $\xi^\mu_{t\ln1/\mu}$, similar to Corollary \ref{Corexp}.
We claim that, for each $1\leq i\leq I$ such that $T_i<\infty$,
\begin{align}\label{ClaimTime}
\min_{\substack{y\in\Hn\\ \rho^{i-1}_y>0}}\min_{\substack{z\in\Hn\\f_{z,\x^{i-1}}>0}}\frac{\rho^{i-1}_z+|z-y|-\eta\hat{C}_{i-1}}{f_{z,\x^{i-1}}+\eta\hat{C}}\leq\liminf_{\mu\to0}T^\mu_{\eta,i}-T_{i-1}\notag\\
\leq\limsup_{\mu\to0}T^\mu_{\eta,i}-T_{i-1}\leq\min_{\substack{y\in\Hn\\ \rho^{i-1}_y>0}}\min_{\substack{z\in\Hn\\f_{z,\x^{i-1}}>0}}\frac{\rho^{i-1}_z+|z-y|+\eta\check{C}_{i-1}}{f_{z,\x^{i-1}}-\eta\check{C}}.
\end{align}
Moreover, for each $0\leq i<I$ such that $T_i<\infty$, $T_i<t<T_{i+1}$, there are positive constants $\check{c}_i$, $\check{C}_i$, $\hat{c}_i$, $\hat{C}_i$, and $m$, such that, for every $y\in\Hn$,
\begin{align}\label{ClaimSize}
\check{c}_i&\mu^{\min_{z\in\Hn}[\rho^i_z+|z-y|-(t-T_i)(f_{z,\x^i}-\eta\check{C})]+\eta\check{C}_i}\leq\xi^\mu_{t\ln1/\mu}(y)\notag\\
&\leq\hat{c}_i\mu^{\min_{z\in\Hn}[\rho^i_z+|z-y|-(t-T_i)(f_{z,\x^i}+\eta\hat{C})]-\eta\hat{C}_i}\left(1+t\ln\frac{1}{\mu}\right)^{(i+1)m},
\end{align}
while, for each $x\in\x^i$, $\xi^\mu_{t\ln1/\mu}(x)\in[\bar{\xi}_{\x^i}(x)-\eta\bar{C},\bar{\xi}_{\x^i}(x)+\eta\bar{C}]$.

In the first step, we approximate $|T^\mu_{\eta,i}-T_i|\leq \eta C$, assuming that the claim holds true. Second, we derive a uniform bound on the duration of the $i^\text{th}$ invasion phase, using Lemma \ref{conttau}. In Step 3, we prove the bounds that are claimed above. Finally, we use these bounds to derive the convergence as $\mu\to0$.

\noindent\textit{Step 1:} $|T^\mu_{\eta,i}-T_i|\leq \eta C$.\\
In the case where there exists a $y\in\Hn$ such that $f_{y,x^{i-1}}>0$, we want to relate $T_i$, as defined in (\ref{defTi}), to $T^\mu_{\eta,i}$.

First, we prove a different identity for $T_i$ that is similar to (\ref{ClaimTime}), namely the second equality of
\begin{align}
T_i-T_{i-1}=\min_{\substack{y\in\Hn:\\f_{y,\x^{i-1}}>0}}\frac{\rho_y^{i-1}}{f_{y,\x^{i-1}}}=\min_{\substack{y\in\Hn\\\rho^{i-1}_y>0}}\min_{\substack{z\in\Hn\\f_{z,\x^{i-1}}>0}}\frac{\rho^{i-1}_z+|z-y|}{f_{z,\x^{i-1}}}.\label{jumptime}
\end{align}

On one hand, $f_{y,\x^{i-1}}>0$ implies $\rho^{i-1}_y>0$. The only cases in which $\rho^{i-1}_y=0$ are if $y\in\x^{i-1}$, then $f_{y,\x^{i-1}}=0$, or if $y\in\x^{i-2}\backslash\x^{i-1}$, which implies $f_{y,\x^{i-1}}<0$ (else we would have terminated the procedure after the $(i-1)^\text{st}$ invasion due to case (b) in Theorem \ref{MainThm}). Hence
\begin{align}
\min_{\substack{y\in\Hn\\\rho^{i-1}_y>0}}\min_{\substack{z\in\Hn\\f_{z,\x^{i-1}}>0}}\frac{\rho^{i-1}_z+|z-y|}{f_{z,\x^{i-1}}}
\leq\min_{\substack{y\in\Hn\\f_{y,\x^{i-1}}>0}}\min_{\substack{z\in\Hn\\f_{z,\x^{i-1}}>0}}\frac{\rho^{i-1}_z+|z-y|}{f_{z,\x^{i-1}}}
\leq\min_{\substack{y\in\Hn\\f_{y,\x^{i-1}}>0}}\frac{\rho^{i-1}_y}{f_{y,\x^{i-1}}},
\end{align}
where we inserted $z=y$ in the second step.

On the other hand, if we assume that $\bar{y}$ and $\bar{z}$ realise the minima, which implies that \linebreak $f_{\bar{z},\x^{i-1}}>0$, we obtain
\begin{align}
\min_{\substack{y\in\Hn\\\rho^{i-1}_y>0}}\min_{\substack{z\in\Hn\\f_{z,\x^{i-1}}>0}}\frac{\rho^{i-1}_z+|z-y|}{f_{z,\x^{i-1}}}
=\frac{\rho^{i-1}_{\bar{z}}+|\bar{z}-\bar{y}|}{f_{\bar{z},\x^{i-1}}}
\geq \frac{\rho^{i-1}_{\bar{z}}}{f_{\bar{z},\x^{i-1}}}
\geq\min_{\substack{y\in\Hn\\f_{y,\x^{i-1}}>0}}\frac{\rho^{i-1}_y}{f_{y,\x^{i-1}}}.
\end{align}

Now, under the assumption that (\ref{ClaimTime}) holds true, we approximate
\begin{align}
\liminf_{\mu\to0}T^\mu_{\eta,i}-T_{i-1}\geq&\ \left(\min_{\substack{y\in\Hn\\ \rho^{i-1}_y>0}}\min_{\substack{z\in\Hn\\f_{z,\x^{i-1}}>0}}\frac{\rho^{i-1}_z+|z-y|}{f_{z,\x^{i-1}}}\right)\left(\min_{\substack{z\in\Hn\\f_{z,\x^{i-1}}>0}}\frac{f_{z,\x^{i-1}}}{f_{z,\x^{i-1}}+\eta\hat{C}}\right)\notag\\
&\ -\eta\hat{C}_{i-1}\max_{\substack{z\in\Hn\\f_{z,\x^{i-1}}>0}}\frac{1}{f_{z,\x^{i-1}}+\eta\hat{C}}\notag\\
=&\ (T_i-T_{i-1})\left(1-\max_{\substack{z\in\Hn\\f_{z,\x^{i-1}}>0}}\frac{\eta\hat{C}}{f_{z,\x^{i-1}}+\eta\hat{C}}\right)-\eta\hat{C}_{i-1}\max_{\substack{z\in\Hn\\f_{z,\x^{i-1}}>0}}\frac{1}{f_{z,\x^{i-1}}+\eta\hat{C}}\notag\\
=&\ (T_i-T_{i-1})-\eta((T_i-T_{i-1})\hat{C}+\hat{C}_{i-1})\max_{\substack{z\in\Hn\\f_{z,\x^{i-1}}>0}}\frac{1}{f_{z,\x^{i-1}}+\eta\hat{C}}
\end{align}
and, analogously,
\begin{align}
\limsup_{\mu\to0}T^\mu_{\eta,i}-T_{i-1}\leq&\ (T_i-T_{i-1})+\eta((T_i-T_{i-1})\check{C}+\check{C}_{i-1})\max_{\substack{z\in\Hn\\f_{z,\x^{i-1}}>0}}\frac{1}{f_{z,\x^{i-1}}-\eta\check{C}}.
\end{align}
As a result there is a constant $C>0$ such that, for $\eta$ and $\mu$ small enough,
\begin{align}
|T^\mu_{\eta,i}-T_i|\leq \eta C.
\end{align}

%%%%%%%%%%%%%%%%%%%%%%%%%%%%%%%%%%%%%%%%%%%%%%%%%%

\noindent\textit{Step 2:} Uniform time bound on the Lotka-Volterra phase.\\
We show that, for $\eta$ small enough,
\begin{align}
\tilde{\tau}^\mu_\eta(\xi^\mu_{\tilde{T}^\mu_{\eta,i}},\x^i)=\inf\big\{t\geq0:&\ \forall\ x\in\x^i: |\xi^\mu_{\tilde{T}^\mu_{\eta,i}+t}(x)-\bar{\xi}_{\x^i}(x)|\leq \eta\frac{\bar{c}}{\sqrt{|\x^i|}},\notag\\
&\ \forall\ y\in\Hn\backslash\x^i: \xi^\mu_{\tilde{T}^\mu_{\eta,i}+t}(y)\leq \frac{\eta}{3}\big\}
\end{align}
is bounded by some constant $\bar{T}_\eta$.

Since $\text{LVE}_+(\x^{i-1})=\{\bar{\xi}_{\x^{i-1}}\}$ and $f_{y^i_*,\x^{i-1}}>0$, we obtain $r(y)>0$, for every $y\in(\x^{i-1}\cup y^i_*)$. (B$_{\x^{i-1}\cup y^i_*}$) holds by assumption and hence Lemma \ref{conttau} can be applied to $\y=\x^{i-1}\cup y^i_*$ and $\x=\x^i$.

Let 
\begin{align}
\Omega^i_\eta:=\{\xi: \xi(y^i_*)=\eta,\xi(x)\in[\bar{\xi}_{\x^{i-1}}(x)-\eta\bar{C},\bar{\xi}_{\x^{i-1}}(x)+\eta\bar{C}]\ \forall\ x\in\x^{i-1}, \xi(y)=0\text{ else}\},
\end{align}
then, by continuity of $\bar{\tau}^0_\eta(\xi,\x^i,\x^{i-1}\cup y^i_*)$ in $\xi$ (Lemma \ref{conttau}) and the compactness of $\Omega^i_\eta$,
\begin{align}
\sup_{\xi\in\Omega^i_\eta}\bar{\tau}^0_{\eta}(\xi,\x^i,\x^{i-1}\cup y^i_*)=:\bar{T}_\eta<\infty.
\end{align}

Using Lemma \ref{cont}, for
\begin{align}
\xi:=\begin{cases}\xi^\mu_{\tilde{T}^\mu_{\eta,i}}(x)&x\in\x^{i-1}\cup y^i_*\\0&\text{else}\end{cases}\ \in\Omega^i_\eta,\qquad\bar{\tau}:=\bar{\tau}^0_{\eta}(\xi,\x^i,\x^{i-1}\cup y^i_*),
\end{align}
we obtain, for $x\in\x^i$, $y\in\x^{i-1}\cup y^i_*\backslash\x^i$, $\xi^0_0=\xi$, and $\mu$ small enough, that
\begin{align}
|\xi^\mu_{\tilde{T}^\mu_{\eta,i}+\bar{\tau}}(x)-\bar{\xi}_{\x^i}(x)|&\leq\norm{\xi^\mu_{\tilde{T}^\mu_{\eta,i}+\bar{\tau}}-\xi^0_{\bar{\tau}}}+c_{\x^i}^{-1}\norm{\left.\xi^0_{\bar{\tau}}\right|_{\x^i}-\bar{\xi}_{\x^i}}_{\x^i}\notag\\
&\leq \ee^{\bar{\tau} A}\left(\norm{\xi^\mu_{\tilde{T}^\mu_{\eta,i}}-\xi}+\sqrt{\mu\frac{B}{A}}\right)+\frac{\eta\bar{c}}{2\sqrt{|\x^i|}}\leq\frac{\eta\bar{c}}{\sqrt{|\x^i|}},\\
\xi^\mu_{\tilde{T}^\mu_{\eta,i}+\bar{\tau}}(y)&\leq \norm{\xi^\mu_{\tilde{T}^\mu_{\eta,i}+\bar{\tau}}-\xi^0_{\bar{\tau}}}+\xi^0_{\bar{\tau}}(y)\notag\\
&\leq \ee^{{\bar{\tau}} A}\left(\norm{\xi^\mu_{\tilde{T}^\mu_{\eta,i}}-\xi}+\sqrt{\mu\frac{B}{A}}\right)+\frac{\eta}{6}\leq\frac{\eta}{3}.
\end{align}
Here we used that, for $\eta$ small enough, $\norm{\xi^\mu_{\tilde{T}^\mu_{\eta,i}}-\xi}\leq 2^n\max_{y\in\Hn\backslash(\x^{i-1}\cup y^i_*)}\xi^\mu_{\tilde{T}^\mu_{\eta,i}}(y)$ tends to zero as $\mu\to0$. A more precise approximation for this is given in Step 3 and 4.

Overall, $\tilde{\tau}^\mu_{\eta}(\xi^\mu_{\tilde{T}^\mu_{\eta,i}},\x^i)\leq \bar{\tau}\leq \bar{T}_\eta$.

%%%%%%%%%%%%%%%%%%%

\noindent\textit{Step 3:} Approximation of $\xi^\mu_{t\ln1/\mu}$ and $T^\mu_{\eta,i}$.\\
We now turn to the proof of (\ref{ClaimTime}) and (\ref{ClaimSize}).

(\ref{ClaimSize}) in the case of $i=0$ is given by Theorem \ref{Thmexp} and Corollary \ref{Corexp}, setting $\check{c}_0:=\check{c}$, $\check{C}_i:=0$, $\hat{c}_0:=2^n\hat{c}$, and $\hat{C}_i:=0$ and using that by Step 1, for every $t<T_1$, there are $\eta$ and $\mu$ small enough such that $t<T^\mu_{\eta,1}$. Corollary \ref{Corexp} also gives(\ref{ClaimTime}) for $i=1$.

Assuming that the claims holds for $0\leq i-1<I$, $T_i<\infty$ implies that there is some $y'\in\Hn$ for which $f_{y',\x^{i-1}}>0$, and hence, for every $y\in\Hn$,
\begin{align}
\check{c}_{i-1}&\mu^{\min_{z\in\Hn}[\rho^{i-1}_z+|z-y|-(T^\mu_{\eta,i}-T_{i-1})(f_{z,\x^{i-1}}-\eta\check{C})]+\eta\check{C}_{i-1}}\leq\xi^\mu_{\tilde{T}^\mu_{\eta,i}}(y)\notag\\
&\leq\hat{c}_{i-1}\mu^{\min_{z\in\Hn}[\rho^{i-1}_z+|z-y|-(T^\mu_{\eta,i}-T_{i-1})(f_{z,\x^{i-1}}+\eta\hat{C})]-\eta\hat{C}_{i-1}}\left(1+\tilde{T}^\mu_{\eta,i}\right)^{im}.
\end{align}
Moreover, $\xi^\mu_{\tilde{T}^\mu_{\eta,i}}(y^i_*)=\eta$ and, for every $x\in\x^{i-1}$, $\xi^\mu_{\tilde{T}^\mu_{\eta,i}}(x)\in[\bar{\xi}_{\x^{i-1}}(x)-\eta\bar{C},\bar{\xi}_{\x^{i-1}}(x)+\eta\bar{C}]$. Similar to Corollary \ref{Corexp}, we obtain (\ref{ClaimTime}).

Next, we estimate the evolution of the different types during the Lotka-Volterra phase. Lemma \ref{cont} gives $\xi^\mu_t(z)\leq2|r(z)|/\alpha(z,z)$, for all $z\in\Hn$ and $t\geq 0$, and therefore
\begin{align}
\tfrac{d}{dt}\xi^\mu_t(y)\geq\left[r(y)-\sum_{z\in\Hn}\alpha(y,z)\frac{2|r(z)|}{\alpha(z,z)}-\mu b(y)\right]\xi^\mu_t(y)\geq -K\xi^\mu_t(y),
\end{align}
for some $K>0$.

By Step 2, we know that $\tilde{\tau}(\xi^\mu_{\tilde{T}^\mu_{\eta,i}},\x^i)\leq \bar{T}_\eta$ and hence (4.39) yields
\begin{align}
\xi^\mu_{\tilde{T}^\mu_{\eta,i}+\tilde{\tau}(\xi^\mu_{\tilde{T}^\mu_{\eta,i}},\x^i)}(y)\geq \ee^{-K\bar{T}_\eta}\check{c}_{i-1}\mu^{\min_{z\in\Hn}[\rho^{i-1}_z+|z-y|-(T^\mu_{\eta,i}-T_{i-1})(f_{z,\x^{i-1}}-\eta\check{C})]+\eta\check{C}_{i-1}}
\end{align}

Using Step 1, we can approximate
\begin{align}
\min_{z\in\Hn}& [\rho^{i-1}_z+|z-y|-(T^\mu_{\eta,i}-T_{i-1})(f_{z,\x^{i-1}}-\eta\check{C})]+\eta\check{C}_{i-1}\notag\\
=&\ \min_{z\in\Hn}[\rho^{i-1}_z+|z-y|-(T^\mu_{\eta,i}-T_{i-1})f_{z,\x^{i-1}}]+\eta(\check{C}_{i-1}+(T^\mu_{\eta,i}-T_{i-1})\check{C})\notag\\
\leq&\ \rho^i_y+\eta(\check{C}_{i-1}+(T^\mu_{\eta,i}-T_{i-1})\check{C}+C\max_{z\in\Hn}f_{z,\x^{i-1}}).
\end{align}

We now plug this back in as the exponent and set $\check{c}'_i:=\ee^{-K\bar{T}_\eta}\check{c}_{i-1}$ as well as \linebreak $\check{C}'_i\geq\check{C}_{i-1}+(T^\mu_{\eta,i}-T_{i-1})\check{C}+C\max_{z\in\Hn}f_{z,\x^{i-1}}$ to derive
\begin{align}
\xi^\mu_{\tilde{T}^\mu_{\eta,i}+\tilde{\tau}(\xi^\mu_{\tilde{T}^\mu_{\eta,i}},\x^i)}(y)\geq\check{c}'_i\mu^{\rho^i_y+\eta\hat{C}'_i} .
\end{align}
Note that $\check{C}'_i$ can be chosen uniformly in $\eta$ since $T^\mu_{\eta,i}\leq T_i+\eta C$ by Step 1, while $\check{c}'_i$ may depend on $\eta$.

On the other hand,
\begin{align}
\tfrac{d}{dt}\xi^\mu_t(y)\leq r(y)\xi^\mu_t(y)+\mu\tilde{C}\sum_{z\sim y}\xi^\mu_t(z).
\end{align}

Following the same argument as for the upper bound in (\ref{preexp}) (compare Step 2 of the proof of Theorem \ref{Thmexp}, with $t=\tilde{\tau}(\xi^\mu_{\tilde{T}^\mu_{\eta,i}},\x^i)$ and $\xi^\mu_{\tilde{T}^\mu_{\eta,i}}$ instead of $\xi^\mu_0$), we obtain
\begin{align}
\xi^\mu_{\tilde{T}^\mu_{\eta,i}+\tilde{\tau}(\xi^\mu_{\tilde{T}^\mu_{\eta,i}},\x^i)}(y)&\leq \hat{c} \ee^{\tilde{\tau}(\xi^\mu_{\tilde{T}^\mu_{\eta,i}},\x^i)\max_{z\in\Hn}r(z)}(1+\tilde{\tau}(\xi^\mu_{\tilde{T}^\mu_{\eta,i}},\x^i))^m\sum_{z\in\Hn}\xi^\mu_{\tilde{T}^\mu_{\eta,i}}(z)\mu^{|z-y|}.
\end{align}
By Step 1,
\begin{align}
\min_{z'\in\Hn}& [\rho^{i-1}_{z'}+|z'-z|-(T^\mu_{\eta,i}-T_{i-1})(f_{z',\x^{i-1}}+\eta\hat{C})]-\eta\hat{C}_{i-1}+|z-y|\notag\\
\geq&\ \min_{z'\in\Hn}[\rho^{i-1}_{z'}+|z'-y|-(T^\mu_{\eta,i}-T_{i-1})f_{z',\x^{i-1}}]-\eta(\hat{C}_{i-1}+(T^\mu_{\eta,i}-T_{i-1})\hat{C})\notag\\
\geq&\ \rho^i_y-\eta(\hat{C}_{i-1}+(T^\mu_{\eta,i}-T_{i-1})\hat{C}+C\max_{z\in\Hn}f_{z,\x^{i-1}}).
\end{align}
Using this and Step 2, we derive
\begin{align}
\xi^\mu_{\tilde{T}^\mu_{\eta,i}+\tilde{\tau}(\xi^\mu_{\tilde{T}^\mu_{\eta,i}},\x^i)}(y)&\leq\hat{c} \ee^{\bar{T}_\eta\max_{z\in\Hn}r(z)}(1+\bar{T}_\eta)^m\notag\\
&\qquad\cdot\sum_{z\in\Hn}\hat{c}_{i-1}\mu^{\rho^i_y-\eta(\hat{C}_{i-1}+(T^\mu_{\eta,i}-T_{i-1})\hat{C}+C\max_{z\in\Hn}f_{z,\x^{i-1}})}\left(1+\tilde{T}^\mu_{\eta,i}\right)^{im}\notag\\
&\leq\hat{c}'_i\left(1+\tilde{T}^\mu_{\eta,i}\right)^{im}\mu^{\rho^i_y-\eta\hat{C}'_i}
\end{align}
where $\hat{c}'_i:=2^n\hat{c} \ee^{\bar{T}_\eta\max_{z\in\Hn}r(z)}(1+\bar{T}_\eta)^m\hat{c}_{i-1}$ and $\hat{C}'_i\geq\hat{C}_{i-1}+(T^\mu_{\eta,i}-T_{i-1})\hat{C}+C\max_{z\in\Hn}f_{z,\x^{i-1}}$. As above, $\hat{C}'_i$ can be chosen uniformly in $\eta$ since $T^\mu_{\eta,i}\leq T_i+\eta C$ by Step 1, while $\hat{c}'_i$ may depend on $\eta$.

For $\tilde{\tau}(\xi^\mu_{\tilde{T}^\mu_{\eta,i}},\x^i)=\tau(\xi^\mu_{\tilde{T}^\mu_{\eta,i}},\x^i)\ln\frac{1}{\mu}$ and $\mu$ small enough, Step 1 implies
\begin{align}
|T^\mu_{\eta,i}+\tau(\xi^\mu_{\tilde{T}^\mu_{\eta,i}},\x^i)-T_i|\leq\eta C+\frac{\bar{T}_\eta}{\ln\frac{1}{\mu}}\leq2\eta C.
\end{align}
For $T_i<t<T_{i+1}$, we can now pick $\eta$ small enough such that $T_i+2\eta C<t<T_{i+1}-\eta C$, and hence
\begin{align}
\limsup_{\mu\to 0}T^\mu_{\eta,i}+\tau(\xi^\mu_{\tilde{T}^\mu_{\eta,i}},\x^i)<t<\liminf_{\mu\to0}T^\mu_{\eta,i+1}.
\end{align}
As in Corollary \ref{Corexp}, with the above bounds on $\xi^\mu_{\tilde{T}^\mu_{\eta,i}+\tilde{\tau}(\xi^\mu_{\tilde{T}^\mu_{\eta,i}},\x^i)}$, we derive
\begin{align}
\xi^\mu_{t\ln 1/\mu}(y)&\geq\check{c}\check{c}'_i\mu^{\min_{z\in\Hn}[\rho^i_z+\eta\check{C}'_i+|z-y|-(t-(T^\mu_{\eta,i}+\tau(\xi^\mu_{\tilde{T}^\mu_{\eta,i}},\x^i)))(f_{z,\x^i}-\eta\check{C})]}\notag\\
&\geq\check{c}\check{c}'_i\mu^{\min_{z\in\Hn}[\rho^i_z+|z-y|-(t-T_i)(f_{z,\x^i}-\eta\check{C})]+\eta(\check{C}'_i+2C\max_{z\in\Hn}(f_{z,\x^i}-\eta\check{C}))}\notag\\
&=\check{c}_i\mu^{\min_{z\in\Hn}[\rho^i_z+|z-y|-(t-T_i)(f_{z,\x^i}-\eta\check{C})]+\eta\check{C}_i},
\end{align}
defining $\check{c}_i:=\check{c}\check{c}'_i$ and $\check{C}_i:=\check{C}'_i+2C\max_{z\in\Hn}(f_{z,\x^i}-\eta\check{C})$.

Similarly, the upper bound is derived as
\begin{align}
\xi^\mu_{t\ln 1/\mu}&(y)\leq2^n\hat{c}\hat{c}'_i\mu^{\min_{z\in\Hn}[\rho^i_z-\eta\hat{C}'_i+|z-y|-(t-(T^\mu_{\eta,i}+\tau(\xi^\mu_{\tilde{T}^\mu_{\eta,i}},\x^i)))(f_{z,\x^i}+\eta\hat{C})]}\notag\\
&\qquad\cdot(1+\tilde{T}^\mu_{\eta,i})^{im}\left(1+\left(t\ln\frac{1}{\mu}-(\tilde{T}^\mu_{\eta,i}+\tilde{\tau}(\xi^\mu_{\tilde{T}^\mu_{\eta,i}},\x^i))\right)\right)^m\notag\\
&\leq 2^n\hat{c}\hat{c}'_i\mu^{\min_{z\in\Hn}[\rho^i_z+|z-y|-(t-T_i)(f_{z,\x^i}+\eta\hat{C})]-\eta(\hat{C}'_i+2C\max_{z\in\Hn}(f_{z,\x^i}+\eta\hat{C}))}\left(1+t\ln\frac{1}{\mu}\right)^{(i+1)m}\notag\\
&=\hat{c}_i\mu^{\min_{z\in\Hn}[\rho^i_z+|z-y|-(t-T_i)(f_{z,\x^i}+\eta\hat{C})]-\eta\hat{C}_i}\left(1+t\ln\frac{1}{\mu}\right)^{(i+1)m},
\end{align}
with $\hat{c}_i:=2^n\hat{c}\hat{c}'_i$ and $\hat{C}_i:=\hat{C}'_i+2C\max_{z\in\Hn}(f_{z,\x^i}+\eta\hat{C})$. This concludes the proof of (\ref{ClaimSize}).

Notice, that, although $\check{c}_i$ and $\hat{c}_i$ may vary for different $\eta$, $\check{C}_i$ and $\hat{C}_i$ can be chosen uniformly in $\eta$.

For every $x\in\x^i$, we obtain $\xi^\mu_{t\ln1/\mu}(x)\in[\bar{\xi}_{\x^i}(x)-\eta\bar{C},\bar{\xi}_{\x^i}(x)+\eta\bar{C}]$, as in Theorem \ref{Thmexp}.

%%%%%%%%%%%%%%%%%%%%%%%%%%%%%%%%%%%%%%%%%%%%%%%%%%%%%%%%%%%%%%%%%%%%%%%%%%

\noindent\textit{Step 4:} Convergence for $T_i<t<T_{i+1}$.\\
We now want to prove the actual convergence. We already know that the resident types are staying close to their equilibrium between $T_i$ and $T_{i+1}$ and therefore mainly have to show that the population sizes of the non-resident types vanish as $\mu\to0$.

We claim that, for each $i\geq0$, $T_i<t<T_{i+1}$, and $y\in\Hn\backslash\x^i$,
\begin{align}
\min_{z\in\Hn}[\rho^i_z+|z-y|-(t-T_i)(f_{z,\x^i}+\eta\hat{C})]-\eta\hat{C}_i\geq\gamma,
\end{align}
for some $\gamma>0$ and all $\eta$ small enough, and hence
\begin{align}
0\leq\lim_{\mu\to0}\xi^\mu_{t\ln1/\mu}(y)\leq\lim_{\mu\to0}\hat{c}_i\mu^\gamma\left(1+t\ln\frac{1}{\mu}\right)^{(i+1)m}=0.
\end{align}

We distinguish several cases. If $z\in\x^i$, this implies $f_{z,\x^i}=0$, $\rho^i_z=0$, and $|z-y|\geq1$. Hence
\begin{align}
\rho^i_z+|z-y|-(t-T_i)(f_{z,\x^i}+\eta\hat{C})-\eta\hat{C}_i\geq 1-\eta((t-T_i)\hat{C}+\hat{C}_i).
\end{align}

If $z\in\Hn\backslash\x^i$ and $\rho^i_z=0$, this implies $f_{z,\x^i}<0$ and
\begin{align}
\rho^i_z+|z-y|-(t-T_i)(f_{z,\x^i}+\eta\hat{C})-\eta\hat{C}_i\geq -(t-T_i)f_{z,\x^i}-\eta((t-T_i)\hat{C}+\hat{C}_i).
\end{align}

If $z\in\Hn\backslash\x^i$, $\rho^i_z>0$, and $f_{z,\x^i}\leq0$, we get
\begin{align}
\rho^i_z+|z-y|-(t-T_i)(f_{z,\x^i}+\eta\hat{C})-\eta\hat{C}_i\geq\rho^i_z-\eta((t-T_i)\hat{C}+\hat{C}_i).
\end{align}

Since $\check{C}_i$ does not depend on $\eta$, all these expressions can be bounded from below by a positive constant $\gamma$ if $\eta$ is small enough.

Finally, if $z\in\Hn\backslash\x^i$, $\rho^i_z>0$, and $f_{z,\x^i}>0$, we obtain $t<T_{i+1}\leq\rho^i_z/f_{z,\x^i}+T_i$ and, for $\eta$ and $\gamma$ small enough, $t-T_i<(\rho^i_z-\eta\hat{C}_i-\gamma)/(f_{z,\x^i}+\eta\hat{C})$. Therefore,
\begin{align}
\rho^i_z+|z-y|-(t-T_i)(f_{z,\x^i}+\eta\hat{C})-\eta\hat{C}_i>\rho^i_z-\eta\hat{C}_i-(\rho^i_z-\eta\check{C}_i-\gamma)=\gamma.
\end{align}
This proves the claim, in particular in the case where $T_{i+1}=\infty$ and there is no $y\in\Hn$ such that $f_{y,\x^i}>0$.

Last, we consider the $x\in\x^i$. For every $\eta$ small enough, 
\begin{align}
\lim_{\mu\to0}\xi^\mu_{t\ln1/\mu}(x)\in[\bar{\xi}_{\x^i}(x)-\eta\bar{C},\bar{\xi}_{\x^i}(x)+\eta\bar{C}].
\end{align}
As a result, $\lim_{\mu\to0}\xi^\mu_{t\ln1/\mu}(x)=\bar{\xi}_{\x^i}(x)$ and
\begin{align}
\lim_{\mu\to0}\xi^\mu_{t\ln1/\mu}=\sum_{x\in\x^i}\delta_x\bar{\xi}_{\x^i}(x).
\end{align}
\end{proof}

%%%%%%%%%%%%%%%%%%%%%%%%%%%%%%%%%%%%%%%%%%%%%%%%%%%%%%%%%%%%%%%%%%%%%%%%%%%%%%%%%%%%%%%%%%%%%%%%%%%%%%%%%%%%%%%%%%%%%%%%%%%%%%%%%%%%%%%%%%%%%%%%%%%%%%%%%%%%%%%%%%%%%%%%%%%%%%%%%%%%%%%%

\section{Special Case of Equal Competition}

In this section we turn to the proof of Theorem \ref{EqComp}, the special case of equal competition between types. We go through the proof of Theorem \ref{MainThm} to make changes where assumptions are no longer satisfied and check the identities for $x^i$ and $T_i$.

\begin{proof}[Theorem \ref{EqComp}]
%The initial conditions $\text{IC}(x^0,\eta,\bar{c})$ are satisfied with 
%\begin{align}
%\lambda_y\equiv n+1, c_y=0, C_y\equiv1,\qquad\forall y\in\Hn\backslash \{x^0\}.
%\end{align}
%
Unfortunately, assumption (B$_\x$) is not satisfied since there are no constants $\theta_x$ such that $(\theta_x\alpha)_{x,y\in\x}$ is positive definite for $|\x|\geq 2$. To still be able to apply the results of Theorem \ref{MainThm}, we have to carefully go through all the points, where assumption (B$_\x$) was used.

In the proof of Theorem \ref{Thmexp}, this property is only used for the resident types $\x$. In the case where $\x$ consists of a single type, the positive definiteness is trivially satisfied since $\alpha>0$.

In the case of Theorem \ref{LVThm}, we have to argue differently in a few places. \cite{ChJaRa10} derive Proposition 1 from a more general theorem. If one adapts the proof of this theorem to our situation, one sees that assumption (B$_\x$) is first used to prove that there are only finitely many equilibrium points.
In our special case, we are only considering Lotka-Volterra systems involving the old resident type $x^{i-1}$ and the minimizing mutant $y^i_*=x^i$. An equilibrium point $\xi^*\in (\R_{\geq0})^{\{x^{i-1},x^i\}}$ has to satisfy
\begin{align}
&\xi^*(x^{i-1})=0\text{ or }r(x^{i-1})=\alpha(\xi^*(x^{i-1})+\xi^*(x^i)),\notag\\
\text{and }&\xi^*(x^i)=0\text{ or }r(x^i)=\alpha(\xi^*(x^{i-1})+\xi^*(x^i)).
\end{align}
Since $f_{x^i,x^{i-1}}>0$, we obtain $r(x^i)>r(x^{i-1})$ and there are only three equilibrium points, namely $(0,0)$, $(r(x^{i-1})/\alpha,0)$, and $(0,r(x^i)/\alpha)$.

Moreover, assumption (B$_\x$) is used to prove that the evolutionary stable state (if existent) is unique. An evolutionary stable state $\bar{\xi}\in (\R_{\geq0})^{\{x^{i-1},x^i\}}$ is characterised by
\begin{align}
\begin{cases}r(x^j)-\alpha(\bar{\xi}(x^{i-1})+\bar{\xi}(x^i))\leq0,\text{if }\bar{\xi}(x^j)=0,\\
r(x^j)-\alpha(\bar{\xi}(x^{i-1})+\bar{\xi}(x^i))=0,\text{if }\bar{\xi}(x^j)>0,\end{cases}
\end{align}
for $j\in\{i-1,i\}$. Since $f_{i,i+1}>0$, only the last of the three equilibrium points satisfies these assumptions,
\begin{align}
r(x^{i-1})-\alpha(\bar{\xi}(x^{i-1})+\bar{\xi}(x^i))=r(x^{i-1})-\alpha\left(0+\frac{r(x^i)}{\alpha}\right)=-f_{i,i-1}\leq0,\\
r(x^i)-\alpha(\bar{\xi}(x^{i-1})+\bar{\xi}(x^i))=r(x^i)-\alpha\left(0+\frac{r(x^i)}{\alpha}\right)=0.
\end{align}

Finally, in Lemma \ref{conttau}, we are again in the situation where $\x$ consists of only one type and hence the positive definiteness is trivial.

The only thing left is to show the identities for $x^i$ and $T_i$. We claim that, for $i\geq0$,
\begin{align}
\rho^{i+1}_y=\min_{z_{i+1}\in\Hn}\cdots\min_{z_1\in\Hn}
  \Bigg[&|y-z_{i+1}|+\sum\limits_{j=1}^{i}|z_{j+1}-z_j|+|z_1-x^0|\notag\\
&-f_{z_1,x^0}T_1-\sum_{j=1}^{i} f_{z_{j+1},x_j}(T_{j+1}-T_j)\Bigg].
\end{align}

From the initial condition we obtain $\rho^0_y=\min_{z\in\Hn}[\lambda_z+|z-y|]=|y-x^0|$. Hence,
\begin{align}
y^1_*=\argmin_{y\in\Hn:f_{y,x^0}>0}\frac{|y-x^0|}{f_{y,x^0}}
\end{align}
and
\begin{align}
T_1=\min_{\substack{y\in\Hn:\\f_{y,x^0}>0}}\frac{|y-x^0|}{f_{y,x^0}}.
\end{align}
Since $f_{y^1_*,x^0}=r(y^1_*)-r(x^0)>0$, the new equilibrium is monomorphic of type $x^1=y^1_*$ and $T_1=|x^1-x^0|/f_{1,0}$. Moreover,
\begin{align}
\rho^1_y=\min_{z\in\Hn}[\rho^0_z+|z-y|-T_1f_{z,x^0}]=\min_{z\in\Hn}\left[|y-z|+|z-x^0|-f_{z,x^0}T_1\right].
\end{align}

Assume that $x^i$, $T_i$, and $\rho^i_y$ are of the proposed form. Then there is a unique
\begin{align}
&x^{i+1}=y^{i+1}_*=\argmin_{y\in\Hn:f_{y,x^i}>0}\frac{\rho^i_y}{f_{y,x^i}}\notag\\
&=\argmin_{y\in\Hn:f_{y,x^i}>0}\frac{\min_{z_i\in\Hn}\left[|y-z_i|+\rho^{i-1}_{z_i}-f_{z_i,x^{i-1}}(T_i-T_{i-1})\right]}{f_{y,x^i}}\notag\\
&=\argmin_{y\in\Hn:f_{y,x^i}>0}\min_{z_i\in\Hn}F(y,z_i),
\end{align}
where the last equality serves as the definition of the function $F:\Hn\times\Hn\to\R_+$.

Assume that the minimum over $z_i$ is only realised by some $\bar{z}\neq y^{i+1}_*$, i.e.
\begin{align}
\min_{\substack{y\in\Hn\\f_{y,x^i}>0}}\min_{z_i\in\Hn}F(y,z_i)=\min_{z_i\in\Hn}F(y^{i+1}_*,z_i)=F(y^{i+1}_*,\bar{z})<F(y^{i+1}_*,y^{i+1}_*).
\end{align}
Looking back at the definition of $F$ and using that
\begin{align}
\rho^{i-1}_{y^{i+1}_*}&=\min_{z\in\Hn}[\rho^{i-2}_z+|z-y^{i+1}_*|-(T_{i-1}-T_{i-2})f_{z,\x^{i-2}}]\notag\\
&\leq\min_{z\in\Hn}[\rho^{i-2}_z+|z-\bar{z}|-(T_{i-1}-T_{i-2})f_{z,\x^{i-2}}]+|\bar{z}-y^{i+1}_*|\notag\\
&=\rho^{i-1}_{\bar{z}}+|y^{i+1}_*-\bar{z}|,
\end{align}
this yields
\begin{align}
0\leq |y^{i+1}_*-\bar{z}|+\rho^{i-1}_{\bar{z}}-\rho^{i-1}_{y^{i+1}_*}<(f_{\bar{z},x^{i-1}}-f_{y^{i+1}_*,x^{i-1}})(T_i-T_{i-1})
\end{align}
and, since $T_i>T_{i-1}$, we obtain $f_{\bar{z},x^{i-1}}>f_{y^{i+1}_*,x^{i-1}}>0$. But this would imply
\begin{align}
\min_{z_i\in\Hn}F(\bar{z},z_i)\leq F(\bar{z},\bar{z})<F(y^{i+1}_*,\bar{z})=\min_{\substack{y\in\Hn\\f_{y,x^i}>0}}\min_{z_i\in\Hn}F(y,z_i),
\end{align}
which is a contradiction. Hence, $\bar{z}$ can be chosen equal to $y^{i+1}_*$.

Repeating the previous argument shows that the minimum oder $z_1,...,z_{i+1}$ is achieved at $z_1=...=z_{i+1}=y$ and hence
\begin{align}
x^{i+1}&=\argmin_{y\in\Hn:f_{y,x^i}>0}\frac{\rho^{i-1}_y-f_{y,x^{i-1}}(Ti-T_{i-1})}{f_{y,x^i}}=\dots\notag\\
&=\argmin_{y\in\Hn:f_{y,x^i}>0}\frac{|y-x^0|-f_{y,x^0}\frac{|x^1-x^0|}{f_{1,0}}-\sum_{j=1}^{i-1}f_{y,x^j}(T_{j+1}-T_j)}{f_{y,x^i}}\notag\\
&=\argmin_{y\in\Hn:f_{y,x^i}>0}\frac{|y-x^0|}{f_{y,x^i}}-\frac{f_{y,x^{i-1}}}{f_{y,x^i}}T_i-\sum_{j=1}^{i-1}\frac{f_{y,x^{j-1}}-f_{y,x^j}}{f_{y,x^i}}T_j\notag\\
&=\argmin_{y\in\Hn:f_{y,x^i}>0}\frac{|y-x^0|}{f_{y,x^i}}-\frac{f_{y,x^{i-1}}(|x^i-x^0|-|x^{i-1}-x^0|)}{f_{y,x^i}f_{i,i-1}}\notag\\
&\qquad\qquad\qquad\qquad-\sum_{j=1}^{i-1}\frac{|x^j-x^0|-|x^{j-1}-x^0|}{f_{j,j-1}}\frac{f_{y,x^{j-1}}-f_{y,x^j}}{f_{y,x^i}}\notag\\
&=\argmin_{y\in\Hn:f_{y,x^i}>0}\frac{|y-x^0|}{f_{y,x^i}}-(|x^i-x^0|-|x^{i-1}-x^0|)\left(\frac{1}{f_{i,i-1}}+\frac{1}{f_{y,x^i}}\right)\notag\\
&\qquad\qquad\qquad\qquad-\frac{|x^{i-1}-x^0|-|x^0-x^0|}{f_{y,x^i}}\notag\\
&=\argmin_{y\in\Hn:f_{y,x^i}>0}\frac{|y-x^0|-|x^i-x^0|}{f_{y,x^i}}-T_i,
\end{align}
where we use (\ref{trans}) several times. Analogously,
\begin{align}
T_{i+1}&=T_i+\min_{\substack{y\in\Hn:\\f_{y,x^i}>0}}\frac{\rho^i_y}{f_{y,x^i}}=T_i+\left(\frac{|x^{i+1}-x^0|-|x^i-x^0|}{f_{i+1,i}}-T_i\right)=\frac{|x^{i+1}-x^0|-|x^i-x^0|}{f_{i+1,i}}.
\end{align}

Finally,
\begin{align}
\rho^{i+1}_y=\min_{z_{i+1}\in\Hn}[\rho^i_{z_{i+1}}+|z_{i+1}-y|-(T_{i+1}-T_i)f_{z_{i+1},x^i}],
\end{align}
which is of the desired form. This proves the claim and hence the theorem.
\end{proof}

%%%%%%%%%%%%%%%%%%%%%%%%%%%%%%%%%%%%%%%%%%%%%%%%%%%%%%%%%%%%%%%%%%%%%%%%%%%%%%%%%%%%%%%%%%%%%%%%%%%%%%%%%%%%%%%%%%%%%%%%%%%%%%%%%%%%%%%%%%%%%%%%%%%%%%%%%%%%%%%%%%%%%%%%%%%%%%%%%%%%%%%%

\section{A First Look at Limited Range of Mutation}

In this section we present the proof of Theorem \ref{AW}, where $\ell=1$, and take a first look at the intermediate cases of $1<\ell<n$.

\subsection{Proof for the case $\ell=1$}

We again go over the previous proofs and make alterations where necessary.
\begin{proof}[Theorem \ref{AW}]
We only consider the first invasion step. We can assume that $\eta<\bar{\xi}$. Consequently, up to time $\tilde{T}^\mu_{\eta,1}\land\inf\{t\geq0:\exists\ z\in\Hn, |z-x^0|>1,\xi^\mu_t(z)\geq\bar{\xi}\mu\}$, the neighbours of type $x^0$ are the only active mutants. As before,
\begin{align}
\xi^\mu_t(x^0)\in [\bar{\xi}_{x^0}(x^0)-\eta\bar{C},\bar{\xi}_{x^0}(x^0)+\eta\bar{C}].
\end{align}
Moreover, as in (\ref{upbnd}) and (\ref{lowbnd}), we obtain
\begin{align}
[f_{x^0,x^0}-\eta\check{C}]\xi^\mu_t(x^0)\leq\tfrac{d}{dt}\xi^\mu_t(x^0)\leq[f_{x^0,x^0}+\eta\hat{C}]\xi^\mu_t(x^0),
\end{align}
and with $f_{x^0,x^0}=0$, $c:=\bar{\xi}_{x^0}(x^0)-\bar{c}\bar{\xi}$, and $C:=\bar{\xi}_{x^0}(x^0)+\bar{c}\bar{\xi}$,
\begin{align}
c\ee^{-t\eta\check{C}}\leq\xi^\mu_t(x^0)\leq C\ee^{t\eta\hat{C}}.
\end{align}

Considering the neighbours $y\sim x^0$ of the resident type, we derive
\begin{align}
[f_{y,x^0}-\eta\check{C}]\xi^\mu_t(y)+\mu\tilde{c}\xi^\mu_t(x^0)\leq\tfrac{d}{dt}\xi^\mu_t(y)\leq[f_{y,x^0}+\eta\hat{C}]\xi^\mu_t(y)+\mu\tilde{C}\xi^\mu_t(x^0),
\end{align}
and hence 
%for a small $t_0>0$, using $\lambda_y\geq 1$,%$\xi^\mu_0(y)=0$,
%\begin{align}\label{initsize}
%\xi^\mu_\frac{t_0}{2}(y)\geq \ee^{\frac{t_0}{2}(f_{y,x^0}-\eta\check{C})}c_y\mu^{\lambda_y}+\mu\tilde{c}c\int_0^\frac{{t_0}}{2} \ee^{-s\eta\check{C}}\ee^{\left(\frac{{t_0}}{2}-s\right)(f_{y,x^0}-\eta\check{C})}ds\geq c_{t_0}'\mu,
%\end{align}
%for some $c_{t_0}'>0$, uniformly in $y\sim x^0$, $\eta<\bar{\xi}/2$, and $\mu$.
%
%Consequently, for ${t_0}\leq t<\tilde{T}^\mu_{\eta,1}$,
%\begin{align}
%\xi^\mu_t(y)&\geq \ee^{\left(t-\frac{{t_0}}{2}\right)(f_{y,x^0}-\eta\check{C})}c_{t_0}'\mu+\mu\tilde{c}c\int_\frac{{t_0}}{2}^t \ee^{-s\eta\check{C}}\ee^{(t-s)(f_{y,x^0}-\eta\check{C})}ds\notag\\
%&\geq c_{t_0}'\ee^{-\frac{{t_0}}{2}(f_{y,x^0}-\eta\check{C})}\ee^{t(f_{y,x^0}-\eta\check{C})}\mu+\mu\tilde{c}c\ee^{t(f_{y,x^0}-\eta\check{C})}\int_\frac{{t_0}}{2}^t\ee^{-sf_{y,x^0}}ds\notag\\
%&\geq \check{c}'\mu \ee^{-t\eta\check{C}}(\ee^{tf_{y,x^0}}\land1),
%\end{align}
%for some $\check{c}'>0$, uniformly in $y\sim x^0$, $\eta<\bar{\xi}/2$, and $\mu$.
%For 
the upper bound,
\begin{align}
\xi^\mu_t(y)&\leq \ee^{t(f_{y,x^0}+\eta\hat{C})}C_y\mu^{\lambda_y}+\mu\tilde{C}C\int_0^t\ee^{s\eta\hat{C}}\ee^{(t-s)(f_{y,x^0}+\eta\hat{C})}ds\notag\\
&\leq\mu \ee^{t(f_{y,x^0}+\eta\hat{C})}\left(C_y\mu^{\lambda_y-1}+\tilde{C}C\int_0^t\ee^{-sf_{y,x^0}}ds\right)\notag\\
&\leq \hat{c}'\mu \ee^{t\eta\hat{C}}\left((1+t)\ee^{tf_{y,x^0}}+1\right),
\end{align}
for some $\hat{c}'<\infty$, uniformly in $y\sim x^0$, $\eta<\bar{\xi}$, and $\mu$.

A similar lower bound can be shown and, on the $\ln1/\mu$-time scale, we obtain
\begin{align}
\check{c}'\mu^{((1-tf_{y,x^0})\land1)+t\eta\check{C}}\leq\xi^\mu_{t\ln\frac{1}{\mu}}(y)\leq\hat{c}'\mu^{((1-tf_{y,x^0})\land1)-t\eta\hat{C}}\left(1+t\ln\frac{1}{\mu}\right).
\end{align}

Using this bound, all types $z$ such that $|z-x^0|=2$ can be bounded from above using the same type of calculation to derive
\begin{align}
\xi^\mu_{t\ln\frac{1}{\mu}}(z)\leq C\mu^{2-t\eta\hat{C}}\left(\left(1+t\ln\frac{1}{\mu}\right)^2\mu^{-t\max_{y\sim x^0}f_{y,x^0}}+1\right).
\end{align}
Hence, for $\eta$ small enough, $\tilde{T}^\mu_{\eta,1}\approx\inf\{t\geq0:\exists\ z\in\Hn, |z-x^0|>1,\xi^\mu_t(z)\geq\bar{\xi}\mu\}$.

As in Corollary \ref{Corexp}, we can now argue that
\begin{align}
\min_{\substack{y\sim x^0\\f_{y,x^0}>0}}\frac{1}{f_{y,x^0}+\eta\check{C}}\leq \liminf_{\mu\to0}T^\mu_{\eta,1}\leq\limsup_{\mu\to0}T^\mu_{\eta,1}\leq\min_{\substack{y\sim x^0\\f_{y,x^0}>0}}\frac{1}{f_{y,x^0}-\eta\check{C}}.
\end{align}
The first mutant $y^1_*$ to reach the $\eta$-level is the neighbour of $x^0$ minimising $1/f_{y,x^0}$ (given $f_{y,x^0}>0$), hence maximising $r(y)$, which is unique (or else we set $I:=i$ and terminate the procedure). This yields $T^\mu_{\eta,1}\approx T_1=1/f_{y^1_*,x^0}$.

The Lotka-Volterra phase can be analysed just as before. Since $y^1_*$ satisfies $r(y^1_*)>r(x^0)$, the new equilibrium has $x^1=y^1_*$ as the only resident type.

Since, for every other $y\sim x^0$, $r(y)<r(x^1)$, these types always stay unfit, do not foster mutants above the threshold, and we do not need to consider them any further.

During the Lotka-Volterra phase, once the $\xi^\mu_t(z)$, $z\sim x^1$ have surpassed $\bar{\xi}\mu$, they start to grow. However, since the duration of the Lotka-Volterra phase can be bounded uniformly as before, this only results in mutant populations of order $\mu^{1}$, which fits the initial conditions for the next invasion step.
\end{proof}

%%%%%%%%%%%%%%%%%%%%%%%%%%%%%%%%%%%%%%%%%%%%%%%%%%%%%%%%%%%%%%%%%%%%%%%%%%%%%%%%%%%%%%%%

\subsection{The intermediate cases}

For now, we stick with the assumption of constant competition. In the case of $\ell\geq n$, arbitrarily large steps can be taken. In particular, arbitrarily large valleys in the fitness landscape (defined by $r$) can be crossed. A (strict) global fitness maximum is reached eventually and is the only stable point. If $\ell=1$, the limiting walk always jumps to the fittest nearest neighbour and (strict) local fitness maxima are stable points. In both cases, the microscopic types do not have to be tracked to characterise the jump process. The next step is determined only by the previous and possibly the initial resident type.

The cases $2\leq\ell\leq n-1$ interpolate between the two extreme scenarios. To study accessibility of different types, we again need to keep track of the microscopic populations. To this extent, we define some new quantities.
\begin{definition}
The \textit{first appearance time} of a type $y$ (on the $\ln 1/\mu$-time scale) is denoted by
\begin{align}
\tau^\mu_y:=\inf\{s\geq 0:\xi^\mu_{s\ln\frac{1}{\mu}}(y)>0\}.
\end{align}

The $\mu$-power the population size of type $y$ would have at time $t\ln1/\mu$ due to its own growth rate (neglecting mutation from neighbours after $\tau^\mu_y$) is
\begin{align}
\lambda_t(y):=\1_{t\geq\tau^\mu_y}
     \Big(\underbrace{\ell\land|y-x^0|}_\text{initial size}-\sum_{i=0}^\infty \underbrace{f_{y,x^i}(t\land T^\mu_{\eta,i+1}-\tau^\mu_y\lor T^\mu_{\eta,i})_+}_{\substack{\text{growth between}\\ i^\text{th}\text{ and }(i+1)^\text{st}\text{ invasion}}}\Big)+\1_{t<\tau^\mu_y}\infty,
\end{align}
where $x^i$ and $T^\mu_{\eta,i}$ are just as before.

 All types under the \textit{mutational influence} of type $y$ are denoted by
 \begin{align}
 \Lambda_t(y):=\{z\in\Hn: |z-y|+\lambda_t(y)\leq\ell\}
 \end{align}
 and $\Lambda_t:=\bigcup_{y\in\Hn}\Lambda_t(y)$.
\end{definition}

Since we are assuming constant competition, the population sizes of the different types are approximated by
\begin{align}
\xi^\mu_{t\ln\frac{1}{\mu}}(y)\approx\1_{y\in \Lambda_t}\mu^{\min_{z\in \Lambda_t}[|y-z|+\lambda_t(z)]},
\end{align}
where we drop multiplicative constants and all terms involving $\eta$. Figure 3 visualises the interplay of $\lambda_t(y)$, $\xi^\mu_{t\ln1/\mu}(y)$, and the sets $\Lambda_t(y)$ for an easy example.

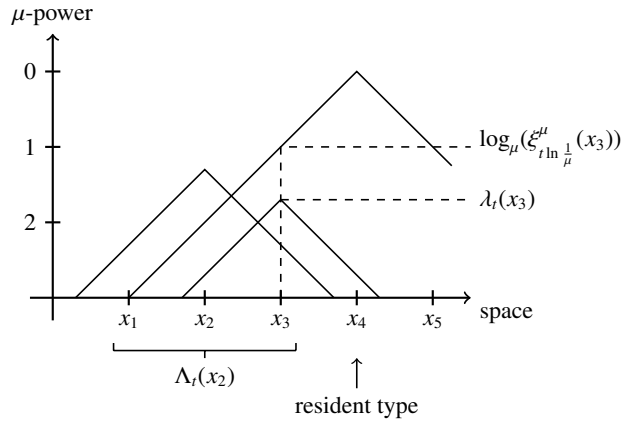
\begin{figure}[h]
\centering
 \scriptsize
 \begin{tikzpicture}
 \draw[thick,->](-0.3,0)--(5.5,0) node[anchor=north west]{space};
 \draw[thick,->](0,-0.3)--(0,3.5) node[anchor=south]{$\mu$-power};
 \draw[thick](1,0.1)--(1,-0.1) node[anchor=north]{$x_1$};
 \draw[thick](2,0.1)--(2,-0.1) node[anchor=north]{$x_2$};
 \draw[thick](3,0.1)--(3,-0.1) node[anchor=north]{$x_3$};
 \draw[thick](4,0.1)--(4,-0.1) node[anchor=north]{$x_4$};
 \draw[thick](5,0.1)--(5,-0.1) node[anchor=north]{$x_5$};
 \draw[thick](0.1,1)--(-0.1,1) node[anchor=east]{$2$};
 \draw[thick](0.1,2)--(-0.1,2) node[anchor=east]{$1$};
 \draw[thick](0.1,3)--(-0.1,3) node[anchor=east]{$0$};
 \draw[semithick](0.3,0)--(2,1.7)--(3.7,0);
 \draw[semithick](1.7,0)--(3,1.3)--(4.3,0);
 \draw[semithick]((1,0)--(4,3)--(5.25,1.75);
 \draw[semithick,dashed](3,0)--(3,2)--(5.5,2)node[anchor=west]{$\log_\mu(\xi^\mu_{t\ln\frac{1}{\mu}}(x_3))$};
 \draw[semithick,dashed](3,1.3)--(5.5,1.3)node[anchor=west]{$\lambda_t(x_3)$};
 \draw[semithick](0.8,-0.7)--(3.2,-0.7) (0.8,-0.7)--(0.8,-0.6) (3.2,-0.7)--(3.2,-0.6) ((2,-0.7)--(2,-0.8)node[anchor=north]{$\Lambda_t(x_2)$};
 \draw[semithick,<-](4,-0.8)--(4,-1.2)node[anchor=north]{resident type};
 %\draw[] (1,3) node[anchor=west]{$\ell=3$};
 \end{tikzpicture}
 \caption{Example for the case $\ell=3$. The mutational influence of $x_2$ reaches $x_1$ and $x_3$. The population size of $x_3$ is not determined by its own growth rate but by mutants from the resident type $x_4$.}
 \end{figure}\pagebreak
 
 It is not easy to make general statements about the evolution of this intermediate model. However, we state some first results on the accessibility of types.
\begin{definition}
A type $y\in\Hn$ is called \textit{accessible} if $y\in\Lambda_\infty:=\bigcup_{t\geq0}\Lambda_t$.
\end{definition}
\begin{remark}
This is equivalent to $\tau^\mu_y<\infty$.
\end{remark}

Since resident types can only produce mutants in a radius of $\ell$, in order to be accessible, a type has to be reached on a path with types of increasing fitness and at most distance $\ell$. Figure 4 gives an example for such a path.
\begin{lemma}\label{accessibility}
A necessary condition for a type $y$ to be accessible is the existence of a path \linebreak $(y_0=x^0,y_1,...,y_m=y)$ and indices $i_0=0<i_1<...<i_k=m$, such that
\begin{align}
\forall\ 1\leq j\leq k: &\ |i_j-i_{j-1}|\leq \ell,\\
\forall\ 1\leq j< k: &\ f_{y_{i_j},y_{i_{j-1}}}>0,\\
\forall\ i_{j-1}<i<i_j: &\ f_{y_{i_{j-1}},y_i}>0.
\end{align}
\end{lemma}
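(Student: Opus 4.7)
The plan is to argue by induction on the index $N$ defined by $T^\mu_{\eta,N}\le\tau^\mu_y<T^\mu_{\eta,N+1}$, i.e.\ the number of invasions completed before $y$ first appears. The structural input is the recursive nature of $\Lambda_t$: $y\in\Lambda_{\tau^\mu_y}$ forces the existence of an already-active $z$ with $|z-y|+\lambda_{\tau^\mu_y}(z)\le\ell$, and $z$ itself is either a resident $x^j$ (with $\lambda(z)=0$) or was activated earlier via another active type. Iterating this chain back to $x^0$ and concatenating the corresponding hypercube edges yields a walk from $x^0$ to $y$, from which a simple path is extracted by cycle removal.

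For the base case $N=0$, the unrolling terminates at $x^0$ and gives a path of length $m\le\ell$; I take $k=1$, $i_0=0$, $i_1=m$, so that the first two conditions of the lemma become vacuous. Only the valley property $f_{x^0,y_i}>0$ on intermediate vertices remains, which I establish by contradiction. Were every geodesic from $x^0$ to $y$ forced through some $w$ with $r(w)\ge r(x^0)$, such a $w$ would lie strictly closer to $x^0$ than $y$, enter $\Lambda_t$ no later than $y$, and by the lower bound of Theorem \ref{Thmexp} adapted to the cut-off dynamics (\ref{DE'}) grow at non-negative rate to exceed $\eta$ strictly before $\tau^\mu_y$. This would force $T^\mu_{\eta,1}<\tau^\mu_y$, contradicting $N=0$; hence a geodesic all of whose internal vertices have fitness strictly below $r(x^0)$ exists.

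For the inductive step I would strengthen the hypothesis so that, whenever it is applied to an accessible type that is itself a resident $x^j$, the produced path can be chosen with penultimate checkpoint equal to $x^{j-1}$. Applied to $x^N$, which was accessible during the residency of $x^{N-1}$, this delivers a path from $x^0$ to $x^N$ whose last two checkpoints are $x^{N-1}$ and $x^N$; since $f_{x^N,x^{N-1}}>0$ by the invasion criterion (\ref{unqmin}), $x^N$ is a legitimate middle checkpoint. One then appends a valley segment from $x^N$ to $y$ by replaying the base-case argument verbatim with $x^N$ in place of $x^0$, using that the assumption $T^\mu_{\eta,N}\le\tau^\mu_y<T^\mu_{\eta,N+1}$ provides the exact same contradiction if a fitter intermediate vertex were forced along every geodesic.

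The main obstacle is making the valley-selection argument rigorous, because the activation of $y$ need not come directly from the current resident but may route through a non-resident $z$ whose $\lambda$-value encodes a nontrivial history of growth and mutational handoffs across several previous residencies. I would handle this by fully unrolling $\Lambda_t$ down to residents, verifying inductively (using $\lambda(x^j)=0$ together with $|z_i-z_{i+1}|+\lambda(z_{i+1})\le\ell$ at each link) that the combined hypercube walk has length at most $\ell$ beyond the last resident in the chain, and then applying the fitness comparison edge-by-edge. A secondary technical point is establishing the strengthened form of the inductive hypothesis that places prior residents as checkpoints; this will likely require an auxiliary monotonicity statement about the order of activation times in terms of the Hamming distance to the current resident.
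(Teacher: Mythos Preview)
Your overall strategy---forward induction on the invasion count $N$ with the residents $x^0,\dots,x^N$ as checkpoints---is structurally different from the paper's approach and has a genuine gap. The paper does not induct on $N$ at all: it iterates \emph{backwards} from $y$, and at each step picks, among all $z$ with $y\in\Lambda_{\tau^\mu_y}(z)$, one of \emph{maximal} individual fitness $r(z)$. That maximality choice is the whole mechanism: it simultaneously forces (i) every intermediate vertex on a shortest $z$-to-$y$ path to be strictly less fit than $z$ (the valley condition), and (ii) the next backward step $z'$ to satisfy $r(z')<r(z)$ (the increasing-checkpoint condition), since otherwise $z$ would already lie in $\Lambda_{\tau^\mu_z}(z')$ for a fitter $z'$ and would not have been the maximal choice. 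No reference to the actual resident sequence $(x^j)$ is needed.

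Your proposal misses this device, and the resulting gaps show up already in the base case. First, $N=0$ does \emph{not} force the unrolling to be a single step of length $m\le\ell$: during the residency of $x^0$ a fit non-resident $z\in B(x^0,\ell)$ can have $\lambda_t(z)$ drop below $|z-x^0|$ and thereby activate a $y$ with $|y-x^0|>\ell$ before $T^\mu_{\eta,1}$, so $k=1$ is not available. Second, your contradiction for the valley property (a fitter $w$ on every geodesic would exceed $\eta$ before $\tau^\mu_y$) collapses when $\tau^\mu_y=0$, i.e.\ whenever $y\in\Lambda_0(x^0)$; nothing can reach $\eta$ before time $0$, and in fact geodesics from $x^0$ to such $y$ may well pass through types fitter than $x^0$---those types must then be made \emph{checkpoints}, not avoided. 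Third, the strengthened inductive hypothesis you propose (that for a resident $x^j$ one may take $x^{j-1}$ as penultimate checkpoint) is not obviously true and is not what the lemma asserts: the checkpoints produced by the paper's argument are generically non-residents. The fix is to drop the induction on $N$ and instead perform the backward maximal-$r$ selection as the paper does.
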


\begin{figure}[h]
\centering
 \scriptsize
 \begin{tikzpicture}
 \draw[thick,->] (-1.5,0)--(9.5,0) node[anchor=north west]{$\Hn$};
 \draw[thick,->](-1,-0.5)--(-1,2.5) node[anchor=east]{$r(x)$};
 \draw[thick] (0,0.1)--(0,-0.1) node[anchor=north]{$y_0=x^0$};
     \fill[black] (0,0.3) circle (0.05) node[anchor=south]{$i_0$};
     \draw[dashed] (0,0.3)--(1,0.3);
 \draw[thick] (1,0.1)--(1,-0.1) node[anchor=north]{$y_1$};
     \fill[black] (1,0.5) circle (0.05) node[anchor=south]{$i_1$};
     \draw[dashed] (1,0.5)--(4,0.5);
 \draw[thick] (2,0.1)--(2,-0.1) node[anchor=north]{$y_2$};
     \fill[black] (2,0.2) circle (0.05);
 \draw[thick] (3,0.1)--(3,-0.1) node[anchor=north]{$y_3$};
     \fill[black] (3,0.4) circle (0.05);
 \draw[thick] (4,0.1)--(4,-0.1) node[anchor=north]{$y_4$};
     \fill[black] (4,1) circle (0.05) node[anchor=south]{$i_2$};
     \draw[dashed] (4,1)--(6,1);
 \draw[thick] (5,0.1)--(5,-0.1) node[anchor=north]{$y_5$};
     \fill[black] (5,0.6) circle (0.05);
 \draw[thick] (6,0.1)--(6,-0.1) node[anchor=north]{$y_6$};
     \fill[black] (6,1.4) circle (0.05) node[anchor=south]{$i_3$};
     \draw[dashed] (6,1.4)--(7,1.4);
 \draw[thick] (7,0.1)--(7,-0.1) node[anchor=north]{$y_7$};
     \fill[black] (7,1.6) circle (0.05) node[anchor=south]{$i_4$};
     \draw[dashed] (7,1.6)--(9,1.6);
 \draw[thick] (8,0.1)--(8,-0.1) node[anchor=north]{$y_8$};
     \fill[black] (8,0.3) circle (0.05);
 \draw[thick] (9,0.1)--(9,-0.1) node[anchor=north]{$y_9=y$};
     \fill[black] (9,1.3) circle (0.05) node[anchor=south west]{$i_5$};
 \draw[thick](0.9,1.5)--(0.9,1.6)--(4.1,1.6)--(4.1,1.5) (2.5,1.6)--(2.5,1.7) node[anchor=south]{$\leq\ell=3$};
 \end{tikzpicture}
 \caption{A possible path to access $y$, for $\ell=3$.}
 \end{figure}
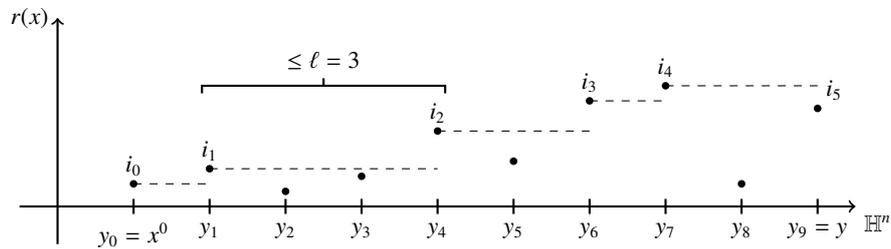

\begin{proof}
Assume that $y\neq x^0$. If $y\in\Lambda_0(x^0)$, this implies $|y-x^0|\leq \ell$. Hence we can choose any shortest path from $x^0$ to $y$  and pick the indices $i_j$ such that the conditions are satisfied.

If $y$ is accessible but $y\notin\Lambda_0(x^0)$, then $\tau^\mu_y>0$. There is at least one $z\neq y$ such that $y\in\Lambda_{\tau^\mu_y}(z)$. We choose such a $z$ for which the rate $r(z)$ is maximal. Consequently, $\tau^\mu_z<\tau^\mu_y$ and $\xi^\mu_{\tau^\mu_y\ln\frac{1}{\mu}}(z)\approx\mu^{\lambda_{\tau^\mu_y}(z)}$ (else $z$ would just grow due to mutants from a fitter type, which would imply that $z$ was not chosen such that the rate $r(z)$ is maximal). Any direct path from $z$ to $y$ now only goes through types that are unfit in comparison to $z$. We set $y_{i_k}:=z$.

We can now iterate this procedure with $z$ replacing $y$. In addition, we know that, for the $z'\neq z$ such that $z\in\Lambda_{\tau^\mu_z}(z')$ and $r(z')$ is maximised, $r(z)>r(z')$ (else, we would obtain $\Lambda_t(z)\subset\Lambda_t(z')$, for all $t\geq 0$, and $z$ would not have been chosen maximising $r(z)$). We set $y_{i_{k-1}}:=z'$ and continue until we reach $x^0$.
\end{proof}

\begin{remark}
The condition in Lemma \ref{accessibility} is not sufficient. Even if such a path exists, there might be a type $z$ that is reached before $y_{i_j}$ such that $r(z)>r(y_{i_j})$. In this case the population of $y_{i_j}$ is not fit to grow and might never reach the necessary size to induce mutants of type $y_{i_{j+1}}$.
\end{remark}

As a Corollary, we can consider the non-crossing of fitness valleys. Figure 5 gives the example of a non-accessible type, surrounded by a fitness valley.

\begin{corollary}\label{non-crossing}
If a type $y$ is surrounded by a fitness valley of width at least $\ell+1$, i.e.\ for all paths \linebreak $(y_0=x_0,y_1,...,y_m=y)$ there exists an $i\leq m-(\ell+1)$ such that $f_{y_i,y_j}>0,\forall\ i<j<m$, it is non-accessible.
\end{corollary}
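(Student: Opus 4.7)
The plan is to deduce the corollary directly from the necessary condition for accessibility established in Lemma \ref{accessibility}, by contradiction. Suppose $y$ is accessible. Then there exists a path $(y_0=x^0, y_1, \dots, y_m = y)$ and indices $0 = i_0 < i_1 < \dots < i_k = m$ such that consecutive milestones $y_{i_{j-1}}, y_{i_j}$ are within distance $\ell$, the fitness strictly increases along the milestones for $j<k$, i.e.\ $f_{y_{i_j}, y_{i_{j-1}}} > 0$, and all intermediate points $y_i$ with $i_{j-1} < i < i_j$ satisfy $f_{y_{i_{j-1}}, y_i} > 0$ (they are in a valley below their preceding milestone). Apply the hypothesis of the corollary to this particular path: there exists an index $i_\star \leq m - (\ell+1)$ such that $f_{y_{i_\star}, y_j} > 0$ for every $i_\star < j < m$.

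Next I would locate where $i_\star$ sits in the sequence of milestones. Let $j^\star$ be the largest index with $i_{j^\star} \leq i_\star$. Then $i_{j^\star} \leq i_\star < i_{j^\star + 1}$, and the distance constraint $i_{j^\star + 1} - i_{j^\star} \leq \ell$ forces $i_{j^\star + 1} \leq i_\star + \ell \leq m - 1$, so in particular $j^\star + 1 < k$. Hence the second condition of Lemma \ref{accessibility} applies and gives $f_{y_{i_{j^\star+1}}, y_{i_{j^\star}}} > 0$, i.e.\ $r(y_{i_{j^\star+1}}) > r(y_{i_{j^\star}})$.

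Now compare with the valley hypothesis. Since $i_\star < i_{j^\star+1} < m$, the hypothesis yields $f_{y_{i_\star}, y_{i_{j^\star+1}}} > 0$, i.e.\ $r(y_{i_\star}) > r(y_{i_{j^\star+1}})$. If $i_{j^\star} = i_\star$ this already contradicts the previous inequality. If $i_{j^\star} < i_\star$, then $i_\star$ lies strictly between the consecutive milestones $i_{j^\star}, i_{j^\star+1}$, so the third condition of Lemma \ref{accessibility} gives $f_{y_{i_{j^\star}}, y_{i_\star}} > 0$, i.e.\ $r(y_{i_{j^\star}}) > r(y_{i_\star})$. Chaining these inequalities produces
\begin{equation*}
r(y_{i_{j^\star}}) > r(y_{i_\star}) > r(y_{i_{j^\star+1}}) > r(y_{i_{j^\star}}),
\end{equation*}
a contradiction. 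Hence no such path with admissible indices can exist, so by Lemma \ref{accessibility} the type $y$ is not accessible.

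The only delicate step is the bookkeeping in the fourth-to-last paragraph: one must verify that the choice of $j^\star$ indeed forces $i_{j^\star+1} < m$ (so that the strict fitness inequality between consecutive milestones is available) rather than $i_{j^\star+1} = m$, for which no fitness inequality is guaranteed by the lemma. This is precisely where the width $\ell+1$ (as opposed to $\ell$) is used, via the bound $i_\star \leq m - (\ell+1)$ combined with the step-size bound $\ell$, ensuring $i_{j^\star+1} \leq m-1$; everything else is a short chain of inequalities.
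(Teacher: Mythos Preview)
Your proof is correct and follows the same approach as the paper, which simply records in one sentence that the claim follows directly from Lemma~\ref{accessibility} because the required path cannot exist. You have carefully spelled out the contradiction that the paper leaves implicit, including the key observation that the bound $i_\star\leq m-(\ell+1)$ together with the step-size constraint forces $i_{j^\star+1}<m$.
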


\begin{figure}[h]
\centering
 \scriptsize
 \begin{tikzpicture}
 \draw[thick,->] (-1.5,0)--(9.5,0) node[anchor=north west]{$\Hn$};
 \draw[thick,->](-1,-0.5)--(-1,2.5) node[anchor=east]{$r(x)$};
 \draw[thick] (0,0.1)--(0,-0.1) node[anchor=north]{$y_{i_1}$};
     \fill[black] (0,1.7) circle (0.05) node[anchor=south]{$i_1$};
     \draw[dashed] (0,1.7)--(4,1.7);
 \draw[thick] (1,0.1)--(1,-0.1) node[anchor=north]{$ $};
     \fill[black] (1,0.9) circle (0.05);
 \draw[thick] (2,0.1)--(2,-0.1) node[anchor=north]{$ $};
     \fill[black] (2,0.3) circle (0.05);
 \draw[thick] (3,0.1)--(3,-0.1) node[anchor=north]{$ $};
     \fill[black] (3,1.2) circle (0.05);
 \draw[thick] (4,0.1)--(4,-0.1) node[anchor=north]{$y$};
     \fill[black] (4,1.5) circle (0.05);
 \draw[thick] (5,0.1)--(5,-0.1) node[anchor=north]{$ $};
     \fill[black] (5,0.6) circle (0.05);
 \draw[thick] (6,0.1)--(6,-0.1) node[anchor=north]{$ $};
     \fill[black] (6,1) circle (0.05);
 \draw[thick] (7,0.1)--(7,-0.1) node[anchor=north]{$ $};
     \fill[black] (7,1.2) circle (0.05);
 \draw[thick] (8,0.1)--(8,-0.1) node[anchor=north]{$ $};
     \fill[black] (8,0.4) circle (0.05);
 \draw[thick] (9,0.1)--(9,-0.1) node[anchor=north]{$y_{i_2}$};
     \fill[black] (9,1.4) circle (0.05) node[anchor=south]{$i_2$};
     \draw[dashed] (9,1.4)--(4,1.4);
 \draw[thick](-0.1,2.1)--(-0.1,2.2)--(4.1,2.2)--(4.1,2.1) (2,2.2)--(2,2.3) node[anchor=south]{$>\ell=3$};
 \end{tikzpicture}
 \caption{Due to the high fitness of $y_{i_1}$ and $y_{i_2}$, $y$ is not accessible for $\ell=3$.}
 \end{figure}
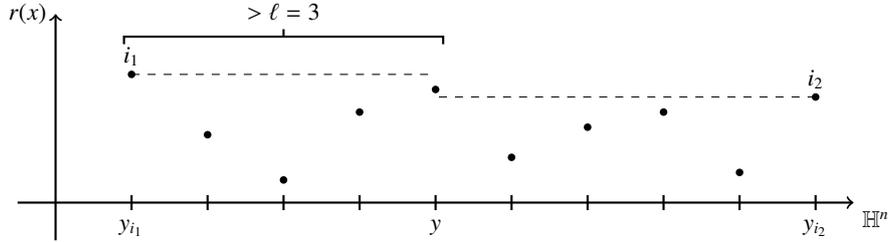

\begin{proof}
The claim follows directly from Lemma \ref{accessibility} since in this case the necessary path cannot exist.
\end{proof}
 
As a result, at least in the matter of crossing fitness valleys, the intermediate cases interpolate between the extreme cases.

However, as in the case of $\ell=n$, it is still possible to take arbitrarily large steps in the macroscopic process or the limiting jump process, respectively. If there was a series of types with distance smaller than $\ell+1$ and fast increasing rate $r$, then each population could be overtaken by its faster growing mutants before it reaches the macroscopic level of $\mu^0$.

Overall, the microscopic types play an important role in defining the limiting process.

\bibliographystyle{abbrv}

\end{document}